\documentclass[a4paper,USenglish,cleveref, autoref, thm-restate]{lipics-v2021}
\graphicspath{ {./figs/} }
\usepackage[toc,page]{appendix}

\bibliographystyle{plainurl}

\usepackage{tabu}
\usepackage{mathtools}
\usepackage{microtype}
\usepackage{color}
\usepackage[usenames,dvipsnames,svgnames,table]{xcolor}

\usepackage{mathrsfs}
\usepackage{amssymb,latexsym,amsmath}
\usepackage{amsthm, mathrsfs}
\usepackage{hyperref}
\usepackage{caption}
\usepackage{subcaption}

\usepackage{epsfig}
\usepackage{array,multirow}

\nolinenumbers
\newtheorem{problem}{Problem}

\newcommand{\ignore}[1]{} 
\newcommand{\para}[1]{\vspace{1mm}\noindent\textbf{#1}}

\newcommand{\bbig}{{\text{Big}}}

\newcommand{\sD}{\mathscr{D}}
\newcommand{\rD}{\mathcal{D}}

\newcommand{\C}{\mathscr{C}}
\newcommand{\sC}{\mathcal{C}}

\newcommand{\T}{\mathcal{T}}

\newcommand{\R}{\mathbb{R}}

\newcommand{\RR}{\mathcal{R}}

\newcommand{\cell}{\mathit{\Delta}}
\newcommand{\eps}{\mathcal{E}}
\newcommand{\sF}{\mathcal{F}} % set F
 % set F
\newcommand{\sS}{\mathcal{S}} % set S

\DeclarePairedDelimiter\abs{\lvert}{\rvert}%

\def\T{\mathcal{T}}

\title{On Range Summary Queries}
\author{Peyman Afshani}{Aarhus University, Denmark }{peyman@cs.au.dk}{}{}
\author{Pingan Cheng}{Aarhus University, Denmark }{pingancheng.au.dk}{}{}
\author{Aniket Basu Roy}{Aarhus University, Denmark }{aniket.au.dk}{}{}
\author{Zhewei Wei}{Renmin University of China, China }{zhewei@ruc.edu.cn}{}{}
\authorrunning{P. Afshani, P. Cheng, A. B. Roy, Z. Wei} 

\Copyright{Peyman Afshani, Pingan Cheng, Aniket Basu Roy, Zhewei Wei} 
%TODO mandatory, please use full first names. LIPIcs license is "CC-BY";  http://creativecommons.org/licenses/by/3.0/

\ccsdesc[100]{Theory of Computation $\rightarrow$ Randomness, geometry and discrete structures $\rightarrow$Computational geometry} %TODO mandatory: Please choose ACM 2012 classifications from https://dl.acm.org/ccs/ccs_flat.cfm 

\keywords{Computational Geometry, Range Searching, Data Structures and Algorithms} 
%TODO mandatory; please add comma-separated list of keywords

\category{} %optional, e.g. invited paper

\begin{document}
\maketitle
\begin{abstract}
We study the query version of the approximate heavy hitter and quantile problems.
In the former problem, the input is a parameter $\varepsilon$ and a set $P$ of $n$ points in $\R^d$  
where each point is assigned a color from a set $C$, and the goal is to
build a structure such that given any geometric range $\gamma$,
we can efficiently find a list of approximate heavy hitters in $\gamma\cap P$,
i.e., colors that appear at least $\varepsilon |\gamma \cap P|$ times in $\gamma \cap P$,
as well as their frequencies with an additive error of $\varepsilon |\gamma \cap P|$.
In the latter problem, each point is assigned a weight from a totally ordered universe
and the query must output a sequence $S$ of $1+1/\varepsilon$ weights
such that the $i$-th weight in $S$ has approximate rank $i\varepsilon|\gamma\cap P|$, meaning,
rank $i\varepsilon|\gamma\cap P|$ up to an additive error of $\varepsilon|\gamma\cap P|$.
Previously, optimal results were only known in 1D~\cite{KZ.summary} but
a few sub-optimal methods were available in higher dimensions~\cite{AW.range.sampling,ach+12}. 

We study the problems for two important classes of geometric ranges: 3D halfspace and 3D dominance queries.
It is known that many other important queries can be reduced to these two, e.g., 
1D interval stabbing or interval containment, 
2D three-sided queries, 2D circular as well as 2D $k$-nearest neighbors queries.
We consider the real RAM model of computation where integer registers of size $w$ bits, 
$w = \Theta(\log n)$, are also available. 
For dominance queries, we show optimal solutions for both heavy hitter and
quantile problems: using linear space, we can answer both queries in time $O(\log n + 1/\varepsilon)$.
Note that  as the output size is $\frac{1}{\varepsilon}$, after investing the initial $O(\log n)$ searching time,
our structure takes on average $O(1)$ time to find a heavy hitter or a quantile!
For more general halfspace heavy hitter queries, the same optimal query time can be achieved by increasing the
space by an extra $\log_w\frac{1}{\varepsilon}$ (resp. $\log\log_w\frac{1}{\varepsilon}$) factor in 3D (resp. 2D).
By spending extra $\log^{O(1)}\frac{1}{\varepsilon}$ factors in both time and space,
we can also support quantile queries. 

We remark that it is hopeless to achieve a similar query bound for dimensions 4
or higher unless significant advances are made in the data structure side of
theory of geometric approximations.

\end{abstract}

\section{Introduction}
\label{sec:sec1-intro}

Range searching is an old and fundamental area of computational geometry that
deals with storing an input set $P \subset \R^d$ of  $n$ (potentially weighted)
points in a data structure such that 
given a query range $\gamma$, one can answer certain questions about the subset of points inside $\gamma$.
Range searching is often introduced within a general framework that allows a very diverse set of questions to be answered.
For instance, if the points in $P$ have been assigned integer or real weights, then one can count the points in $\gamma$ (range counting),
sum the total weights of the points in $\gamma$ (weighted range counting), or find the maximum or minimum weight in $\gamma$
(range max or min queries). 

However, there are some important questions that cannot be answered 
within this general framework. 
Consider the following motivating example: our data includes the locations of houses in a
city as well as their estimated values and given a query range $\gamma$, we are interested in the
distribution of the house values within $\gamma$, for example, we might be interested to see if there's a
large inequality in house values or not. 
Through classical results, we can find the most expensive and the least
expensive houses (max and min queries), and the average value of the houses (by
dividing the weighted sum of the values by the total number of houses in
$\gamma$).  Unfortunately, this information does not tell us much about 
the distribution of the house values within $\gamma$, e.g., one cannot
compute the Gini index which is a widely-used measure of inequality
of the distribution.
Ideally, to know the exact distribution of values within $\gamma$, one must have all the
values inside $\gamma$, which in the literature is known as a 
\textit{range reporting} query which reports all the points inside the query range $\gamma$.
However, this could be an expensive operation, e.g., it can take $\Omega(n)$ time if the query 
contains a constant fraction of the input points. 
A reasonable alternative is to ask for a ``summary'' query, one that can  summarize the distribution.
In fact, the streaming literature is rich with many important notions of summary that are used to
concisely represent a large stream of data approximately but with high precision. 
Computing $\varepsilon$-quantiles can be considered as one of the most important concepts for a succinct approximation
of a distribution and it also generalizes many of the familiar concepts,
e.g., $0$-quantile, $0.5$-quantile, and $1$-quantile that are also known
as the minimum, the median, and the maximum of $S$.
We now give a formal definition below. 

\para{Quantile summaries.}
Given a sequence of values $w_1 \le  \cdots \le w_k$, a $\delta$-quantile, for $0 \le \delta \le 1$, is the value with 
rank $\lfloor \delta k\rfloor$.
By convention, $0$-quantile and $1$-quantiles are set to be the minimum and the maximum, i.e., $w_1$ and $w_k$ respectively.
An $\varepsilon$-quantile summary is then defined as the list of $1+\varepsilon^{-1}$ values where the
$i$-th value is the $ i\varepsilon$-quantile, for $i=0, \cdots, \varepsilon^{-1}$.
As we will review shortly, computing exact quantiles is often too expensive so instead we focus on 
approximations. 
We define an approximate $\varepsilon$-quantile summary (AQS) to be a sequence of 
$1+\varepsilon^{-1}$ values where the
$i$-th value is between the $(i-1)$-quantile and the $(i+1)$-quantile\footnote{
For $a\le 0$ (resp. $a\ge k$), we define the $a$-quantile to be the $0$-quantile (resp. $k$-quantile).}, for $i=0, \cdots, \varepsilon^{-1}$.
An approximate quantile summary with a reasonably small choice of $\varepsilon$ can give a
very good approximation of the distribution, e.g., see Figure~\ref{fig:ex}. 
It also has the benefit that the query needs to output only
$O(\varepsilon^{-1})$ values, regardless of the number of points inside the query range. 

To obtain a relatively precise approximation of the distribution, $\varepsilon$ needs to be chosen 
sufficiently small, and thus we consider it an additional 
parameter (and thus not a constant). 
This is also similar to the literature on streaming where the dependency on $\varepsilon$ is important. 

\begin{figure}[h]
  \newcommand{\cwidth}{0.32}
  \newcommand{\bwidth}{0.9}
  \begin{minipage}{\cwidth\textwidth}
      \begin{subfigure}[b]{\bwidth\textwidth}
          \includegraphics[scale=0.24]{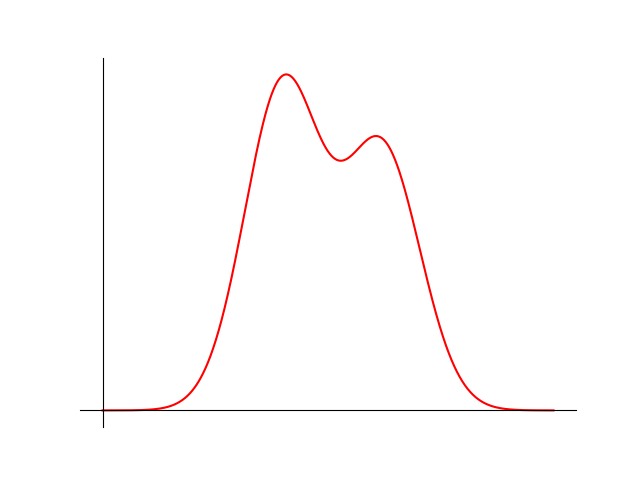}
          \caption{The distribution.}
      \end{subfigure}
  \end{minipage}
  \begin{minipage}{\cwidth\textwidth}
      \begin{subfigure}[b]{\bwidth\textwidth}
          \includegraphics[scale=0.24]{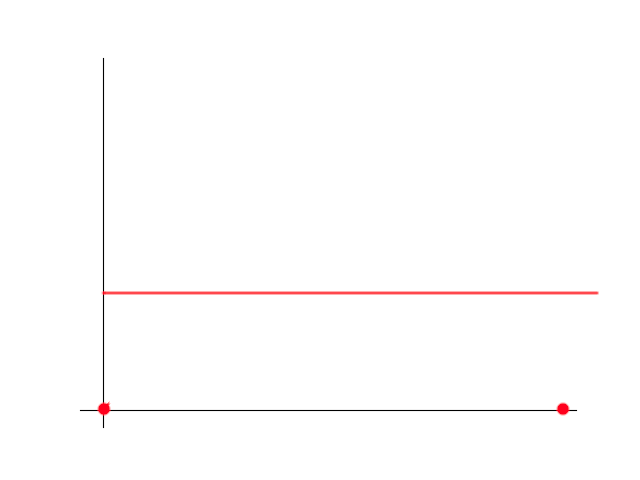}
          \caption{Classical results 
          (min., max. and average)}
      \end{subfigure}
  \end{minipage} 
  \begin{minipage}{\cwidth\textwidth}
      \begin{subfigure}[b]{\bwidth\textwidth}
          \includegraphics[scale=0.24]{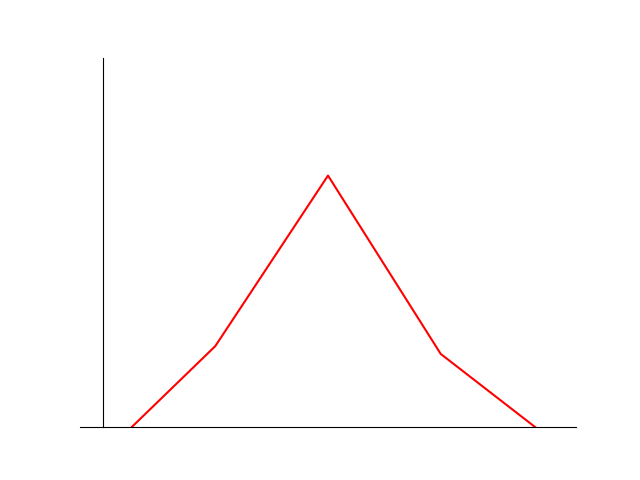}
          \caption{Approximate quintiles ($\varepsilon=0.2$).}
      \end{subfigure}
  \end{minipage} \\
  \begin{minipage}{\cwidth\textwidth}
      \begin{subfigure}[b]{\bwidth\textwidth}
          \includegraphics[scale=0.24]{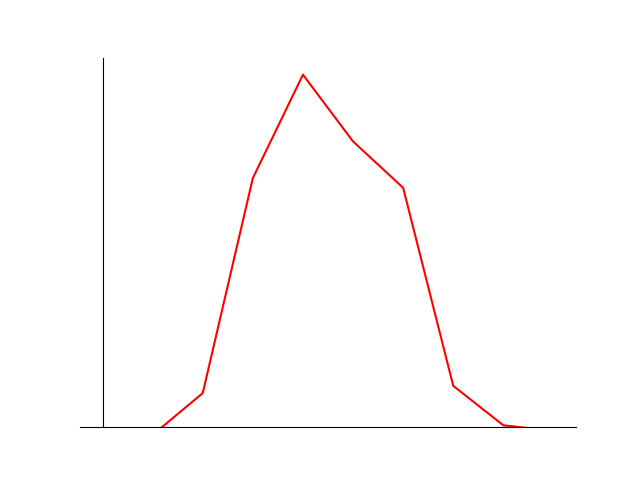}
          \caption{Approximate deciles ($\varepsilon=0.1$).}
      \end{subfigure}
  \end{minipage}
  \begin{minipage}{\cwidth\textwidth}
      \begin{subfigure}[b]{\bwidth\textwidth}
          \includegraphics[scale=0.24]{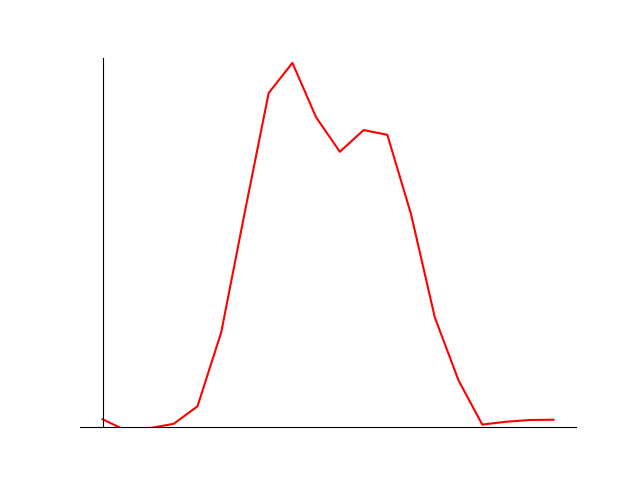}
          \caption{Approximate ventiles ($\varepsilon=0.05$).}
      \end{subfigure}
  \end{minipage}
  \begin{minipage}{\cwidth\textwidth}
      \begin{subfigure}[b]{\bwidth\textwidth}
          \includegraphics[scale=0.24]{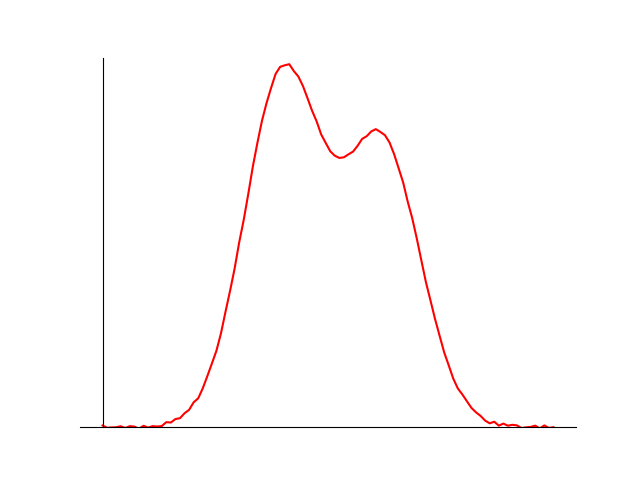}
          \caption{Approximate percentiles ($\varepsilon=0.01$).}
      \end{subfigure}
  \end{minipage}
  \caption{Comparing various approximations of a distribution.}
  \label{fig:ex}
\end{figure}

\subsection{Problem Definition, Previous Work, and Related Results}	
One of our main problems is the problem of answering approximate quantile summary (AQS) queries which is defined as follows. 

\begin{problem}[Approximate quantile summaries]
\label{pr:aqs}
Consider an input set $P$ of $n$ points in $\mathbb{R}^d$
where each point $p\in P$ is assigned a weight $w_p$ from a totally ordered universe.
Given a value $\varepsilon$, we are asked to build a structure such that given a query range
$\gamma$, it can return an AQS of $P\cap\gamma$ efficiently.
\end{problem}

It turns out that another type of ``range summary queries'' is extremely useful for building
data structures for AQS queries.

\para{Heavy hitter summaries.}
Consider a set $P$ of $k$ points where each point in $P$ is assigned a color from the set $[n]$. 
Let $f_i$ be the frequency of color $i$ in $P$, i.e., the number of times 
color $i$ appears among the points in $P$. 
A \textit{heavy hitter summary (HHS)} with parameter $\varepsilon$, is the list of all the colors $i$ with $f_i \ge \varepsilon k$ together
with the value $f_i$. 
As before, working with exact HHS will result in very inefficient data structures and thus once again we turn to approximations.
An \textit{approximate heavy hitter summary (AHHS)} with parameter $\varepsilon$ is a list, $L$, of colors
such that every color $i$ with $f_i \ge \varepsilon k$ is included in $L$ and furthermore, every color $i \in L$
is also accompanied with an approximation, $f'_i$, of its frequency such that 
 $f_i - \varepsilon k \le f'_i \le f_i + \varepsilon k$.

\begin{problem}[Approximate heavy hitters summaries]
\label{pr:ahhs}    
Consider an input set $P$ of $n$ points in $\mathbb{R}^d$ where 
each point in $P$ is assigned a color from the set $[n]$.
Given a parameter $\varepsilon$, we are asked to build
a structure such that  given a query $\gamma$, it can return 
an AHHS of the set $P \cap\gamma$. 
\end{problem}

Observe that in both problems, the output size of a query is $O(1/\varepsilon)$ in the worst-case.
Our main focus is to obtain data structures with the optimal worst-case query time of $O(\log n + \varepsilon^{-1})$. 
Note that it makes sense to define an \textit{output-sensitive} variant where the query time is 
$O(\log n +k)$ where $k$ is the output size. E.g., it could be the case for a AHHS query that the numbrer of heavy hitters
is much fewer than $\varepsilon^{-1}$. 
This makes less sense for AQS queries, since unless the distribution of weights inside the query range
$\gamma$ is almost constant, an AQS will have $\Omega(\varepsilon^{-1})$ distinct values. 
As our main focus is on AQS, we only consider AHHS data structures with the worst-case query time of $O(\log n + \varepsilon^{-1})$.

\para{A note about the notation.}
To reduce the clutter in the expressions of query time and space, we adopt the convention that 
$\log(\cdot)$ function is at least one, e.g., 
we define $\log_a b$ to be $\max\{1,\frac{\ln b}{\ln a}\}$ for any positive values $a,b$.

\subsubsection{Previous Results}
As discussed, classical range searching solutions focus on rather simple queries that can return
sum, weighted sum, minimum, maximum, or the full list of points contained in a given query range.
This is an extensively researched area with numerous results to cite and so we refer the reader 
to an excellent survey by Agarwal~\cite{Agarwal.survey16} that covers such classical results. 

However, classical range searching data structures cannot give 
detailed statistical information about the set of points contained inside the query region,
unless one opts to report the entire subset of points inside the query range, which could be very expensive if the set is large. 
Because of this, there have been a number of attempts to answer more informative queries. 
For example, ``range median'' queries have received quite a bit of attention~\cite{kms05,bgjs11,larsen:median}.
Note that the median is the same as $0.5$-quantile and 
thus these can be considered the first attempts at answering quantile queries.
However, optimal solution (linear space and logarithmic query time) to exact range median queries has only be found in 1D~\cite{bgjs11}.
For higher dimensions, to the best of our knowledge, 
the only known technique is to reduce the problem to several range counting instances~\cite{bgjs11,cz15mrs},
and it is a major open problem in the range searching field to find efficient data structures for exact range counting.
Due to this barrier, the approximate version of the problem~\cite{Jit.et.al.median} has been studied.

Data summary queries have also received some amount of attention, especially in the context of
geometric queries. 
Agarwal et al.~\cite{ach+12} showed that the heavy hitters summary (as well as a few other data summaries)
are ``mergeable'' and this gives a baseline solution for a lot of different queries in higher dimensions,
although a straightforward application of their techniques gives sub-optimal dependency on $\varepsilon$. 
In particular, for $d=2$ and for halfspace (or simplex) 
queries it yields a linear-space data structure with
$O(\frac{\sqrt{n}}{\varepsilon})$ query time. For $d=3$ the query time will be $O(n^{2/3}/\varepsilon)$. 
In general, in the naive implementation,
the query time will be $O(f(n)/\varepsilon)$ where $f(n)$ is the query time of the corresponding
``baseline'' range searching query (see Table~\ref{tab:results} for more information). 
A more efficient approach towards merging of summaries was taken by~\cite{doi:10.1137/16M1093604}
where they study the problem in a communication complexity setting, however, it seems possible to
adopt their approach to a data structure as well, in combination with standard application of 
partition trees; after building an optimal partition tree, for any node $v$ in the tree,
consider it as a player in the communication problem with 
the subset of points in the subtree of $v$  as its input. 
At the query time, after identifying $O(n^{2/3})$ subsets that cover the query range, 
the goal would be to merge all the summaries involved.
By plugging the results in~\cite{doi:10.1137/16M1093604}
this can result in a linear-space data structure with query time
of $\tilde{O}(n^{2/3} + n^{1/6}\varepsilon^{-3/2})$.

The issue of building optimal data structures for range summary queries was only tackled in
1D by Wei and Yi~\cite{KZ.summaryJ}. They built a data structure for answering a number of
summary queries, including heavy hitters queries, and showed it is possible to obtain an
optimal data structure with $O(n)$ space and $O(\log n + 1/\varepsilon)$ query time.
Beyond this, only sub-optimal solutions are available. 
Recently, there have been efforts to tackle ``range sampling queries'' where the goal is to extract
$k$ random samples from the set $|P \cap \gamma|$~\cite{AP.range.sampling, AW.range.sampling,hu2014independent}.
In fact, one of the main motivations to consider range sampling queries was to gain information about
the distribution of the point set inside the query~\cite{AP.range.sampling}. 
In particular, range sampling provides a general solution for obtaining a ``data summary'' and for example, 
it is possible to solve the heavy hitters query problem.
However, it has a number of issues, in particular,
it requires sampling at least $1/\varepsilon^2$ points from the set $|P \cap \gamma|$, 
and even then it will only provide a Monte Carlo type
approximation which means to boost the probabilistic guarantee, even more points need to be sampled. 
For example, to get a high probability guarantee, $\Omega(\varepsilon^{-2}{\log n})$ samples are required. 

\para{Type-2 Color Counting.}
These queries were introduced in 1995 by Gupta et al.~\cite{Coloured_reporting_counting_3sided} within
the area of ``colored range counting''. 
In this problem, given a set of colored points, we want to report the frequencies of all the colors
that appeared in a given query range.
This is a well-studied problem, but mostly in the orthogonal setting, see e.g.,~\cite{chan.colored.20}.

AHHS queries can be viewed as approximate type-2 color counting queries but with an additive error.
Consider a query with $k$ points. If we allow error $\varepsilon k$ in type-2 counting, then we can ignore colors with
frequencies fewer than $\varepsilon k$ but otherwise we have to report frequencies with error $\varepsilon k$,
which is equivalent to answering an AHHS query.

\para{Other Related Problems.}
Karpinski and Nekrich~\cite{KarpinskiN08} studied the problem of finding the most frequent colors
in a given (orthogonal) query range.
This problem has received further attention in the community~\cite{MR3126351,MR4264249,MR3000968}.
But the problem changes fundamentally when we introduce approximations.

% we need to mention non standard operation
\para{The Model of Computation.}
Our model of computation is the real RAM where we have access to real registers
that can perform the standard operations on real numbers in constant time, but
we also have access to $w = \Theta(\log n)$ bits long integer registers that can
perform the standard operations on integers and extra nonstandard operations
which can be implemented by table lookups
since we only need binary operations on fewer than $\frac{1}{2} \log n$ bits.
Note that our data structure works when the input coordinates are real numbers, however,
at some point, we will make use of the capabilities of our model of computation to 
manipulate the bits inside its integer registers. 

\subsection{Our Contributions}
Our main results and a comparison with the previously known results are shown in Table~\ref{tab:results}.

Overall, we obtain a series of new results for 3D AHHS and AQS query problems which
improve the current results via mergeability and independent range sampling~\cite{ach+12, AW.range.sampling} 
by up to a huge multiplicative $n^{\Omega(1)}$ factor in query time with almost the same linear-space usage.
This improvement is quite nontrivial and requires an innovative combination of known
techniques like the shallow cutting lemma, 
the partition theorem, $\varepsilon$-approximations,
as well as some new ideas like bit-packing for nonorthogonal queries,
solving AQS query problem using AHHS instances, 
rank-preserving geometric sampling and so on.

For dominance queries, we obtain the first optimal results.
When $\varepsilon^{-1}=O(\log n)$ our halfspace AHHS results are also optimal.
Note that for small values of $\varepsilon$, our halfspace AHHS results yield
significant improvements in the query time over the previous approaches. 
Along the way, we also show improved results of the above problems for 2D
as well as a slightly improved exact type-2 simplex color counting result.

{
\begin{table*}[h]
\centering
\setlength\extrarowheight{2.5pt}
\caption{Our main results
compared with Mergeability-based~\cite{ach+12} and 
Independent Range Sampling (IRS)-based~\cite{AW.range.sampling} solution.
The IRS-based solutions are randomized with success probability $1-\delta$
for a parameter $0<\delta<1$. 
$F$ is the number of colors of the input.
$w=\Theta(\log n)$ is the word size of the machine.
$\dagger$ indicates optimal solutions.}
\label{tab:results}
\tabulinesep=1.2mm
\begin{tabu}{ | m{2.5cm} | m{3.3cm}| m{4.2cm} | m{3.4cm} |} 
\hline
    \bf{Summary Query Types} \centering
& 
    \bf{Space} 
& 
    \bf{Query Time} 
& 
    \bf{Remark} \\
\hline
    \bf{Type-2 Simplex Color Counting} \centering
& 
    $O(n)$ 
&
    $O\left(n^{1-\frac{1}{d}}+\frac{n^{1-\frac{1}{d}}F^{\frac{1}{d}}}{w^{\alpha}}\right)$ 
& 
    New
\\
\hline
    \bf{3D AHHS \newline Halfspace} \centering
& 
    $O(n)$ \newline 
    $O(n)$ \newline 
    $O(n)$ \newline 
    $\boldsymbol{O(n\log_w \frac{1}{\varepsilon})}$ 
& 
    $O(\log n + \frac{1}{\varepsilon}n^{2/3})$ \newline 
    $\tilde{O}(n^{2/3} + \frac{1}{\varepsilon^{3/2}}n^{1/6})$ \newline 
    $O(\log n+\frac{1}{\varepsilon^2}\log\frac{1}{\delta})$ \newline 
    $\boldsymbol{O(\log n + \frac{1}{\varepsilon})}$ 
& 
    Mergeability-based~\cite{ach+12}  \newline 
    Monte Carlo~\cite{doi:10.1137/16M1093604}  \newline 
    IRS-based~\cite{AW.range.sampling} \newline 
    \textbf{New} 
\\
\hline
    \bf{3D AHHS \newline Dominance} \centering
& 
    $O(n)$ \newline 
    $O(n)$ \newline 
    $\boldsymbol{O(n)}$
& 
    $O(\log n + \frac{1}{\varepsilon}\log^3 n)$ \newline 
    $O(\log n + \frac{1}{\varepsilon^2}\log\frac{1}{\delta})$ \newline 
    $\boldsymbol{O(\log n + \frac{1}{\varepsilon})}$
& 
    Mergeability-based~\cite{ach+12}  \newline 
    IRS-based~\cite{AW.range.sampling} \newline 
    \textbf{New}$\dagger$
	\\
\hline
		\bf{3D AQS \newline Halfspace} \centering
	&
		$O(n)$ \newline 
		$O(n)$ \newline 
		$\boldsymbol{O(n\log^2\frac{1}{\varepsilon}\log_w\frac{1}{\varepsilon})}$
	& 
		$O(\log n + \frac{1}{\varepsilon}n^{2/3}\log (\varepsilon n))$ \newline 
		$O(\log n + \frac{1}{\varepsilon^2}\log\frac{1}{\delta})$ \newline 
		$\boldsymbol{O(\log n + \frac{1}{\varepsilon}\log^2\frac{1}{\varepsilon})}$
	& 
		Mergeability-based~\cite{ach+12}  \newline
		IRS-based~\cite{AW.range.sampling} \newline
		\textbf{New}
	\\
\hline
		\bf{3D AQS \newline Dominance} \centering
	&
		$O(n)$ \newline 
		$O(n)$ \newline 
		$\boldsymbol{O(n)}$
	& 
		$O(\log n + \frac{1}{\varepsilon}\log^3 n \log (\varepsilon n))$ \newline 
		$O(\log n + \frac{1}{\varepsilon^2}\log\frac{1}{\delta})$ \newline 
		$\boldsymbol{O(\log n + \frac{1}{\varepsilon})}$
	& 
		Mergeability-based~\cite{ach+12}  \newline
		IRS-based~\cite{AW.range.sampling} \newline
		\textbf{New}$\dagger$
	\\
\hline
\end{tabu}
\end{table*}

\section{Preliminaries}

In this section, we introduce the main tools we will use in our results.
For a comprehensive introduction to the tools we use, see Appendix~\ref{sec:premexact} and Appendix~\ref{sec:premsc}.

\subsection{Shallow Cuttings and Approximate Range Counting}
\label{sec:shallow}

Given a set $H$ of $n$ hyperplanes in $\R^3$, the level of a point $q\in\R^3$
is the number of hyperplanes in $H$ that pass below $q$.
We call the locus of all points of level at most $k$ the $(\le k)$-level
and the boundary of the locus is the $k$-level.
A shallow cutting $\C$ for the $(\le k)$-level of $H$ (or a $k$-shallow cutting for short)
is a collection of disjoint cells (tetrahedra) 
that together cover the $(\le k)$-level of $H$
with the property that every cell $C\in\C$ in the cutting intersects a set $H_C$,
called the conflict list of $C$
, of $O(k)$ hyperplanes in $H$.
The shallow cutting lemma is the following.

\begin{lemma}
\label{lem:sc}
	For any set of $n$ hyperplanes in $\R^3$ and a parameter $k$,
	there exists an $O(k/n)$-shallow cutting of size $O(n/k)$ that covers the $(\le k)$-level.
	The cells in the cutting are all vertical prisms unbounded from below
	(tetrahedra with a vertex at $(0,0,-\infty)$).
	
	Furthermore, we can construct these cuttings for all $k$ of form $a^i$ simultaneously in $O(n\log n)$ time
	for any $a>1$. Given any point $q\in\R^3$, we can find the smallest level $k$ that is above $q$
	as well the cell containing $q$ in $O(\log n)$ time.
\end{lemma}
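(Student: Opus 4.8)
The plan is to establish the three assertions in turn: the existence of a single $k$-shallow cutting of the stated size and cell shape, the simultaneous construction over a geometric progression of values of $k$, and the $O(\log n)$-time query. For existence, fix $k$ and draw a random sample $R\subseteq H$ of size $\Theta((n/k)\log n)$; compute the $(\le t)$-level of $R$ for a suitable $t=\Theta(\log n)$, whose worst-case complexity is $O(|R|\,t)=O((n/k)\log^2 n)$ by the Clarkson--Shor bound on shallow levels of planes in $\R^3$; triangulate its boundary surface; and extend each triangle downward to $z=-\infty$ to obtain a family of vertical prisms, each defined by $O(1)$ hyperplanes of $R$. Two facts are then checked by the standard Clarkson--Shor machinery. \emph{Coverage}: a point $q$ of level at most $k$ in $H$ has expected level $O(\log n)$ in $R$, so a Chernoff bound together with a union bound over the $O(nk^2)=n^{O(1)}$ combinatorially distinct shallow positions of $q$ (vertices of the arrangement of $H$ at level $O(k)$) forces every such $q$ to level $\le t$ in $R$, hence into one of the prisms. \emph{Conflict lists}: a cell lying at level $\le t$ of $R$ has $O(t\cdot n/|R|)=O(k)$ hyperplanes of $H$ crossing it in expectation, and the probability of a cell having more than $\lambda k$ conflicts while still surviving at level $\le t$ of $R$ decays exponentially in $\lambda$, so a union bound over the $|R|^{O(1)}$ candidate cells makes every conflict list $O(k)$ for a good sample. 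This yields a $k$-shallow cutting made of downward-unbounded tetrahedra with $O((n/k)\log^2 n)$ cells; trimming the polylogarithmic factor to the optimal $O(n/k)$, which is the content of Matou\v{s}ek's shallow cutting lemma, I would obtain from the literature rather than reprove.

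For the simultaneous construction, I would build the cuttings for $k=a^i$, $i=0,1,\dots$, in one pass, reusing the conflict lists of each level to seed the next. The key accounting is that the sizes telescope, $\sum_i O(n/a^i)=O(n)$ for fixed $a>1$, so the whole hierarchy occupies $O(n)$ space, while constructing each cutting costs time near-linear in its size plus an $O(\log n)$ overhead for the envelope computations, which also sums to $O(n\log n)$; the deterministic, size-optimal realization of this scheme is the shallow-cutting construction of Chan, refined by Chan and Tsakalidis, which I would invoke for the precise bounds. For the query, since every cell is a vertical prism unbounded from below, the cells of the cutting for $k$ project onto a planar subdivision of the $xy$-plane with $O(n/k)$ faces, supporting linear-space $O(\log(n/k))=O(\log n)$ planar point location; coverage is monotone in $k$ — the $(\le k)$-levels are nested — so the smallest $k=a^i$ above a query point $q$ is found by locating $q$ through the hierarchy, and threading the $O(\log n)$ planar subdivisions together with fractional cascading, so that the location of $q$ in one subdivision guides the search in the next, reduces the naive $O(\log^2 n)$ to $O(\log n)$ and returns the containing cell along the way.

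The existence argument and the query are routine once the right tools are assembled; I expect the genuinely delicate points — which I would handle by quoting the optimized constructions rather than reproving them — to be removing the polylogarithmic overhead from the cutting size without inflating the construction time, and arranging the hierarchy so that both the incremental construction and the fractional-cascaded point location meet the stated bounds.
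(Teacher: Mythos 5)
The paper never proves this lemma: it is quoted as a known tool (the shallow cutting machinery of Matou\v{s}ek, with the optimal-size, $O(n\log n)$-time hierarchical construction and the $O(\log n)$ cell-location structure taken from the literature, e.g.\ Chan--Tsakalidis and Afshani--Chan~\cite{ac09}), so your strategy of sketching the Clarkson--Shor argument and invoking those constructions for the sharp bounds is aligned with how the paper treats the statement. Two points, however. First, a minor slip: the $(\le t)$-level of $|R|$ planes in $\R^3$ has complexity $O(|R|\,t^2)$, not $O(|R|\,t)$, so your sampled cutting has $O((n/k)\log^3 n)$ cells before trimming; this is harmless since you quote the optimal $O(n/k)$ bound anyway.

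Second, and this is a genuine gap: your mechanism for the $O(\log n)$ query does not work as stated. Fractional cascading accelerates iterative searches in sorted catalogs along a bounded-degree graph; it does not apply to point location in a sequence of planar subdivisions, and the $O(\log n)$ subdivisions obtained by projecting the prisms of the cuttings for $k=a^0,a^1,\dots$ are not refinements of one another, so there is no way to ``thread'' the locations of $q$ from one level to the next --- the naive cost really is $O(\log n)$ per level, i.e.\ $O(\log^2 n)$ total, and no cascading shortcut is available. The standard route (and the one consistent with the paper's own toolkit) is different: first query a linear-space approximate depth structure (exactly what Theorem~\ref{thm:approxrc} provides) to learn the level of $q$ up to a constant factor in $O(\log n)$ time; this pins down the single index $i$ with $a^{i-1}$ below and $a^i$ above the level of $q$, after which one planar point location in the $a^i$-cutting, in $O(\log n)$ time, returns the containing prism. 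Replacing the fractional-cascading step by this two-stage search closes the gap; the rest of your outline is the standard argument.
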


The above can also be applied to dominance ranges, which are defined as below.
Given two points $p$ and $q$ in $\R^d$, $p$ dominates $q$ if and only if every
coordinate of $p$ is larger or equal to that of $q$.
The subset of $\R^d$ dominated by $p$ is known as a \textit{dominance range}. 
When the query range in a range searching problem is a dominance range, we refer
to it as a \textit{dominance query}.
As observed by Chan et al.~\cite{chan:revisit}, dominance queries can be simulated by a halfspace queries
and thus Lemma~\ref{lem:sc} applies to them. 
See Appendix~\ref{sec:premsc} for details.

We obtain the approximate version of the range counting result using shallow cuttings.

\begin{restatable}[Approximate Range Counting~\cite{AHZ.UB.CGTA}]{theorem}{approxrc}\label{thm:approxrc}
    Let $P$ be a set of $n$ points in $\R^3$. One can build a data structure of size
    $O(n)$ for halfspace or dominance ranges such that given a query range $\gamma$,
    one can report $|\gamma\cap P|$ in $O(\log n)$ time with error $\alpha|\gamma\cap P|$
    for any constant $\alpha>0$.
\end{restatable}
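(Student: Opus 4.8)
The plan is to reduce approximate range counting to approximately computing the \emph{level} of a point and to solve the latter with a geometric hierarchy of shallow cuttings. By standard point--hyperplane duality, a halfspace query $\gamma$ dualizes to a point $q\in\R^3$, the points of $P$ dualize to a set $H$ of $n$ hyperplanes, and $|\gamma\cap P|$ equals the level $L$ of $q$ with respect to $H$; for dominance ranges, Lemma~\ref{lem:sc} applies directly (see Appendix~\ref{sec:premsc}), so the same construction carries over. Hence it suffices to build an $O(n)$-space structure that, given $q$, returns in $O(\log n)$ time an estimate $\hat L$ with $|\hat L-L|\le\alpha L$.

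Set $a=1+\alpha$. Using the last part of Lemma~\ref{lem:sc}, construct $a^i$-shallow cuttings for all $i=0,1,\dots,O(\log_a n)$ simultaneously in $O(n\log n)$ time. With every cell $C$ we store the number $b_C$ of hyperplanes lying strictly below all of $C$ (available from the construction), and \emph{only} for the cells of those cuttings whose level $a^i$ does not exceed a fixed threshold $T=\Theta(1/\alpha)$ do we additionally store the conflict list $H_C$. The space for the cells and the $b_C$ values is $\sum_i O(n/a^i)=O(n/\alpha)=O(n)$, and the stored conflict lists add $\sum_{i\,:\,a^i\le T}O(n/a^i)\cdot O(a^i)=O(n)$ more, since there are only $O(1)$ such levels; in total $O(n)$. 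It is essential \emph{not} to store every conflict list, as that would cost $\Theta(n\log n)$ space.

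Given a query, compute $q$ and, via Lemma~\ref{lem:sc}, in $O(\log n)$ time find the smallest level $k=a^i$ that lies above $q$ and the cell $C$ of the $k$-cutting containing $q$; note $L=b_C+|\{\,h\in H_C: h\text{ lies below }q\,\}|$. If $k\le T$ then $|H_C|=O(k)=O(1/\alpha)=O(1)$, so we scan $H_C$, count the hyperplanes below $q$, add $b_C$, and return the \emph{exact} value $L$ in $O(1)$ time. Otherwise $i\ge 1$, so $k/a\le L<k$, and we return $\hat L=k/a$: then $0\le L-\hat L<k-k/a=\tfrac{\alpha}{1+\alpha}k$ while $L\ge\tfrac{k}{1+\alpha}$, giving $|\hat L-L|<\alpha L$. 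The exact branch is needed exactly for the ranges with $O(1)$ points (in particular $L=0$), where no nonzero relative error is tolerable.

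The only points that need care are this split between the exact and approximate branches, the space accounting that keeps conflict lists only at the $O(1)$ lowest levels, and checking that the geometric hierarchy yields relative error at most $\alpha$; beyond that, everything is an immediate consequence of Lemma~\ref{lem:sc}.
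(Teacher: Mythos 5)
There is a genuine gap in the approximate branch. You assert that if $q$ lies in the $a^i$-cutting but not in the $a^{i-1}$-cutting then $k/a\le L<k$ with $k=a^i$. The lower bound is fine, but the upper bound does not follow from Lemma~\ref{lem:sc}: a $k$-shallow cutting is only required to \emph{cover} the $(\le k)$-level, and its cells necessarily stick out above that level (this is unavoidable, since the cutting has only $O(n/k)$ cells while the $k$-level itself has super-linear complexity); the only guarantee for a point inside a cell $C$ is $L\le |H_C|=O(k)$, with a construction constant strictly larger than $1$. Relatedly, the phrase ``smallest level $k$ that is above $q$'' in Lemma~\ref{lem:sc} must be read as the lowest cutting in the hierarchy whose cells contain $q$ (found by point location), not as the smallest $a^i$ exceeding the true level $L$ --- determining the latter to within a factor $1+\alpha$ is essentially the problem you are trying to solve. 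Consequently your estimate $\hat L=k/a$ is only a constant-factor approximation, where the constant is governed by the big-$O$ in the conflict-list bound, and it does not become $1+\alpha$ no matter how close $a$ is to $1$. This is also why the paper does not derive Theorem~\ref{thm:approxrc} from Lemma~\ref{lem:sc} alone but imports it as a known result from~\cite{AHZ.UB.CGTA}: the refinement from a constant-factor to an arbitrary constant relative error $\alpha$ needs an extra ingredient.

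Your architecture can be repaired with one such ingredient. Keep the hierarchy (a geometric ratio $a=2$ already suffices) and the exact bottom level, but for every cell $C$ of every cutting additionally store a $\Theta(\alpha)$-approximation of its conflict list (Theorem~\ref{thm:eapp}); for constant $\alpha$ this has $O(1)$ size, so the total space over the $O(n)$ cells remains $O(n)$. At query time, after locating the lowest cutting containing $q$ and its cell $C$ in $O(\log n)$ time, count by brute force how many sample hyperplanes pass below $q$ and rescale by $|H_C|$ divided by the sample size. The additive error is $\Theta(\alpha)\,|H_C|=\Theta(\alpha)\cdot O(a^i)$, while the fact that $q$ is not covered by the previous cutting gives $L>a^{i-1}$; choosing the constant inside $\Theta(\alpha)$ small enough therefore bounds the error by $\alpha L$, which is the statement of the theorem.
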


\subsection{$\varepsilon$-approximation}
Another tool we will use is $\varepsilon$-approximation,
which is a useful sampling technique:

\begin{definition}
	Let $(P,\Gamma)$ be a finite set system.
	Given any $0<\varepsilon<1$,
	a set $A\subseteq P$ is called an $\varepsilon$-approximation
	for $(P,\Gamma)$ if for any $\gamma\in\Gamma$,
	$
		\abs*{ \frac{|\gamma \cap A|}{|A|} - \frac{|\gamma \cap P|}{|P|} } \le \varepsilon.
	$
\end{definition}

The set $A$ above allows us to approximate the number of points of $\gamma \cap P$  with additive
error $\varepsilon |P|$ by computing $|\gamma \cap A|$ exactly; essentially, $\varepsilon$-approximations
reduce the approximate counting problem on the (big) set $P$ to
the  exact counting problem on the (small) set $A$.

It has been shown that small-sized $\varepsilon$-approximations for set systems formed by points and halfspaces/dominance ranges exist:
\begin{theorem}[$\varepsilon$-approximation~\cite{matousek2009geometric, p16coresets}]\label{thm:eapp}
	There exist $\varepsilon$-approximations of size $O(\varepsilon^{-\frac{2d}{d+1}})$ and $O(\varepsilon^{-1}\log^{d+1/2}\varepsilon^{-1})$ 
	for halfspace and dominance ranges respectively.
\end{theorem}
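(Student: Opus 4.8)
The plan is to reproduce the classical discrepancy-based construction (iterated halving), which turns a combinatorial discrepancy bound for a set system into a bound on the size of its $\varepsilon$-approximations; the two claimed estimates then follow by plugging in the known discrepancy bounds for halfspaces and for dominance ranges. I would rely on two standard lemmas. First, $\varepsilon$-approximations compose: if $A$ is an $\varepsilon_1$-approximation for $(P,\Gamma)$ and $B\subseteq A$ is an $\varepsilon_2$-approximation for $(A,\Gamma)$, then $B$ is an $(\varepsilon_1+\varepsilon_2)$-approximation for $(P,\Gamma)$. Second, a low-discrepancy $2$-coloring gives a good ``halving'': if a coloring $\chi\colon Q\to\{-1,+1\}$ satisfies $\bigl|\sum_{p\in\gamma\cap Q}\chi(p)\bigr|\le D$ for every $\gamma\in\Gamma$, then the majority color class, trimmed to size exactly $|Q|/2$, is an $O(D/|Q|)$-approximation for $(Q,\Gamma)$.

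Given these, I would start from the point set $P$ and halve repeatedly using a discrepancy-optimal coloring, obtaining $P=Q_0\supset Q_1\supset\cdots\supset Q_k$ with $|Q_i|=|P|/2^i$ (stopping earlier if $|P|$ is already below the target size). Writing $D(m)$ for the discrepancy bound on an $m$-point instance, composition shows $Q_k$ is an $\varepsilon$-approximation of $P$ with $\varepsilon=O\bigl(\sum_{i=0}^{k-1} D(|P|/2^i)\,/\,(|P|/2^i)\bigr)$. Setting $s=|Q_k|$, this sum runs over a geometric sequence of sizes $m$ from $2s$ up to $|P|$, and I would choose $s$ as small as possible subject to the sum being $O(\varepsilon)$.

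For halfspaces in $\R^d$ I would invoke the bound $D(m)=O(m^{1/2-1/(2d)})$, the classical discrepancy estimate for set systems whose primal shatter function is $O(m^d)$ (essentially tight for odd $d$). Then $D(m)/m=O(m^{-(d+1)/(2d)})$ is decreasing in $m$, so the geometric sum is $\Theta$ of its largest term, $O(s^{-(d+1)/(2d)})$; equating this with $\varepsilon$ gives $s=O(\varepsilon^{-2d/(d+1)})$. For dominance ranges in $\R^d$ I would instead use that the discrepancy with respect to ``corners'' is only polylogarithmic, $D(m)=O(\log^{d+1/2}m)$, from the entropy/partial-coloring argument for axis-parallel ranges (the route taken in~\cite{p16coresets}). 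Now the geometric sum evaluates to $O(\log^{d+1/2}s\,/\,s)$, and solving $\log^{d+1/2}s\,/\,s=\Theta(\varepsilon)$ --- substituting $s\approx\varepsilon^{-1}$ once into the logarithm --- yields $s=O(\varepsilon^{-1}\log^{d+1/2}\varepsilon^{-1})$.

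The main obstacle is not the recursion, which is routine, but securing the two discrepancy inputs with exactly the right exponents: the $m^{1/2-1/(2d)}$ bound for halfspaces needs the transference-to-discrepancy step together with the partial-coloring machinery tuned to shatter function $O(m^d)$, and the corner bound needs the more delicate product-structure/entropy argument for axis-parallel ranges --- pushing the polylog exponent down to $d+1/2$ (rather than a cruder $d$ or $2d$) is where the real work sits. A secondary, bookkeeping obstacle is the cleanup at each halving step: the color classes are only approximately balanced, so one must absorb the $\Theta(D)$ leftover points at every level --- e.g., by collecting the discarded fragments and finishing with a brute-force $\varepsilon$-approximation on the residual $\mathrm{poly}(1/\varepsilon)$-size set --- and one must check that the discrepancy estimates hold uniformly for every sub-instance met along the way, including the smallest ones.
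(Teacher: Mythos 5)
Your proposal is correct: it is the standard discrepancy-based iterated-halving argument, with the halfspace discrepancy bound $O(m^{1/2-1/(2d)})$ and the polylogarithmic bound for corners plugged in, which is exactly how the cited references establish these sizes. The paper itself offers no proof of this theorem — it is imported verbatim from Matou\v{s}ek's discrepancy book and Phillips' coresets survey — so your reconstruction matches the approach behind the paper's citation.
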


\section{Approximate Heavy Hitter Summary Queries}
We solve approximate quantile summary (AQS) queries
using improved results for approximate heavy hitter summary (AHHS) queries.
We sketch the main ideas of our new AHHS solutions in this section
and refer the readers to Appendix~\ref{sec:hh} for details.
For the clarity of description,
we use $\varepsilon_0$ to denote the target error for the AHHS queries.
We will reserve $\varepsilon$ as a general error parameter.
We show the following.

\begin{restatable}{theorem}{thmddd}\label{thm:3d}
    For $d=3$, the approximate halfspace heavy hitter summary queries can be answered using $O(n\log_w(1/\varepsilon_0))$ space
    and with the optimal $O(\log n + 1/\varepsilon_0)$ query time. 
\end{restatable}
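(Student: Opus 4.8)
The plan is to answer a general query in two stages: a \emph{routing} stage that shrinks the query to a ``canonical'' one over a small point set, and a \emph{summary} stage that reads off the heavy hitters from a precomputed compact structure attached to that set. For routing, lift $P$ and the query halfspace by the standard duality that turns a $3$D halfspace range into a point whose level equals $|\gamma\cap P|$, and build the shallow cuttings of Lemma~\ref{lem:sc} for all levels $k=a^i$, together with the approximate range counting structure of Theorem~\ref{thm:approxrc}. Given a query I would first obtain a constant-factor estimate of $|\gamma\cap P|$, then use it to jump to the shallow cutting whose level satisfies $k=\Theta(|\gamma\cap P|)$ and locate, in $O(\log n)$ total time, the cell $C$ containing the dual query point. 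Then $\gamma\cap P\subseteq H_C$, $|H_C|=O(k)=\Theta(|\gamma\cap P|)$, and — the point of choosing $k$ this way — every query routed to $C$ captures a $\Theta(1)$ fraction of $H_C$. So it suffices, for each cell $C$, to answer halfspace AHHS on the fixed set $H_C$ (say $m:=|H_C|$ points) with additive error $\Theta(\varepsilon_0 m)$, in $O(1/\varepsilon_0)$ time, with total space $O(n\log_w(1/\varepsilon_0))$ over all cells of all levels.

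For the summary stage I would first shrink $H_C$ with an $\varepsilon$-approximation (Theorem~\ref{thm:eapp}) for the \emph{color-refined} range space $\{\,h\cap (H_C)_c : h\text{ a halfspace},\ c\text{ a color}\,\}$; this space still has $O(1)$ VC dimension — a halfspace can only carve a subset out of a single color class — so with $\varepsilon=\Theta(\varepsilon_0)$ we replace $H_C$ by a weighted set of $\mathrm{poly}(1/\varepsilon_0)$ points preserving every color frequency inside every halfspace up to $\Theta(\varepsilon_0 m)$. On the reduced set I would build a bounded-depth partition structure (via the partition theorem, stopping once cells carry only a $\Theta(\varepsilon_0)$-fraction of the weight), so that a halfspace query decomposes into $O(1/\varepsilon_0)$ fully contained canonical cells plus $O((1/\varepsilon_0)^{2/3})$ boundary cells handled by a shallow recursion or absorbed into the error budget; with each canonical cell I would store only the color/frequency pairs that can witness a heavy hitter at its scale, after relabeling those $O(1/\varepsilon_0)$ colors into $[O(1/\varepsilon_0)]$ and truncating their frequencies to $O(\log(1/\varepsilon_0))$ bits, so $\Theta(w/\log(1/\varepsilon_0))$ pairs fit in a word. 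A geometric sum over the levels $k=a^i$ then bounds the space: for $k$ above $\mathrm{poly}(1/\varepsilon_0)$ the cell count $O(n/k)$ is small enough that the per-cell cost sums to $O(n)$, and only the $O(\log(1/\varepsilon_0))$ smaller levels each cost $O(n)$, so the word-packing shaves the resulting $O(n\log(1/\varepsilon_0))$ to $O(n\log_w(1/\varepsilon_0))$.

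The step I expect to be the real obstacle is merging the $O(1/\varepsilon_0)$ canonical color histograms in only $O(1/\varepsilon_0)$ time while holding the error at $\Theta(\varepsilon_0 m)$. For orthogonal ranges the canonical subsets along a query path are aligned, so precomputed prefix histograms combine in $O(1)$ word operations; for halfspaces the pieces are unrelated, and a naive merge costs $\Theta(1/\varepsilon_0)$ per distinct heavy color, i.e.\ $\mathrm{poly}(1/\varepsilon_0)$ overall — which is exactly why a bit-packing layer (grouping the histograms into $\log(1/\varepsilon_0)$ scales and merging them scale by scale with word-level parallelism) is needed, and why it charges the extra $\log_w(1/\varepsilon_0)$ factor to the space. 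Alongside this, one must verify that neither the $\varepsilon$-approximation reduction, the frequency truncation, nor the discarded boundary cells push the error past $\Theta(\varepsilon_0 m)$, and that the reported estimates meet the two-sided AHHS guarantee $f_i-\varepsilon_0 k\le f'_i\le f_i+\varepsilon_0 k$; the remaining ingredients — the duality and shallow-cutting routing, the VC bound, and the partition-theorem decomposition — are by now standard and only account for the $O(\log n)$ search term and the near-linear space.
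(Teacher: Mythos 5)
Your routing stage is exactly the paper's: dualize, build a hierarchy of shallow cuttings, use approximate range counting to pick the level $k=\Theta(|\gamma\cap P|)$ and the cell $C$, and reduce to answering AHHS on the conflict list $H_C$ with additive error $\Theta(\varepsilon_0 m)$. The per-cell stage, however, has genuine gaps. First, the space bound: the $\log_w(1/\varepsilon_0)$ factor does not come from ``word-packing shaving'' an $O(n\log\frac{1}{\varepsilon_0})$ bound down to $O(n\log_w\frac{1}{\varepsilon_0})$ --- each cutting level costs $\Theta(n)$ \emph{words} no matter how you pack frequencies, since you store per-cell point samples, color lists and dictionaries. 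The paper gets away with only $O(\log_w\frac{1}{\varepsilon_0})$ expensive levels by spacing the cuttings in the critical range $[\varepsilon_0^{-1},\varepsilon_0^{-3/2}]$ by factors of $w^{\beta}$ and building the per-cell base structure with the \emph{finer} error parameter $\varepsilon_0/w^{\beta}$; this stays inside the $O(1/\varepsilon_0)$ query budget only because the bit-packed type-2 counting structure (Theorem~\ref{thm:partition}) has a query term of the form $1/(\varepsilon w^{\alpha})$, and because conflict lists at these levels have size at most $\varepsilon_0^{-3/2}$ so the $X^{2/3}$ term is also $O(1/\varepsilon_0)$; above level $\varepsilon_0^{-3/2}$ plain $\varepsilon_0$-approximations of size $O(\varepsilon_0^{-3/2})$ make all remaining levels sum to $O(n)$. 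Your sketch contains no counterpart to this mechanism, and without it the natural factor-2 hierarchy gives $\Theta(\log\frac{1}{\varepsilon_0})$ levels of $\Theta(n)$ words each.

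Second, the step you yourself flag as the real obstacle --- merging the $O(1/\varepsilon_0)$ canonical-cell histograms in $O(1/\varepsilon_0)$ time --- is left unresolved, and the proposed fix (merging sparse histograms ``scale by scale with word-level parallelism'') does not work as stated: the surviving (color, frequency) pairs of different canonical cells live over unaligned color sets, so packed addition is undefined, and aligning them is a sparse multiway merge whose cost is proportional to the number of pairs, i.e.\ up to $\mathrm{poly}(1/\varepsilon_0)$. The paper's resolution is to re-number, once per shallow-cutting cell, the $O(w^{\beta}/\varepsilon_0)$ locally frequent colors into a common small universe (with a dictionary), so that all partition-tree nodes store \emph{aligned} packed frequency vectors that add in $O(1)$ word operations, and then to use a separate factor-2 ``testing'' hierarchy (storing $O(1/\varepsilon_0)$ candidate colors per cell, $O(n)$ space in total) that tells the query which $O(1/\varepsilon_0)$ entries of the packed output to extract --- unpacking all $O(w^{\beta}/\varepsilon_0)$ entries would exceed the budget. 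Two further problems: you cannot ``absorb the boundary cells into the error budget,'' since $\Theta(\varepsilon_0^{-2/3})$ crossed cells each holding $\Theta(\varepsilon_0 N)$ points would contribute error $\Theta(\varepsilon_0^{1/3}N)\gg\varepsilon_0 N$, so recursion is mandatory and its cost must be analyzed (this is what the paper's base structure, Theorem~\ref{thm:base}, does); and a single color-refined approximation of generic bounded-VC size ($\approx\varepsilon_0^{-2}$) is too large for the $N^{2/3}$ term to stay $O(1/\varepsilon_0)$ --- the paper instead counts ``big'' colors exactly and uses per-color $\varepsilon_i$-approximations with $\varepsilon_i=\varepsilon|P|/|P_i|$, whose union has size $O(\min\{|P|,\varepsilon^{-3/2}\})$.
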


\begin{restatable}{theorem}{thmdd}\label{thm:2d}
    For $d=2$, the approximate halfspace heavy hitter summary queries can be answered using $O(n\log\log_w(1/\varepsilon_0))$ space
    and with the optimal $O(\log n + 1/\varepsilon_0)$ query time. 
\end{restatable}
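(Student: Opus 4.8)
\noindent\textbf{Proof proposal for Theorem~\ref{thm:2d}.}
The plan is to mirror the construction behind the 3D statement (Theorem~\ref{thm:3d}) and then use the extra room available in the plane to replace the $\log_w(1/\varepsilon_0)$ space overhead by $\log\log_w(1/\varepsilon_0)$. At a high level there are two pieces: a shallow‑cutting reduction that turns an arbitrary planar halfspace AHHS query into a query on a tiny ``canonical'' point set of size $\mathrm{poly}(1/\varepsilon_0)$, and a recursive structure on that canonical set whose recursion depth, crucially, is only doubly logarithmic in the plane.

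\emph{Step 1 (reduction via shallow cuttings).} Lifting $2$D halfspaces to the paraboloid and applying Lemma~\ref{lem:sc}, together with the approximate range counting structure of Theorem~\ref{thm:approxrc}, I would spend $O(\log n)$ time to estimate $t=|\gamma\cap P|$ up to a constant factor and to locate a cell $C$ of a $\Theta(t)$‑shallow cutting whose conflict list $H_C$, of size $O(t)$, contains all of $\gamma\cap P$; answering an AHHS query on $(H_C,\gamma)$ with additive error $\Theta(\varepsilon_0 t)$ then answers the original query after adjusting $\varepsilon_0$ by a constant. Since the conflict lists over all dyadic levels have total size $\Theta(n\log n)$, we cannot afford an independent substructure per cell: each cell stores only a pointer plus a small sketch, and the real work is delegated to a shared recursive structure built on inputs of size $\mathrm{poly}(1/\varepsilon_0)$.

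\emph{Step 2 (the recursive core).} Here I want a structure that answers halfspace AHHS with additive error $\varepsilon_0 m$ on an $m$‑point colored set, defined recursively. The first move is to shrink the ground set to size $\mathrm{poly}(1/\varepsilon_0)$, independent of $m$ and of the number of colors, by combining an $\varepsilon$‑approximation for $(H_C,\text{halfspaces})$ of size $O(\varepsilon_0^{-4/3})$ (Theorem~\ref{thm:eapp}; note this $2$D bound is genuinely smaller than its $3$D counterpart) to control the ``denominator'' $|\gamma\cap H_C|$, together with a Misra--Gries/mergeable heavy‑hitter summary that retains, for every color, a count accurate to within $\pm\varepsilon_0 m/c$ under all halfspace cuts. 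On the shrunken weighted instance we lay down shallow cuttings again and recurse. The quantitative point special to $\R^2$ is that one recursion step reduces the ``complexity exponent'' of the instance from $x$ to roughly $x/2$ (a cell's conflict list re‑approximated at the next error scale), so the recursion bottoms out after $O(\log\log(1/\varepsilon_0))$ steps rather than the $O(\log(1/\varepsilon_0))$ steps needed in $3$D; the word size $w$ enters because $\Theta(\log w)$‑bit fields let us bit‑pack $\Theta(w/\log w)$ colors per machine word, turning $\log$ into $\log_w$ and giving depth $O(\log\log_w(1/\varepsilon_0))$. Each of these recursion levels contributes $O(n)$ to the total space, for $O(n\log\log_w(1/\varepsilon_0))$ overall; a query descends the recursion spending geometrically decreasing time per level plus an $O(1/\varepsilon_0)$‑time bit‑packed merge of the $O(1)$ partial summaries produced at each level, for $O(\log n+1/\varepsilon_0)$ total.

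\emph{Main obstacle.} I expect the crux to be the colored sketch of Step~2: an ordinary $\varepsilon$‑approximation for points‑and‑halfspaces does not by itself bound per‑color frequencies, and taking a union of per‑color approximations costs $\Omega(F/\varepsilon_0)$. Producing a single weighted set of size only $\mathrm{poly}(1/\varepsilon_0)$ that preserves every (possibly query‑dependent) heavy color up to additive $\varepsilon_0 m$ under all halfspaces, while remaining cheap to recompute at each recursion level, is the hard part — as is certifying that the planar recursion really collapses to $O(\log\log_w(1/\varepsilon_0))$ levels and that the bit‑packed merges at each level never push the query past the $O(1/\varepsilon_0)$ budget.
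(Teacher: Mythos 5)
There is a genuine gap, and it is exactly the one you flag as your ``main obstacle'': the colored sketch at the heart of Step~2 is the actual content of the theorem, and the paper does not obtain it the way you hope. The paper never builds a single query-independent set of size $\mathrm{poly}(1/\varepsilon_0)$ that preserves all heavy colors under all halfplanes. Instead it makes the frequency-estimation structure \emph{query-driven}: a separate ``testing'' structure (shallow cuttings at levels $\varepsilon_0^{-1}2^i$, storing the $O(1/\varepsilon_0)$ colors that are locally frequent in each conflict list) hands the query a list of $O(1/\varepsilon_0)$ candidate colors, and a specialized 2D base structure (Theorem~\ref{thm:base2d}) answers, for a \emph{given} list of $F=\frac{1}{\varepsilon B^{1/3}}$ colors with $B=\varepsilon|P|$, their frequencies up to additive error $\varepsilon|P|$ in $O(F)$ time and linear space. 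That base structure is built by splitting colors into frequent/infrequent at threshold $\varepsilon B|P|$, taking a per-color $\varepsilon_j$-approximation with $\varepsilon_j=\varepsilon|P|/n_j$ (whose total size is controlled by convexity, not by a union bound of the form $\Omega(F/\varepsilon_0)$), and storing the infrequent colors' approximations in a $B^{-2/3}$-cutting in the dual with a precomputed frequency vector per cell. Without this component (or an equivalent one resolving your stated obstacle), your recursion has nothing to recurse on, and the per-level $O(n)$ space claim and the $O(1/\varepsilon_0)$ query budget are unsupported.

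The doubly-logarithmic depth is also not obtained by a generic ``exponent halves per re-approximation'' argument, and your bit-packing explanation for where $w$ enters is not the real mechanism. In the paper, the first lower level $g_0=\varepsilon_0^{-1}w^\beta$ is inherited from the 3D machinery (the bit-packed base structure of Theorem~\ref{thm:base}), and the subsequent levels $g_i$ are forced by two explicit constraints on the 2D base structure built per cell: the query-time constraint $\varepsilon(\varepsilon g_i)^{1/3}\ge\varepsilon_0$ and the error constraint $\varepsilon g_i\le\varepsilon_0 g_{i-1}$ (valid because a query answered at level $i$ lies above level $g_{i-1}$). Solving these gives $g_i=\varepsilon_0^{-1}(w^\beta)^{(4/3)^i}$, i.e.\ the exponent grows by a factor $4/3$ per level, and one stops at $g_i\approx\varepsilon_0^{-3}$, where the 3D solution takes over using only $O(n)$ additional space; this is what yields $O(\log\log_w(1/\varepsilon_0))$ levels and hence the space bound. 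Your sketch has the right global shape (candidate colors, a shallow-cutting hierarchy, super-geometric level spacing exploiting the 2D approximation exponent $4/3$), but as written both the key lemma and the quantitative level-spacing argument are missing.
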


\begin{restatable}{theorem}{thmdomhh}\label{thm:domhh}
    For $d=2,3$, the approximate dominance heavy hitter summary queries can be answered using the optimal $O(n)$ space
    and with the optimal $O(\log n + 1/\varepsilon_0)$ query time. 
\end{restatable}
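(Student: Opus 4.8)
The plan is to reduce the problem to \emph{bounded} instances via the shallow-cutting hierarchy of Lemma~\ref{lem:sc} (which, as the preliminaries note, applies to dominance ranges): build, for every $k$ of the form $2^i$, a $2^i$-shallow cutting $\mathcal{C}_i$ of $P$. Given a dominance query with corner $q$, Lemma~\ref{lem:sc} locates in $O(\log n)$ time the smallest level $2^i$ above $q$ and the cell $C\in\mathcal{C}_i$ containing $q$; by the standard reporting property of shallow cuttings the conflict list $P_C$ contains every input point dominated by $q$, so $\gamma\cap P_C=\gamma\cap P$, and $2^{i-1}<|\gamma\cap P|\le|P_C|=O(2^i)$, i.e. $m:=|P_C|=\Theta(|\gamma\cap P|)$. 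Queries with $|\gamma\cap P|=O(1/\varepsilon_0)$ — recognized because the search stops at a level $2^i=O(1/\varepsilon_0)$ — are handled directly by the single $\Theta(1/\varepsilon_0)$-shallow cutting: it has $O(n\varepsilon_0)$ cells with $O(1/\varepsilon_0)$ conflict-list points each ($O(n)$ total space), and the cell containing $q$ already lists all of $\gamma\cap P$, so we scan it and output the exact heavy hitters in $O(1/\varepsilon_0)$ time. Thus it remains to attach to every other cell $C$ a secondary structure answering a dominance AHHS query on $P_C$ with additive error $O(\varepsilon_0 m)$ in $O(1/\varepsilon_0)$ time, while keeping the total size of all secondary structures $O(n)$.

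The key structural observation is that, since $|\gamma\cap P_C|=\Theta(|P_C|)=\Theta(m)$, any color $c$ that is heavy in $\gamma\cap P_C$ has frequency $\ge\varepsilon_0|\gamma\cap P_C|=\Omega(\varepsilon_0 m)$, hence already occupies an $\Omega(\varepsilon_0)$-fraction of $P_C$; so there are only $O(1/\varepsilon_0)$ \emph{candidate} colors, and we store this list with $C$. Writing $P_C^{(c)}$ for the points of $P_C$ of color $c$ and $m_c:=|P_C^{(c)}|$, it remains to estimate $|\gamma\cap P_C^{(c)}|$ to within additive error $O(\varepsilon_0 m)$ for each candidate $c$ (together with an estimate of $|\gamma\cap P_C|$ to within $O(\varepsilon_0 m)$, obtained from a color-blind copy of the same gadget or from Theorem~\ref{thm:approxrc}); the AHHS is then assembled by reporting each candidate whose estimate clears $\tfrac12\varepsilon_0$ times the estimate of $|\gamma\cap P_C|$, constants chosen so the accumulated error stays below $\varepsilon_0|\gamma\cap P|$. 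For the per-color estimates, on each class $P_C^{(c)}$ we use a shallow-cutting hierarchy truncated from below at level $\Theta(\varepsilon_0 m)$: it has $\sum_{j:\,2^j\in[\varepsilon_0 m,\,m_c]}O(m_c/2^j)=O(m_c/(\varepsilon_0 m))$ cells, and — exactly as in the $O(n)$-space approximate-counting structure underlying Theorem~\ref{thm:approxrc} — each cell needs only $O(1)$ precomputed words, so its size is $O(m_c/(\varepsilon_0 m))$ and a query at $q$ returns the color-$c$ count to within $O(\varepsilon_0 m)$. Summed over candidates this is $\sum_c O(m_c/(\varepsilon_0 m))=O(1/\varepsilon_0)$ per cell $C$; since $\mathcal{C}_i$ has $O(n/2^i)$ cells and $\sum_{i:\,2^i=\omega(1/\varepsilon_0)}(n/2^i)\cdot O(1/\varepsilon_0)$ telescopes geometrically to $O(n)$, the total space (together with the locating cuttings and the small-query cutting, each $O(n)$) is $O(n)$, as claimed.

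The remaining — and main — obstacle is the query time. Naively, locating $q$ in each of the $O(1/\varepsilon_0)$ per-color hierarchies costs $\Omega(\log n)$ apiece, i.e. $\Omega(\tfrac1{\varepsilon_0}\log n)$ in total, whereas only $O(1/\varepsilon_0)$ is permitted once the initial $O(\log n)$ search for $C$ is done. The hard part is therefore to \emph{batch} these point locations: inside $C$ we build one auxiliary structure so that a single $O(\log m)$ descent driven by $q$ — charged to the global $O(\log n)$ budget — yields the located cell, hence the count estimate, in every per-color hierarchy at amortized $O(1)$ cost, exploiting that for dominance the shallow cuttings at consecutive levels refine one another laminarly, so the union of the $O(1/\varepsilon_0)$ hierarchies has complexity $O(1/\varepsilon_0)$ rather than $O(1/\varepsilon_0^2)$. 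This laminarity is exactly what is special to dominance; for general halfspaces such a merge is unavailable, and one instead bit-packs the $O(1/\varepsilon_0)$ frequency estimates into $O(\log_w(1/\varepsilon_0))$ machine words, which is precisely the source of the extra $\log_w(1/\varepsilon_0)$ factor in Theorems~\ref{thm:3d} and~\ref{thm:2d}. Finally, the $d=2$ statement follows by the same argument, using 2D dominance (lower-left-quadrant) shallow cuttings and the two-dimensional bound of Theorem~\ref{thm:eapp} in place of the three-dimensional one.
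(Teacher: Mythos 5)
Your overall skeleton (locate the query in a geometric hierarchy of dominance shallow cuttings, answer tiny queries of size $O(1/\varepsilon_0)$ explicitly from the conflict list, and keep for each cell a list of $O(1/\varepsilon_0)$ candidate colors, exploiting $|\gamma\cap P|=\Theta(|\sS_\cell|)$) matches the paper's framework (its base cutting and ``testing'' cuttings do exactly this). But the heart of the theorem --- estimating the frequency of each candidate color with \emph{additive} error $O(\varepsilon_0 k)$ in $O(1/\varepsilon_0)$ total time and $O(n)$ total space --- is where your argument breaks. Your per-color gadget is a shallow-cutting hierarchy at geometric levels $2^j\ge\varepsilon_0 m$ with $O(1)$ words per cell; locating $q$ in such a hierarchy only pins the color-$c$ count down to a constant multiplicative factor, i.e.\ additive error $\Omega(k_c)$, which for a heavy color with $k_c=\Theta(k)$ is $\Theta(k)\gg\varepsilon_0 k$. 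An AHHS must report frequencies to within $\varepsilon_0 k$ even when the frequency itself is close to $k$, so a constant-factor counting structure cannot be the per-color primitive; repairing this with levels in arithmetic progression of step $\varepsilon_0 m$ costs an extra $\log\frac{1}{\varepsilon_0}$ factor in space and still leaves the second problem below. (This is precisely why the paper does not count each color by point location at all: its base structure, Theorem~\ref{thm:base}, answers one query that returns \emph{all} color frequencies at once as a bit-packed frequency vector, built from the exact type-2 structure of Theorem~\ref{thm:partition} applied to per-color $\varepsilon_i$-approximations with $\varepsilon_i=\varepsilon|P|/|P_i|$.)

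The second gap you flag yourself: batching the $O(1/\varepsilon_0)$ point locations. You assert that dominance shallow cuttings at consecutive levels ``refine one another laminarly'' so the union of all per-color hierarchies has complexity $O(1/\varepsilon_0)$, but shallow cuttings at different levels are not in general nested, and your hierarchies are built on \emph{different} point sets (one per color), so there is no reason their cells align; nothing in the paper supplies such a laminarity property, and your proof of the main obstacle is therefore an unsupported claim rather than an argument. You also misattribute the role of bit-packing: the paper uses the packed frequency-vector machinery for dominance as well (the query reads the $O(1/\varepsilon_0)$ candidate colors out of the packed output via a dictionary and the extract operation of Lemma~\ref{lem:compact}); the only thing special about dominance is the smaller $\varepsilon$-approximation size $O(\varepsilon^{-1}\log^{3.5}\varepsilon^{-1})$ from Theorem~\ref{thm:eapp}, which lets the number $\lambda$ of ``lower level'' cuttings be a constant and hence drops the space from $O(n\log_w\frac{1}{\varepsilon_0})$ to the optimal $O(n)$. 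As it stands, your proposal pushes essentially all of the paper's technical content into two asserted steps, one of which is quantitatively false and the other unproven.
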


\subsection{Base Solution}
The above results are built from a \textit{base solution}, which 
solves the following problem:
\begin{restatable}{problem}{probcoarse}[Coarse-Grained AHHS Queries]\label{pr:coarse}
  Let $P$ be a set of points in $\R^d$, 
  each associated with a color. 
  The problem is to store $P$ in a structure such that
  given a query range $q$, one can estimate the frequencies of colors in $q\cap P$
  with an additive error up to $\varepsilon |P|$ efficiently for some parameter $0<\varepsilon<1$.
\end{restatable}

Note that here we allow more error (since the error is defined in the entire point set).
To solve Problem~\ref{pr:coarse},
one crucial component we need is a better (exact) type-2 color
counting structure for halfspaces.
We combine several known techniques in a novel way with bit-packing
to get the following theorem. See Appendix~\ref{sec:type-2} for details.

\begin{restatable}{theorem}{partition}\label{thm:partition}
    Given an integer parameter $F$, a set $P$ of $n$ points in $\R^d$ where 
    each point is assigned a color from the set $[F]$, 
	one can build a linear-sized data structure, such that given a query simplex $q$,
	it can output the number of times each color appears in $P \cap q$ in total time  
	$\max\{O(n^{(d-1)/d}), O(n^{(d-1)/d} F^{1/d}/w^\alpha)\}$, for some appropriate
	constant $\alpha$ and word size $w$.
\end{restatable}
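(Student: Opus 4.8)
The plan is to combine the standard partition-tree machinery for simplex range searching with a bit-packed representation of the per-node color-frequency vectors, so that the dominant term $n^{(d-1)/d}$ in the query time is paid only once per visited cell while the actual accumulation of $F$ color counts across all visited cells is spread over machine words of width $w$. First I would build, using the partition theorem, a partition tree $\T$ on $P$ in $\R^d$ of linear size and constant branching factor, in which every query simplex $q$ crosses the boundaries of only $O(n^{(d-1)/d})$ cells, and the subset of $P$ inside $q$ is recovered as a disjoint union of $O(n^{(d-1)/d})$ canonical subsets hanging off the tree (the ``completely inside'' subtrees). For each canonical node $v$ I precompute the frequency vector $\mathbf f_v \in \{0,\dots,|P_v|\}^F$ recording, for each color $c \in [F]$, how many points of color $c$ lie in $P_v$. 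The query answer is then the coordinatewise sum $\sum_v \mathbf f_v$ over the $O(n^{(d-1)/d})$ canonical nodes $v$.

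The key step is to avoid storing and adding these vectors naively, which would cost $\Omega(n^{(d-1)/d} F)$ time and superlinear space. Instead, at each node $v$ I store $\mathbf f_v$ in a \emph{packed} form: since any single coordinate of $\mathbf f_v$ is at most $n$, it fits in $O(\log n) = O(w)$ bits, but more importantly I only ever need to add $O(n^{(d-1)/d})$ such vectors together, so each coordinate of the final sum is at most $n$ as well; allocating $b = \Theta(\log n)$ bits per color and packing $\Theta(w/b) = \Theta(1)$ colors per word is too weak, so instead I partition the color set into $O(F / (w^\alpha))$ groups and, within a group, pack $\Theta(w^\alpha)$ colors into a single word using $\Theta(w^{1-\alpha}\cdot\text{(something)})$ — more precisely, I choose the number of bits per color to be just $\Theta(\log n^{(d-1)/d}/ \text{(number of colors actually touched)})$, so that the entire running sum over the $O(n^{(d-1)/d})$ canonical nodes of a single query never overflows the allotted field, and carries between adjacent color fields never occur. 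Two packed vectors can then be added componentwise with a single machine-word addition per group, giving $O(F/w^\alpha)$ time per canonical node and hence $O(n^{(d-1)/d} F / w^\alpha)$ total, which after adding the $O(n^{(d-1)/d})$ cost of descending the tree and handling boundary cells yields the claimed $\max\{O(n^{(d-1)/d}),\, O(n^{(d-1)/d}F^{1/d}/w^\alpha)\}$ bound. (The $F^{1/d}$ rather than $F$ arises because one builds the tree only down to subproblems of size roughly $F$ and treats smaller subproblems by a coarser grouping; I would tune the recursion to balance the two regimes.)

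To control space I would not store $\mathbf f_v$ explicitly at every node but only a \emph{difference} vector relative to the parent (so the fields are sparse and the total description length telescopes to $O(n)$ words across the tree), reconstructing $\mathbf f_v$ on the fly along the root-to-$v$ path; alternatively, store the packed vectors only at a carefully chosen $O(n / F)$ fraction of nodes and handle the rest by brute force, which again gives linear space. The nonstandard word operations this requires — packing/unpacking fixed-width fields, parallel field addition, and masking out overflow — are exactly the $O(\tfrac12\log n)$-bit binary operations permitted by the model and implementable by table lookup, so no illegitimate primitives are used.

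The main obstacle I expect is the bookkeeping around overflow: the per-color bit-field width must be chosen large enough that the running sum over all $O(n^{(d-1)/d})$ canonical cells of a query never overflows into the neighbouring color's field, yet small enough that the packing factor is genuinely $w^{\Omega(1)}$ so that the $F/w^\alpha$ savings materialize; reconciling these two constraints (together with making the difference-encoding trick preserve the packed layout so that sums of differences along a path still add correctly in parallel) is the delicate part, and it is where the precise value of the constant $\alpha$ gets pinned down.
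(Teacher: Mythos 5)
Your high-level skeleton (partition tree, canonical subsets, packed frequency vectors, cutting the recursion off at subproblems of size about $F$) matches the paper's, but the quantitative core is missing and the packing scheme as described does not work. With fixed-width fields you face a dichotomy you never resolve: a frequency stored at a node of size $s$ can be as large as $s$, so near the top of the tree each field needs $\Theta(\log n)$ bits and you get no compression, while allocating only $\Theta\bigl(\log n^{(d-1)/d}/(\text{number of colors touched})\bigr)$ bits per color, as you propose, overflows already for a single large node, regardless of how many vectors a query adds up. The paper avoids this with a \emph{variable-length} compact encoding (Lemma~\ref{lem:compact}): the frequency vector of a node with $s$ points is stored in $O\bigl(1+\frac{F}{w}\log\frac{2(s+2F)}{F}\bigr)$ words via an AM--GM argument, so the per-node addition cost shrinks with the node size; the recursion is stopped at size $\Theta(F/w)$, and the resulting recurrence is a geometric series dominated by the bottom level, where $O\bigl((nw/F)^{(d-1)/d}\bigr)$ canonical nodes each cost $O(F/w)$ words --- which is exactly where the factor $F^{1/d}/w^{1/d}$ (i.e.\ the constant $\alpha\approx 1/d$) comes from. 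In your write-up this derivation is replaced by the assertion that the $F^{1/d}$ ``arises because one builds the tree only down to subproblems of size roughly $F$'' and that you ``would tune the recursion to balance the two regimes''; but with the per-node cost $O(F/w^{\alpha})$ you actually claim, the total is $O(n^{(d-1)/d}F/w^{\alpha})$, a factor $F^{(d-1)/d}$ worse than the theorem. That balancing \emph{is} the theorem, not a detail to be deferred.

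Two further gaps. The space argument is unsubstantiated: parent-relative difference vectors do not obviously telescope to $O(n)$ words (a parent--child difference can involve essentially all $F$ colors and many points), whereas in the paper linear space falls out of the same compact encoding plus the $\Theta(F/w)$ cutoff (the bottom level has $O(nw/F)$ nodes of $O(F/w)$ words each). Also, a single constant-fanout tree does not suffice to get the clean $O(n^{(d-1)/d})$ crossing bound together with a possibly very shallow recursion: the paper uses Chan's refinement (Theorem~\ref{thm:Chan}) when $F/w < n/\log^4 n$, and must switch to Matou\v sek's original partition theorem (Theorem~\ref{thm:Mat}) with fanout $w^{\delta}$ when $F/w$ is so large that only $O(1)$ levels exist, because Chan's extra crossing-number terms are dominated only when the tree is deep. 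None of this refutes your overall strategy, but as written the proposal does not establish the claimed bound.
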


The main idea for getting a base solution is relatively straightforward.
We group colors according to their frequencies where
each group contains colors of roughly equal frequencies.
However, we have to be careful about the execution and the analysis is a bit tricky. 
For example, if we place all the points in one copy of the data structure of Theorem~\ref{thm:partition},
then we will get a sub-optimal result.
However, by grouping the points correctly, and being stringent about the analysis, we can obtain
the following.
\begin{restatable}{theorem}{thmbase}\label{thm:base}
  For $d\ge 3$,
  Problem~\ref{pr:coarse} for simplex queries (the intersection of $d+1$ halfspaces)
  can be solved with $O(X)$ space for $X=\min\{|P|, \varepsilon^{-\frac{2d}{d+1}}\}$
  and a query time of 
\[
	O\left( \frac{|P|^{1-\frac{2}{d-1}}}{w^{\alpha}\varepsilon^{\frac{2}{d-1}}} \right)+ O\left(X^{\frac{d-1}{d}}  \right)
\]
  where $w$ is the word-size of the machine and $\alpha$ is some positive constant.
\end{restatable}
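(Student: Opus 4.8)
The starting point is that only the \emph{heavy} colours of $P$ — those with frequency at least $\varepsilon|P|$ in $P$ — ever need to be reported: every lighter colour contributes fewer than $\varepsilon|P|$ points to every query, so it may be declared to have frequency $0$; in particular there are at most $1/\varepsilon$ relevant colours. If $|P|=O(1/\varepsilon)$ then $\varepsilon|P|=O(1)$, so the allowed error forces exact frequencies, and we simply store $P$ in one instance of Theorem~\ref{thm:partition} with its (at most $|P|$) colours, whose query time $O(|P|^{(d-1)/d})+O(|P|^{(d-1)/d}|P|^{1/d}/w^{\alpha})=O(X^{(d-1)/d})+O(|P|/w^{\alpha})$ already meets the claim because $|P|\le\varepsilon^{-1}$ gives $|P|\le|P|^{1-2/(d-1)}\varepsilon^{-2/(d-1)}$. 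From now on assume $|P|>c/\varepsilon$ for a suitable constant $c$.

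For the remaining case the idea is to group the heavy colours by frequency and, inside each group, replace every colour class by a coarse $\varepsilon$-approximation whose precision is tuned to that group. For $j=0,1,\dots,\lceil\log_2(1/\varepsilon)\rceil$ let $G_j$ be the colours with frequency in $[2^j\varepsilon|P|,\,2^{j+1}\varepsilon|P|)$; since such a colour occupies at least $2^j\varepsilon|P|$ points, $|G_j|\le 2^{-j}/\varepsilon$, and writing $n_j$ for the number of points of colour in $G_j$ we get $|G_j|=\Theta(n_j/(2^j\varepsilon|P|))$ and $\sum_j n_j\le|P|$. For $i\in G_j$ let $A_i\subseteq P_i$ be a $(2^{-j-1})$-approximation of $(P_i,\text{simplices})$, of size $O(\min\{|P_i|,2^{2jd/(d+1)}\})$ by Theorem~\ref{thm:eapp} (the simplex bound matching the halfspace one up to a constant depending on $d$). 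Put $A^{(j)}=\bigcup_{i\in G_j}A_i$, store the exact type-2 colour-counting structure of Theorem~\ref{thm:partition} on $A^{(j)}$, and keep the ratios $|P_i|/|A_i|$. Let $j_0\approx\frac{d+1}{d-1}\log_2(\varepsilon|P|)$ be the largest $j$ for which $2^{2jd/(d+1)}\lesssim 2^j\varepsilon|P|$, i.e.\ the last group whose colour classes are genuinely contracted; for $j>j_0$ each $A_i$ equals $P_i$, so rather than one structure per such group we build a single merged Theorem~\ref{thm:partition} instance on $\bigcup_{j>j_0}P^{(j)}$, carrying $\sum_{j>j_0}|G_j|=O(2^{-j_0}/\varepsilon)$ colours. (When $|P|\ge\varepsilon^{-2d/(d+1)}$ one checks $j_0\ge\log_2(1/\varepsilon)$, so all groups are contracted and no merged instance is needed.) Since $|A^{(j)}|\le|G_j|\cdot O(2^{2jd/(d+1)})=O(\varepsilon^{-1}2^{j(d-1)/(d+1)})$, these upper bounds form a geometric progression topping out at $O(\varepsilon^{-2d/(d+1)})$; combined with $|A^{(j)}|\le n_j$ this gives total space $O(\min\{|P|,\varepsilon^{-2d/(d+1)}\})=O(X)$.

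To answer a query simplex $\gamma$ we query every $A^{(j)}$-structure and the merged structure, scale the reported count of colour $i$ by $|P_i|/|A_i|$ (by $1$ for the merged instance), and report frequency $0$ for colours in no $G_j$. For $i\in G_j$ the $\varepsilon$-approximation property gives $\bigl|\,|A_i\cap\gamma|\,|P_i|/|A_i|-|P_i\cap\gamma|\,\bigr|\le 2^{-j-1}|P_i|<\varepsilon|P|$, so every estimate is within the required additive error. The query time is $\sum_{j\le j_0}(O(|A^{(j)}|^{(d-1)/d})+O(|A^{(j)}|^{(d-1)/d}|G_j|^{1/d}/w^{\alpha}))$ plus the merged structure's cost. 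The first summands form a geometric series dominated by its last term; together with the merged instance's $O(|P|^{(d-1)/d})$ this totals $O(X^{(d-1)/d})$. For the $w^{\alpha}$ part, bounding $|A^{(j)}|\le O(|G_j|2^{2jd/(d+1)})$ (valid for $j\le j_0$) and using $|G_j|=\Theta(n_j/(2^j\varepsilon|P|))$ makes the $j$-th summand $O(\frac{n_j}{\varepsilon|P|}2^{j(d-3)/(d+1)}/w^{\alpha})$; for $d=3$ the exponent vanishes and $\sum_j\frac{n_j}{\varepsilon|P|}\le\frac1\varepsilon$, while for $d\ge 4$ the coefficient is increasing in $j$, so the sum is at most $\frac{1}{\varepsilon|P|}(\max_j 2^{j(d-3)/(d+1)})\sum_j n_j$, and plugging $2^{j_0}\approx(\varepsilon|P|)^{(d+1)/(d-1)}$ (or $2^{\log_2(1/\varepsilon)}=1/\varepsilon$ in the fully contracted regime) yields $O(|P|^{1-2/(d-1)}\varepsilon^{-2/(d-1)}/w^{\alpha})$; a direct computation shows the merged instance contributes the same order.

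The crux — and the only genuinely delicate part — is this accounting. The naive alternatives are exactly those the theorem improves on: dumping all heavy colours into one Theorem~\ref{thm:partition} instance over a single global $\varepsilon$-approximation carries $1/\varepsilon$ colours at once and loses a factor $\varepsilon^{-1/d}$ in the query time, whereas keeping frequency groups but bounding $\sum_j|A^{(j)}|^{(d-1)/d}$ merely by concavity of $x\mapsto x^{(d-1)/d}$ introduces spurious $\mathrm{polylog}(1/\varepsilon)$ factors. What makes the clean bound go through is (i) choosing the per-group error to be exactly $\Theta(2^{-j})$ so the $|A^{(j)}|$ grow geometrically, (ii) exploiting $\sum_j n_j\le|P|$ together with the sign of the exponent $(d-3)/(d+1)$ in the per-group $w^{\alpha}$ cost, and (iii) the threshold $j_0$ merge, which prevents the ``already full-resolution'' groups from each being charged the $1/\varepsilon$-colour overhead. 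Everything else — the boundary regimes of $|P|$, the case $d=3$, the constant-factor blow-up of $\varepsilon$-approximations from halfspaces to simplices — is routine once this is set up.
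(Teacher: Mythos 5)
Your proposal is correct and follows essentially the same route as the paper's proof: you partition the relevant colors into geometric frequency bands, build per-color $\varepsilon$-approximations with error calibrated to the color's frequency (so each estimate has additive error $\varepsilon|P|$), store each band in its own instance of Theorem~\ref{thm:partition}, handle the largest (``big'') colors exactly in one merged instance, and sum the costs via geometric series and the sign of the exponent $\frac{d-3}{d+1}$ --- your threshold $j_0$ coincides with the paper's big/small cutoff $(\varepsilon|P|)^{\frac{2d}{d-1}}$ and your bounds match the paper's space and query computations. The only differences are cosmetic (indexing the bands from the bottom rather than via the paper's $\delta$-small parametrization, and an explicit small-$|P|$ base case the paper does not need).
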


The main challenge is that we have two cases for the size of an $\varepsilon$-approximation on $n$ 
points since it is bounded by
$\min\left\{n,O(\varepsilon^{-\frac{2d}{d+1}}) \right\}$
and also two cases for the query time of Theorem~\ref{thm:partition}. 
However, the main idea is that since the total error budget is $\varepsilon|P|$,
we can afford to pick a larger error parameter $\varepsilon_i=\frac{\varepsilon|P|}{|P_i|}$,
where $P_i$ is the set of points with color $i$. 
The details are presented in Appendix~\ref{sec:hh}. 

\subsection{Solving AHHS Queries}
We first transform the problem into the dual space. So the point set $P$ becomes a set $H$ of hyperplanes
and any query halfspace becomes a point $q$.
We want to find approximate heavy hitters of hyperplanes of $H$ below $q$.
Here, we remark that obtaining a data structure with $O(n \log {\frac{1}{\varepsilon_0}})$ space is not too difficult:
build a hierarchy of shallow cuttings covering level $2^i/\varepsilon_0$ for $i=0,1,\cdots,\log (\varepsilon_0n)$
of the arrangement of $H$.
For each shallow cutting cell $\cell$, we build the previous base structure for the conflict list $\sS_\cell$
for a parameter $\varepsilon=\varepsilon_0/c$ for a big enough constant $c$.
Then, observe that for queries below level $\varepsilon_0^{-1}$, 
we can spend $O(\log n + \frac{1}{\varepsilon_0})$ time to find all the
hyperplanes passing below the query and answer the AHHS queries explicitly and 
also for shallow cutting levels above level $\varepsilon_0^{-3/2}$,
the total amount of space used by the base solution is $O(n)$. 
Thus, it turns out that the main difficulty lies in handling the levels between $\varepsilon_0^{-1}$ and 
$\varepsilon_0^{-3/2}$.

To reduce the space to $O(\log_w\frac{1}{\varepsilon_0})$, 
recall that in the query time of the base structure, we have two terms $O(1/(\varepsilon_0 w^{\alpha}))$ and $O(X^{2/3})$. 
Observe that we can afford to set $\varepsilon$ to be roughly $\varepsilon_0/w^{\alpha}$ and the first term will still be 
$O(\varepsilon_0^{-1})$
because we are at level below $\varepsilon_0^{-3/2}$, we have $X<\varepsilon_0^{-3/2}$ 
and so the second term will always be $O(\varepsilon_0^{-1})$!
The effect of setting $\varepsilon=\varepsilon_0/w^{\alpha}$ is that now the base structure we built for a cell
can output frequencies with a factor of $w^{\alpha}$ more precision, meaning it can be used for a factor of $w^{\alpha}$ many more levels.
So  we only need to build the base structure for shallow cuttings built for a factor of $w^{\alpha}$!
This gives us the $O(n\log_w\frac{1}{\varepsilon_0})$ space bound.
Of course, here the output has size $O(\varepsilon^{-1})=O(w^{\alpha}\varepsilon_0^{-1})$ and we cannot afford to examine all these colors.
The final ingredient here is that we can maintain a list of $O(\varepsilon_0^{-1})$ candidate colors using shallow cuttings built for a factor of $2$.

We remark that although the tools are standard,
the combination of the tools and the analysis are quite nontrivial.
Also when we have $\Theta(1/\varepsilon_0)$ heavy hitters, our query time is optimal.
It is an interesting open problem if the query time can be made output sensitive.

\section{Approximate Quantile Summary Queries}
\label{sec:quantiles}

In this section, we solve Problem~\ref{pr:aqs}.
We first show a general technique that uses our solution to AHHS queries solution to obtain an efficient
solution for AQS queries. 
We show that for 3D halfspace and dominance ranges we can convert the solution for AHHS queries to a solution 
for AQS queries with an $O(\log^2{\frac{1}{\varepsilon})}$ blow up in space and time.
Then in Section~\ref{sec:domq}, we present an optimal solution for dominance ranges
based on a different idea.

First, we show how to solve AQS queries using the AHHS query solution.
We describe the data structure for halfspaces, since as we have mentioned before, the same can be applied to dominance
ranges in 3D as well.
The high level idea of our structure is as follows:
We first transform the problem into the dual space.
This yields the problem instance where we have $n$ weighted hyperplanes and given a query point $q$,
we would like to extract an approximate quantile summary for the hyperplanes that pass below $q$. 
To do this, we build hierarchical shallow cuttings.
For each cell in each cutting,
we collect the hyperplanes in its conflict list
and then divide them into $O(\frac{1}{\varepsilon_0})$ groups
according to the increasing order of their weights.
Given a query point in the dual space,
we first find the cutting and the cell containing it,
and then find an approximated rank of each group, within the subset below the query. 
This is done by generating an AHHS problem
instance and applying Theorem~\ref{thm:3d}.
We construct the instance in a way such that
the rank approximated will only have error small enough
such that we can afford to scan through the groups
and pick an arbitrary hyperplane in corresponding groups
to form an approximate $\varepsilon_0$-quantile summary.

\subsection{The Data Structure and the Query Algorithm}
We dualize the set $P$ of $n$ input points
which gives us a set $H=\overline{P}$ of $n$ hyperplanes.
We then build a hierarchy of shallow cuttings where
the $i$-th shallow cutting, $\C_i$, is a $k_i$-shallow cutting where $k_i= \frac{2^i}{\varepsilon_0}$, for $i=0,1,2,\cdots,\log (\varepsilon_0n)$.
Consider a cell $\cell$ in the $i$-th shallow cutting and its
conflict list $\sS_{\cell}$.
Let $\epsilon=\frac{\varepsilon_0}{c}$ for a big enough constant $c$.
We partition $\sS_\cell$ into 
$t=\frac{1}{\epsilon}$ groups
$G_1,G_2,\cdots,G_t$ sorted by weight, meaning, the weight
of any hyperplane in $G_j$ is no larger than 
that of any hyperplane in $G_{j+1}$ for $j=1,2,\cdots,t-1$.

For each group $G_j$,
we store the smallest weight among the hyperplanes it contains, as its representative.
To make the description shorter,
we make the simplifying assumption that
$t$ is a power of $2$ (if not, we can add some dummy groups). 
We arrange the groups $G_j$ as the leaves of a 
balanced binary tree $\mathcal{T}$ and
let $V(\T)$ be the set of vertices of $\T$.
Next, we build the following set $A_\Delta$ of colored hyperplanes, associated with $\Delta$:
Let $\varepsilon' = \frac{\epsilon}{\log^2t}$.
For every vertex $v \in V(\T)$, let $G_v$ to be the set of all the hyperplanes contained in the subtree of
$v$; we add an $\varepsilon'$-approximation, $E_v$, of $G_v$ to $A_\Delta$ with color $v$.
Using Theorem~\ref{thm:3d}, we store the points dual to hyperplanes in $A_\Delta$ in a data structure $\Psi_\Delta$ for
AHHS queries with error parameter $\varepsilon'$.
This completes the description of our data structure.

\para{The query algorithm.}
A given query $q$ is answered as follows.
Let us quickly go over the standard parts:
We consider the query in the dual space and thus $q$ is considered to be a point. 
Let $k$ be the number of hyperplanes passing below $q$. 
Observe that by Theorem~\ref{thm:approxrc}, we can find a $(1+\alpha)$ factor approximation, $k^*$,
of $k$ in $O(\log n)$ time for any constant $\alpha$, using a data structure that consumes linear space. 
This allows us to find the first $k_i$-shallow cutting $\C_i$ with $k_{i-1}< k\le k_i$.
The cell $\cell\in\C_i$ containing $q$ can also be found in $O(\log n)$ time using a standard
point location data structure (e.g., see~\cite{ac09}).

The interesting part of the query is how to handle the query after
finding the cell $\cell$.
Let $H_q$ be the subset of $H$ that lies below $q$.
Recall that $\sS_\cell$ is the subset of $H$ that intersects $\cell$.
The important property of $\cell$ is that $H_q \subset \sS_\cell$
and also $|\sS_\cell| = O(|H_q|) = O(k)$.

We query the data structure $\Psi_\cell$ built for $\cell$
to obtain a list of colors and their approximate counts where the additive error
in the approximation is at most $\varepsilon' |A_\cell|$.
To continue with the description of the query algorithm, let us use the notation
$g_j$ to denote the subset of $G_j$ that lies below $q$,
and let $g = \cup_{j=1}^t g_j$ and thus $|g| = k$.

Note that while the query algorithm does not have direct access to $g$, or $k$,
we claim that using the output of the data structure $\Psi_\cell$, we can calculate the
approximate rank of the elements of $g_i$ within $g$ up to an additive error 
of $\varepsilon_0 k$.
Again, we can use tree $\mathcal{T}$ to visualize this process.
Recall that in $\Psi_\cell$, every vertex $v \in V(\T)$ represents a unique color 
in the data structure $\Psi_\cell$ and the data structure returns an AHHS summary with error parameter $\varepsilon'$.
This allows us to estimate the number of elements of $E_v$ that pass below $q$ with error $\varepsilon' |A_\cell|$
and since $E_v$ is an $\varepsilon'$-approximation of $G_v$, this 
allows us to estimate the number of elements of $G_v$ that pass below $q$ with error at most $2\varepsilon'|A_\cell|$.
Consider the leaf node that represents $g_j \subset G_j$ and the 
path $\pi$ that connects it to the root of $\T$. 
The approximate rank, $r_j$, of $g_j$ is calculated as follows.
Consider a subtree with root $u$ that hang to the left of the path $\pi$
(as shown in Figure~\ref{fig:app-rank}). 
If color $u$ does not appear in the output of the AHHS query, then we can conclude that
at most $2\varepsilon' |A_\cell|$ of its hyperplanes pass below $q$ and in this case we do nothing. 
If it does appear in the output of the AHHS query, then we know the number of hyperplanes
in its subtree that pass below $q$ up to an additive error of $2\varepsilon' |A_\cell|$ and in this
case, we add this estimate to $r_j$.
In both cases, we are off by an additive error of $2\varepsilon' |A_\cell|$.
We repeat this for every subtree that hangs to the left of $\pi$.
The number of such subtree is at most $\log t$ and thus the total error is at most
$2\varepsilon' |A_\cell| \log t$.
Now observe that
\[
    2\varepsilon' |A_\cell| \log t = 2\frac{\epsilon}{\log^2t} \cdot \log t |\sS_\cell| \cdot \log t = O(\varepsilon k) = O\left( \frac{\varepsilon_0 k}{c} \right) \le \varepsilon_0 k 
\]
which follows by setting $c$ large enough and observing the fact that
$|A_\cell| \le \log t |\sS_\cell|$ since every
hyperplane in $\sS_\cell$ is duplicated $\log t $ times. 

\begin{figure}[h]
	\centering
	\includegraphics[width=0.3\textwidth]{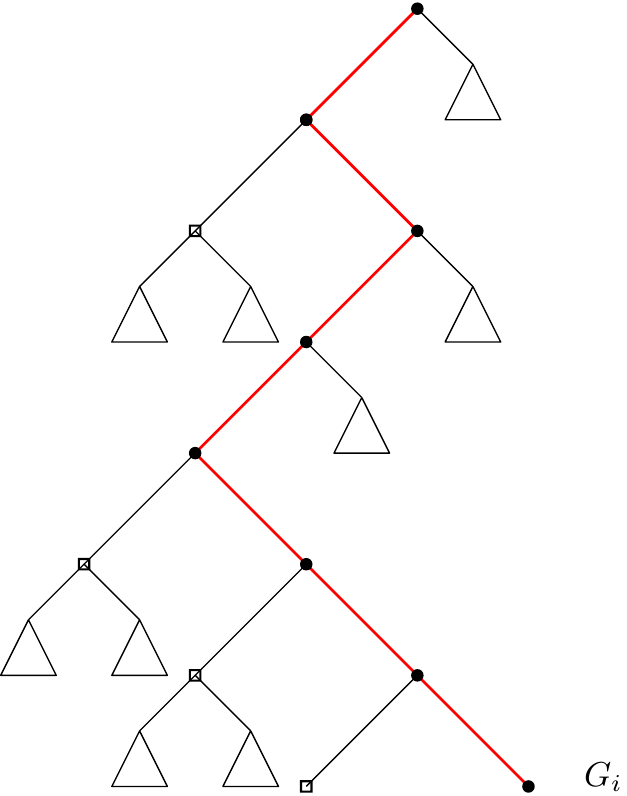}
	\caption{Compute the Approximate Rank of a Group:
	The approximated rank of $G_i$ is calculated
	as the sum of all the approximate counts of square nodes.}
	\label{fig:app-rank}
\end{figure}

We are now almost done. 
We just proved that
in each $g_i$, we know the rank of its elements within $g$ up to an additive
error of $\varepsilon_0 k$.
This means that picking one element from each $G_i$ gives us a super-set of an
AQS; in the last stage of the query algorithm we simply prune the unnecessary
elements as follows:
We scan all the leave in $\mathcal{T}$
from left to right, i.e., consider the group $G_j$ 
for $j=1$ to $t$ and compute the quantile summary in a straightforward fashion.
To be specific, we initialize a variable $j'=0$ and then 
consider $G_j$, for $j=1$ to $t$.
The first time $r_j$ exceeds a quantile boundary, i.e., $r_j \ge j'\varepsilon_0 k^*$,
we add the hyperplane with the lowest weight in $G_j$ 
to the approximate $\varepsilon_0$-quantile summary, 
and then increment $j'$. 

\subsubsection{Analysis}
Based on the previous paragraph, the correctness is established.
Thus, it remains to analyze the space and query complexities. 
We start with the former. 

\para{Space Usage.}
Consider the structure $\Psi_\cell$ built for cell $\cell$ from  a $k_i$-shallow cutting $\C_i$.
Observe that $\sum_{v \in V(\T)}|G_v| = |\sS_\cell| \log t$ since in the sum every hyperplane will
be counted $\log t$ times.
$E_v$ is an $\varepsilon'$-approximation of $G_v$  and thus
\begin{align}
    |E_v| \le \min\left\{ \varepsilon'^{-3/2},G_v \right\}\label{eq:ev}
\end{align}
which implies
\begin{align}
    |A_\cell| = \sum_{v \in V(\T)} |E_v| \le \min\left\{ \varepsilon'^{-3/2}2t,|\sS_\cell|\log t \right\}\label{eq:acell}
\end{align}
where the first part follows as there are at most $2t$ vertices in $\T$ and the second part follows from (\ref{eq:ev}).
We build an instance of Theorem~\ref{thm:3d} on the set $A_\cell$ which by 
Theorem~\ref{thm:3d} uses $O(|A_\cell|\log_w\frac{1}{\varepsilon'})$ space. 
Assuming $\cell$ belongs to a $k_i$-shallow cutting $\C_i$,
we have $|\sS_\cell| = O(k_i)$ and  
there are $O(n/k_i)$ cells in $\C_i$.
Observe that
\begin{align}
    \sum_{\cell \in \C_i} |A_\cell| =& \sum_{\cell \in \C_i} \min\left\{ \varepsilon'^{-3/2}2t,|\sS_\cell|\log t \right\}= \sum_{\cell \in \C_i} O\left( \min\left\{ \varepsilon'^{-3/2}t,k_i \log t \right\} \right) = \nonumber\\
    &O\left( \min\left\{ \frac{n}{k_i}\varepsilon_0^{-3},n \log\frac{1}{\varepsilon_0} \right\} \right).\label{eq:ci}
\end{align}
Thus, the total space used for $\C_i$ is
\[
    O\left( \min\left\{ \frac{n}{k_i}\varepsilon_0^{-3}\log_w\frac{1}{\varepsilon_0},n \log_w\frac{1}{\varepsilon_0}\log\frac{1}{\varepsilon_0} \right\} \right).\label{eq:ci}
\]

Finally, observe that there can be at most $O(\log\frac{1}{\varepsilon_0})$ levels where the second term dominates;
to be specific, at least when  $k_i$ exceeds $\varepsilon_0^{-4}$, the first term dominates and the total space used by those
levels is $O(n)$ as $k_i$'s form a geometric series.
So the total space usage of our structure is $O(n\log^2\frac{1}{\varepsilon_0}\log_w\frac{1}{\varepsilon_0})$.

\para{Query Time.}
By Lemma~\ref{lem:sc},
we can find the desired cutting cell in time $O(\log n)$.
Next, we query the data structure $\Psi_\cell$ which by 
Theorem~\ref{thm:3d} uses $O(\log n + \varepsilon'^{-1}) = O(\log n 
+ \frac{\epsilon}{\log^2t}) = O(\log n + \frac{1}{\varepsilon_0}\log^2\frac{1}{\varepsilon_0})$ query time. 
Scanning the groups and pruning the output of the data structure $\Psi_\cell$ takes asymptotically smaller
time and thus it can be absorbed in the above expression. 
Therefore, we obtain the following result.

\begin{restatable}{theorem}{thmhqddd}\label{thm:3dhq}
    Given an input consisting of an error parameter $\varepsilon_0$, and a set $P$ of
    $n$ points in $\R^3$ where each point $p\in P$ is associated with a weight
    $w_p$ from a totally ordered universe,
    one can build a data structure that uses
    $O(n\log^2\frac{1}{\varepsilon_0}\log_w\frac{1}{\varepsilon_0})$ space such that
    given any query halfspace $h$,
	it can answer an AQS query with parameter $\varepsilon_0$ in 
	time $O(\log n + \frac{1}{\varepsilon_0}\log^2\frac{1}{\varepsilon_0})$.
\end{restatable}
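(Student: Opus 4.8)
The plan is to reduce each AQS query to a single AHHS query of Theorem~\ref{thm:3d}, localized through a shallow-cutting hierarchy, with the colors handed to the AHHS structure chosen so that their reported frequencies let us reconstruct the rank of each weight-sorted group of the relevant conflict list.

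First I would dualize: $P$ becomes a set $H$ of $n$ hyperplanes and a query halfspace becomes a point $q$, so the task is to output an approximate $\varepsilon_0$-quantile summary of the (weighted) hyperplanes of $H$ below $q$. I would build $k_i$-shallow cuttings $\C_i$ with $k_i=2^i/\varepsilon_0$ for $i=0,\dots,\log(\varepsilon_0 n)$ (Lemma~\ref{lem:sc}), so that at the correct level the cell $\cell\ni q$ has a conflict list $\sS_\cell$ with $H_q\subseteq\sS_\cell$ and $|\sS_\cell|=O(|H_q|)$. Inside each cell I would sort $\sS_\cell$ by weight, cut it into $t=\Theta(1/\varepsilon_0)$ equal-size groups $G_1\le\cdots\le G_t$, record the lightest weight of each group as its representative, hang the groups as the leaves of a balanced binary tree $\T$, and for every node $v$ (with $G_v$ the union of the groups under $v$) take an $\varepsilon'$-approximation $E_v$ of $G_v$ with $\varepsilon'=\Theta(\varepsilon_0/\log^2 t)$, color it with $v$, gather all these colored dual points into a set $A_\cell$, and store $A_\cell$ in the structure $\Psi_\cell$ of Theorem~\ref{thm:3d} with error parameter $\varepsilon'$.

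At query time I would locate $\cell$ and, via Theorem~\ref{thm:approxrc}, obtain a $(1+\alpha)$-approximation $k^*$ of $k=|H_q|$ in $O(\log n)$ time. Querying $\Psi_\cell$ yields, for each color $v$, an estimate of the number of hyperplanes of $E_v$ below $q$ with additive error $\le\varepsilon'|A_\cell|$, hence an estimate of the number of hyperplanes of $G_v$ below $q$ with error $\le 2\varepsilon'|A_\cell|$ (the extra factor from $E_v$ being an $\varepsilon'$-approximation of $G_v$). For each group $G_j$ I would walk the root-to-leaf path to its leaf, add up the estimates of the subtrees hanging to the left of that path (a color missing from the AHHS output contributes at most $2\varepsilon'|A_\cell|$), and obtain an approximate rank $r_j$ of $G_j$ within $g=H_q$. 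The path meets at most $\log t$ left-hanging subtrees, so the accumulated error is at most $2\varepsilon'|A_\cell|\log t$; since every hyperplane of $\sS_\cell$ lies in only $\log t$ of the $E_v$ we have $|A_\cell|\le|\sS_\cell|\log t=O(k\log t)$, so with $\varepsilon'=\Theta(\varepsilon_0/\log^2 t)$ this error is $O(\varepsilon_0 k)$. Finally I would scan the leaves left to right and, the first time $r_j$ crosses a fresh boundary $j'\varepsilon_0 k^*$, output the lightest hyperplane of $G_j$; the resulting sequence is a valid AQS because the three sources of rank error — the $O(\varepsilon_0 k)$ estimation error above, the $O(k/t)=O(\varepsilon_0 k)$ granularity of a group, and the $O(\alpha k)$ slack from using $k^*$ in place of $k$ — are each $O(\varepsilon_0 k)$ with tunable constants, so choosing the constant $c$ in $\epsilon=\varepsilon_0/c$ large and $\alpha$ small makes the total at most $\varepsilon_0 k$, exactly the slack the AQS definition allows between the $(i-1)$- and $(i+1)$-quantiles.

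For the resource bounds I would use $|E_v|\le\min\{\varepsilon'^{-3/2},|G_v|\}$ from Theorem~\ref{thm:eapp}, giving $|A_\cell|\le\min\{2t\,\varepsilon'^{-3/2},\,|\sS_\cell|\log t\}$ and hence $\sum_{\cell\in\C_i}|A_\cell|=O(\min\{(n/k_i)\,\varepsilon_0^{-3},\,n\log(1/\varepsilon_0)\})$; multiplying by the $O(\log_w(1/\varepsilon'))=O(\log_w(1/\varepsilon_0))$ overhead of Theorem~\ref{thm:3d} and summing the geometric series over $i$ — the $n\log(1/\varepsilon_0)$ term beating the other for only $O(\log(1/\varepsilon_0))$ of the levels — gives total space $O(n\log^2(1/\varepsilon_0)\log_w(1/\varepsilon_0))$, while the query time is $O(\log n)$ for point location and approximate counting plus $O(\log n+\varepsilon'^{-1})=O(\log n+\varepsilon_0^{-1}\log^2(1/\varepsilon_0))$ for the one AHHS query, the scan being absorbed. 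I expect the main obstacle to be controlling the compounding of errors — a factor $2$ from replacing $G_v$ by $E_v$, a factor $\log t$ from the left-hanging subtrees, and a further $\log t$ hidden in $|A_\cell|\le|\sS_\cell|\log t$ — all of which must cancel against $\varepsilon'$ to leave an $O(\varepsilon_0 k)$ rank error, while simultaneously $|A_\cell|$ (and thus both the space and the $\varepsilon'^{-1}$ query term) must not grow by more than a polylogarithmic factor in $1/\varepsilon_0$; balancing this against the two regimes of the $\varepsilon$-approximation size bound as $k_i$ sweeps the hierarchy is the delicate part, and it is what pins down the choices $t=\Theta(1/\varepsilon_0)$ and $\varepsilon'=\Theta(\varepsilon_0/\log^2 t)$.
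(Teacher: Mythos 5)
Your proposal is correct and follows essentially the same route as the paper: dualize, build the $2^i/\varepsilon_0$-shallow-cutting hierarchy, split each conflict list into $t=\Theta(1/\varepsilon_0)$ weight-sorted groups hung as leaves of a balanced binary tree, attach an $\varepsilon'$-approximation (with $\varepsilon'=\Theta(\varepsilon_0/\log^2 t)$) to every node as a color, and answer a query with one AHHS query of Theorem~\ref{thm:3d} followed by rank reconstruction over the $\log t$ left-hanging subtrees and a left-to-right pruning scan. The error accounting ($2\varepsilon'|A_\cell|\log t$ with $|A_\cell|\le|\sS_\cell|\log t$), the space bound via $|E_v|\le\min\{\varepsilon'^{-3/2},|G_v|\}$ and the geometric-series argument over levels, and the query-time bound all match the paper's analysis.
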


For the case of 2D, we can just replace $\Psi_\cell$ with the structure in Theorem~\ref{thm:2d},
and we immediately get the following:

\begin{restatable}{theorem}{thmhqdd}\label{thm:2dhq}
    Given an input consisting of an error parameter $\varepsilon_0$, and a set $P$ of
    $n$ points in $\R^2$ where each point $p\in P$ is associated with a weight
    $w_p$ from a totally ordered universe,
    one can build a data structure that uses
    $O(n\log^2\frac{1}{\varepsilon_0}\log\log_w\frac{1}{\varepsilon_0})$ space such that
    given any query halfspace $h$,
	it can answer an AQS query with parameter $\varepsilon_0$ in 
	time $O(\log n + \frac{1}{\varepsilon_0}\log^2\frac{1}{\varepsilon_0})$.
\end{restatable}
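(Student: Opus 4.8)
The plan is to re-run the construction and analysis of Theorem~\ref{thm:3dhq} verbatim, substituting the 2D halfspace AHHS structure of Theorem~\ref{thm:2d} for the 3D one of Theorem~\ref{thm:3d} at every place the AHHS black box is invoked. Concretely: first I would dualize the point set $P\subset\R^2$ to a set $H=\overline P$ of $n$ lines, so that a query halfspace becomes a point $q$ and the AQS query asks for approximate quantiles (by weight) among the lines passing below $q$. Then I would build the same hierarchy of $k_i$-shallow cuttings $\C_i$ with $k_i = 2^i/\varepsilon_0$ for $i=0,\dots,\log(\varepsilon_0 n)$ — note Lemma~\ref{lem:sc} is stated for $\R^3$, but the shallow cutting machinery in 2D is standard and strictly easier, so I can either cite the planar analogue or lift the instance to $\R^3$ via a lifting map; the excerpt already treats 2D uniformly alongside 3D, so I would follow that convention. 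For each cell $\cell$ I partition its conflict list $\sS_\cell$ into $t=1/\epsilon$ weight-sorted groups with $\epsilon=\varepsilon_0/c$, hang them on a balanced binary tree $\T$, take $\varepsilon'$-approximations $E_v$ of each subtree-union $G_v$ with $\varepsilon'=\epsilon/\log^2 t$, color them by $v$, and store the dual points of $A_\cell=\bigcup_v E_v$ in an instance $\Psi_\cell$ of Theorem~\ref{thm:2d} with error $\varepsilon'$.

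The query algorithm is unchanged: locate the right cutting and cell in $O(\log n)$ time (point location plus the approximate range counting of Theorem~\ref{thm:approxrc}, which is stated for 3D but whose 2D counterpart is classical), query $\Psi_\cell$, then for each leaf group $G_j$ reconstruct its approximate rank $r_j$ by summing the returned counts of the $\le\log t$ left-hanging subtrees along the root path, incurring total additive error $2\varepsilon'|A_\cell|\log t$; the same one-line computation as in the 3D case gives this is $O(\varepsilon k)\le\varepsilon_0 k$ since $|A_\cell|\le |\sS_\cell|\log t$. Finally scan the leaves left to right and emit the minimum-weight line of a group each time $r_j$ crosses a multiple of $\varepsilon_0 k^*$. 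Correctness is then identical to Theorem~\ref{thm:3dhq}.

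The only genuine difference is in the space accounting, and this is where the $\log\log_w$ factor appears instead of $\log_w$. The bound $|E_v|\le\min\{\varepsilon'^{-3/2}, |G_v|\}$ used in~(\ref{eq:ev}) is the 3D $\varepsilon$-approximation size; in 2D Theorem~\ref{thm:eapp} gives $\varepsilon$-approximations of size $O(\varepsilon^{-4/3})$ for halfspaces, so I would replace $\varepsilon'^{-3/2}$ by $\varepsilon'^{-4/3}$ throughout~(\ref{eq:ev})–(\ref{eq:ci}); this only improves the polynomial dependence on $1/\varepsilon_0$ and still leaves $O(\log\frac1{\varepsilon_0})$ levels where the $n\log\frac1{\varepsilon_0}$ term dominates, summing to $O(n\log^2\frac1{\varepsilon_0})$ before the AHHS overhead. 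The decisive change is that Theorem~\ref{thm:2d} stores $A_\cell$ in $O(|A_\cell|\log\log_w\frac1{\varepsilon'})$ space rather than $O(|A_\cell|\log_w\frac1{\varepsilon'})$, and since $\log\frac1{\varepsilon'}=\Theta(\log\frac1{\varepsilon_0})$ this multiplies the $O(n\log^2\frac1{\varepsilon_0})$ bound by $O(\log\log_w\frac1{\varepsilon_0})$, yielding the claimed $O(n\log^2\frac1{\varepsilon_0}\log\log_w\frac1{\varepsilon_0})$. For the query time, Theorem~\ref{thm:2d} has the same optimal $O(\log n + 1/\varepsilon')$ bound as Theorem~\ref{thm:3d}, so the $O(\log n + \frac1{\varepsilon_0}\log^2\frac1{\varepsilon_0})$ query time carries over with no change.

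I do not anticipate a real obstacle here — the argument is a transcription — but the one point to be careful about is making sure every geometric primitive invoked (shallow cuttings, point location, approximate range counting, $\varepsilon$-approximations) is quoted in its correct 2D form, since the excerpt states several of these tools only for $\R^3$; the safest route, and the one I would actually write up, is to note that each of these has a well-known planar analogue with the same or better bounds, so that the entire derivation of Theorem~\ref{thm:3dhq} goes through mutatis mutandis with the single substitution $\log_w\mapsto\log\log_w$ coming from Theorem~\ref{thm:2d}.
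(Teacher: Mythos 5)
Your proposal is correct and follows the paper's own route: the paper proves Theorem~\ref{thm:2dhq} exactly by taking the construction and analysis of Theorem~\ref{thm:3dhq} and swapping the AHHS black box $\Psi_\cell$ from Theorem~\ref{thm:3d} to Theorem~\ref{thm:2d}, so the only change in the accounting is the $\log_w\frac{1}{\varepsilon_0}\mapsto\log\log_w\frac{1}{\varepsilon_0}$ factor (the 2D approximation size $O(\varepsilon^{-4/3})$ and the planar versions of the geometric primitives only help, as you note). Your write-up is a faithful, slightly more detailed transcription of that argument.
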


\subsection{Dominance Approximate Quantile Summary Queries}
\label{sec:domq}
Now we turn our attention to dominance ranges.
We will show a structure similar to that for halfspace queries.
The main difference is that we now
use exact type-2 color counting as an auxiliary structure to estimate the rank of each group.
This saves us roughly $\log^2\frac{1}{\varepsilon_0}$ factors for both space and query time
and so we can answer quantile queries in the optimal $O(\log n + \frac{1}{\varepsilon_0})$ time.
To reduce the space to linear, we need more ideas.
We first present a suboptimal but simpler structure
to demonstrate our main idea.
Then we modify this structure to get the desired optimal structure.
We use shallow cuttings in the primal space.

\subsubsection{A Suboptimal $O(n\log\log\frac{1}{\varepsilon_0})$ Space Solution}
We first describe a data structure that solves the dominance AQS problem
with $O(n\log\log\frac{1}{\varepsilon_0})$ space and the optimal $O(\log n + \frac{1}{\varepsilon_0})$ query time.

\para{Rank-Preserving Approximation for Weighted Points.}
Let $S$ be a weighted point set where every point has been assigned a
weight from a totally ordered universe. 
Let $r_S(p)$ be the rank of a point $p$ in the set $S$.
Consider a geometric set system $(P, \sD)$,
where $P$ is a set of weighted points in $\R^3$ and $\sD$ is a family of subsets of $P$
induced by 3D dominance ranges.
We mention a way to construct a sample $A$ for $P$ and a parameter $\epsilon$ such that
\begin{align}
    \left|\frac{r_{P\cap\rD}(p)}{|P|} - \frac{r_{A\cap\rD}(p)}{|A|}\right|\le \epsilon \label{eq:neps}
\end{align}
for any point $p\in P$ and any range $\rD\in\sD$.
First note that taking an $\epsilon$-approximation for $P$ does \textbf{not} work
since it does not take the weights of $P$ into consideration.
Our simple but important observation is that we can lift the points $P$ into 4D by
adding their corresponding weights as the fourth coordinate.
Let us call this new point set $P'$ and let $(P', \sD')$ be the set system in 4D induced by 4D dominance ranges.
Consider an $\epsilon$-approximation $A'$ for $P'$ and let $A$ be the projection of $A'$ into the
first three dimensions (i.e., by removing the weights again). 
$A$ will be our sample for $P$ and to distinguish it from an unweighted approximation, we call
it \textit{rank-preserving $\epsilon$-approximation}. 
Indeed, for any point $p\in P$ with weight $w_p$ and any $\rD\in\sD$,
$r_{P\cap\rD}(p)$ (resp. $r_{A\cap\rD}(p)$) is equal to the number of points in $P'$ (resp. $A'$) contained in 4D dominance range $\rD\times(-\infty,w_p)$.
By the definition of $\epsilon$-approximation, property (\ref{eq:neps}) holds.

We now turn our attention to the AQS for 3D dominance queries. 

\para{The Data Structure and The Query Algorithm.}
Similar to the structure we presented for halfspace queries,
we  build $\frac{2^i}{\varepsilon_0}$-shallow cuttings
for $i=0,1,\cdots,\log (\varepsilon_0 n)$.
Let $\kappa=O(1)$ be the constant
such that $O(\frac{1}{\varepsilon_0}\log^{\kappa}\frac{1}{\varepsilon_0})$ is 
the size of the $\varepsilon_0$-approximation for dominance ranges in 4D.
Consider one $k$-shallow cutting $\C$.
We consider two cases:
\begin{itemize}
    \item If $k\le\frac{1}{\varepsilon_0}\log^{\kappa}\frac{1}{\varepsilon_0}$,
for each cell $\cell$ in the cutting $\C$,
we collect the points in its conflict list $\sS_\cell$
and divide them into $t=\frac{1}{\epsilon}$ groups
$G_1,G_2,\cdots,G_t$ according to their weights (meaning, the weights
in $G_i$ are no larger than weights in group $G_{i+1}$)
where $\epsilon=\frac{\varepsilon_0}{c}$ for a big enough constant $c$
as we did for halfspace queries.

\item For $k>\frac{1}{\varepsilon_0}\log^{\kappa}\frac{1}{\varepsilon_0}$,
    we take a rank-preserving $\epsilon$-approximation of $\sS_\cell$ first,
    and then divide the approximation into $t=\frac{1}{\epsilon}$ groups, just like the above case.
Again, for each group, we store the smallest weight among the points it contains.
\end{itemize}
We build the following structure for each cell $\cell$.

Let $N$ be the number of points in all the $t$ groups we generated for a cell $\cell$.
We collect groups $G_{i\cdot\alpha+1}, G_{i\cdot\alpha+2},
\cdots, G_{(i+1)\cdot\alpha}$
into a cluster $\sC_i$ for each $i=0, 1, \cdots, t/\alpha-1$
where $\alpha = (\log\log\frac{1}{\varepsilon_0})^3$.
For each group $j$ in cluster $\sC_i$ for $j=1,2,\cdots, \alpha$,
we color the points in the group with color $j$.
Then we build the following type-2 color counting structure $\Psi_i$ for $\sC_i$.
Let $N_i$ be the total number of points in $\sC_i$:
\begin{itemize}
    \item First, we store three predecessor search data structures, one for each
        coordinate. This allows us to map the input coordinates as well as the query coordinates
        to rank space. 

    \item Next, we build a grid of size $\sqrt[3]{N_i}\times\sqrt[3]{N_i}\times\sqrt[3]{N_i}$
    such that each slice contains $\sqrt[3]{N_i^2}$ points.
    For each grid point, we store the points it dominates in a frequency vector using the compact representation.
\item Finally, we recurse on each grid slab (i.e., three recursions, one for each dimension). 
    The recursion stops when the number of points 
    in the subproblem becomes smaller than $N_* = N_i^{\eta}$ for some small enough constant $\eta$.

\item 
    For these ``leaf'' subproblems,
    note that the total number of different answers to queries is bounded by $O(N_i^{3\eta})$.
    We build a lookup table which records the corresponding frequency vectors for these answers.
    Note that since at every step we do a rank space reduction, the look up can be simply done in 
    $O(1)$ time, after reducing the coordinates of the query to rank space. 
\end{itemize}

\para{The query algorithm.}
Given a query $q$, we first locate the grid cell $C$ containing $q$
and this gives us three ranks.
Using the ranks for $x$ and $y$, we obtain an entry and using the rank of $z$,
we find the corresponding word and the corresponding frequency vector stored in the lower corner of $C$. 
We get three more frequency vectors by recursing to three subproblems.
We merge the three frequency vectors to generate the final answer.
This completes the description of the structure we build for each family $\sC_i$.

To answer a query $q$, we first find the first shallow cutting level above $q$
and the corresponding cutting cell $\cell$.
We then query the data structure described above to get the count the number of
points dominated by $q$ in each of the $t$ groups.
Then by maintaining a running counter,
we scan through the $t$ groups from left to right
to construct the approximate $\varepsilon_0$-quantile summary.

\para{Space Usage.}
For the space usage, note that there are $N_i$ grid points in each recursive level
and the recursive depth is $O(1)$.
There are $\alpha$ colors and the frequency of a color is no more than $N_i$.
So the total number of words needed to store frequency vectors is $O(N_i \frac{\alpha\log N_i}{w})$.
When the problem size is below $N_i^{\eta}$,
for each subproblem, we store a lookup table using $O(N_i^{3\eta}\frac{\alpha\log N_i}{w})$ words.
So the total number of words used for the bottom level is 
$O(\frac{N_i}{N_i^{\eta}})\cdot O(N_i^{3\eta}\frac{\alpha\log N_i}{w}) 
= O(N_i\frac{N_i^{2\eta}\alpha\log N_i}{w})$.
Note that by our construction and $\varepsilon_0\ge\frac{1}{n}$, $N=O(\frac{1}{\varepsilon_0}\log^{\kappa}\frac{1}{\varepsilon_0})$,
$\alpha=(\log\log\frac{1}{\varepsilon_0})^3\le(\log\log n)^3$
and $N_i=O(\alpha\frac{N}{1/\varepsilon_0})
=O(\alpha\log^{\kappa}\frac{1}{\varepsilon_0})
=O(\alpha\log^{\kappa}n)$.
Since by assumption, $w=\Omega(\log n)$, by picking $\eta$ in $N_*=N_i^{\eta}$ to be a small enough constant,
the space usage for frequency vectors satisfy
\[
f(N_i) =
\begin{cases}
	3\sqrt[3]{N_i}f(\sqrt[3]{N_i^2}) + O(\frac{N_i}{w^{1-o(1)}}),\textrm{for } N_i \ge N_* \\
	O(\frac{N_i}{w^{1-\beta}}), \textrm{otherwise}
\end{cases},
\]
for some constant $0<\beta<1$,
which solves to $O(\frac{N_i(\log N_i)^3}{w^{1-\beta}})=O(\frac{N_i}{w^{1-\tau}})$ for some constant $0<\tau<1$.
Since the recursive depth is $O(1)$, the space usage for all the predecessor searching structures is $O(N_i)$.
Therefore the space usage of $\Psi_i$ is $O(N)$.
So the total space for each shallow cutting cell $\cell$ is bounded by $\frac{N}{N_i}\cdot O(N_i)=O(N)$.

For $k_i\ge \frac{1}{\varepsilon_0}\log^{\kappa}\frac{1}{\varepsilon_0}$,
$N=O(\frac{1}{\varepsilon_0}\log^{\kappa}\frac{1}{\varepsilon_0})$.
So the total space usage for them is bounded by 
\[
	\sum_{i=\kappa\log\log\frac{1}{\varepsilon_0}}^{\varepsilon_0 n} O\left(\frac{n}{k_i}\right) \cdot O(N)
	= \sum_{i=\kappa\log\log\frac{1}{\varepsilon_0}}^{\varepsilon_0 n} O\left(\frac{n\varepsilon_0}{2^i}\right)\cdot 
	O\left(\frac{1}{\varepsilon_0}\log^{\kappa}\frac{1}{\varepsilon_0}\right) = O(n).
\]
For $k_i < \frac{1}{\varepsilon_0}\log^{\kappa}\frac{1}{\varepsilon_0}$,
$N=k_i$ and so we have space bound 
\[
	\sum_{i=0}^{\kappa\log\log\frac{1}{\varepsilon_0}}O\left(\frac{n}{k_i}\right)\cdot O(N)
	=\sum_{i=0}^{\kappa\log\log\frac{1}{\varepsilon_0}}O\left(\frac{n}{k_i}\right)\cdot O(k_i)
	=O\left(n\log\log\frac{1}{\varepsilon_0}\right).
\]
This completes our space bound proof.

\para{Query Time.}
For the query time, we first spend $O(\log n)$ time to find an appropriate shallow cutting level and the corresponding cell
by the property of shallow cuttings.
Then we query $\Psi_i$ for $i=0,1,\cdots,t/\alpha-1$ to estimate the count for each group in the cell.
For each $\Psi_i$, note that each predecessor searching takes $O(\log N_i)$ time.
Also each frequency vector can fit in one word and so we can merge two frequency vectors in time $O(1)$.
This gives us the following recurrence relation for the query time
\[
g(N_i) =
\begin{cases}
	3g(\sqrt[3]{N_i^2}) + O(\log N_i),\textrm{for } N_i \ge N_* \\
	O(\log N_i), \textrm{otherwise}
\end{cases},
\]
which solves to $O((\log N_i)^3)=O((\log\log\frac{1}{\varepsilon_0})^3)=O(\alpha)$.
Since we need to query $t/\alpha$ such data structures
to get the count for all groups, the total query time for count estimation is $O(t)=O(1/\varepsilon_0)$.
Then we scan through the groups and report the approximate quantiles which takes again $O(1/\varepsilon_0)$ time.
So the total query time is $O(\log n + \frac{1}{\varepsilon_0})$.

\para{Correctness.}
Given a query $q$,
let $k$ be the actual number of points dominated by $q$.
By the property of shallow cuttings, 
we find a cell $\cell$ containing $q$ in the shallow cutting level $k_i$ above it
such that $k\le k_i\le 2k$.
When $k_i < \frac{1}{\varepsilon_0}\log^{\kappa}\frac{1}{\varepsilon_0}$,
after we estimate the count in each group,
since the estimation is exact and each group has size $\frac{|\sS_\cell|}{t}=\frac{\varepsilon_0|\sS_\cell|}{c}$,
each quantile we output will have error at most $\frac{\varepsilon_0|\sS_\cell|}{c}$.
For $k_i \ge \frac{1}{\varepsilon_0}\log^{\kappa}\frac{1}{\varepsilon_0}$,
we introduce error $\epsilon|\sS_\cell|$ in the $\epsilon$-approximation,
but since we use exact counting for each group,
the total error will not increase as we add up ranks of groups.
So the total error is at most $\frac{2\varepsilon_0|\sS_\cell|}{c}$.
In both cases, the total error is at most $\varepsilon_0 k$ for a big enough $c$.

\subsubsection{An Optimal Solution for 3D Dominance AQS}
In this section, we modify the data structure in the previous section to reduce the space usage to linear.
It can be seen from the space analysis
that the bottleneck is shallow cuttings with $k_i\le\frac{1}{\varepsilon_0}\log^{\kappa}\frac{1}{\varepsilon_0}$.
For the structures built for these levels,
the predecessor searching structures take linear space at each level
which leads to a super linear space usage in total.
To address this issue,
we do a rank space reduction for points in the cells of these levels
before constructing $\Psi_i$'s
so that we can use the integer register to spend sublinear space
for the predecessor searching structures.

\para{Rank Space Reduction Structure.}
We consider the cells in the $\frac{1}{\varepsilon_0}\log^{\kappa}\frac{1}{\varepsilon_0}$-shallow cutting.
Let $A=\log^{\kappa+1}\frac{1}{\varepsilon_0}$.
For the points in the conflict list $\sS_\cell$ of a shallow cutting cell $\cell\in\C$,
we build a grid of size $A\times A\times A$
such that each slice of the grid contains $O(1/(\varepsilon_0\log\frac{1}{\varepsilon_0}))$ points.
The coordinate of each grid point consists of the ranks of its three coordinates
in the corresponding dimensions.
For each of the $O(\frac{A}{\varepsilon_0\log(1/\varepsilon_0)})$ points in $\sS_\cell$, we round it down to the closest grid point
dominated by it.
This reduces the coordinates of the points down to $O(\log \log\frac{1}{\varepsilon_0})$ bits
and now we can apply the sub-optimal solution from the previous subsection 
which leads to an $O(n)$ space solution. 
To be more specific, 
we build the hierarchical shallow cuttings for 
$k_i\le \frac{1}{\varepsilon_0}\log^{\kappa}\frac{1}{\varepsilon_0}$
locally for hyperplanes in $\sS_\cell$ and apply the previous solution with 
a value $\varepsilon'=\varepsilon/c$ for a large enough constant $c$. 

\para{Query Algorithm and the query time.}
The query algorithm is similar to that for the previous suboptimal solution.
The only difference is that when the query $q$ is in a shallow cutting level smaller 
than $\frac{1}{\varepsilon_0}\log^{\kappa}\frac{1}{\varepsilon_0}$,
we use the rank space reduction structure to reduce $q$ to the rank space.
Let $q'$ be the grid point obtained after reducing $q$ to rank space.
Observe that the set of points dominated by $q$ can be written as the union of 
the points dominated by $q'$ and the subset of points dominated by $q$ in three
grid slabs of $A$ that contain $q$.
We get an $\varepsilon'$-quantile for the former set using the data structure implemented on the
grid points.
The crucial observation is that there are $O(\varepsilon_0^{-1}/\log \frac{1}{\varepsilon_0})$ points
in the slabs containing $q$ and thus we can afford to build an approximate $\varepsilon'$-quantile summary of these
points in $O(\frac{1}{\varepsilon_0})$ time. 
We can then merge these two quantiles and return the answer as the result.
By setting $c$ in the definition of $\varepsilon'$ small enough, we make sure that
the result is a valid $\varepsilon_0$ quantile summary. 
This also yields a query time (after locating the correct cell $\cell$ in the shallow cutting)
of $O(\frac{1}{\varepsilon_0})$.

\para{Correctness.}
Since we build shallow cutting $k_i\le\frac{1}{\varepsilon_0}\log^{\kappa}\frac{1}{\varepsilon_0}$
inside each cell in $\frac{1}{\varepsilon_0}\log^{\kappa}\frac{1}{\varepsilon_0}$-shallow cutting,
the transformed coordinates are consistent.
As we described above, this introduces error to the counts $\Psi_i$'s outputs,
but since we correct the error explicitly afterwards, the counts we get are still exact.
The remaining is the same as the suboptimal solution and so our structure finds $\varepsilon_0$-quantile properly.

\para{Space Usage.}
For the rank space reduction structure,
we need to store a predecessor searching structure for the query,
which takes space linear in the number of slices which is $O(A)$.
We build this structure for each cell in the $A/\varepsilon_0$-shallow cutting level
and there are $O(n\varepsilon_0/A)$ cells in total,
and so the space usage is $O(n\varepsilon_0)$.
Building shallow cuttings inside each cell will only
increase the space by a constant factor by the property of shallow cuttings.

For each $\Psi_i$, by our analysis in the suboptimal solution,
the frequency vectors will take $O(\frac{N}{w^{1-\tau}})$ space.
Now since the coordinates of the points and queries are integers of size at most $A$,
it takes $O(\log A)=O(\log\log\frac{1}{\varepsilon_0})$ bits to encode a coordinate.
Since the word size is $w=\Omega(\log n)$,
we need only $O(\frac{N_i\log A}{w})$ space to build the predecessor searching structures for $\Psi_i$.
In total, we spend $O(\frac{N}{w^{1-o(1)}})$ space for each shallow cutting level less than $A/\varepsilon_0$.
So, the total space usage is $O(n)$.
We conclusion this section by the following theorem.

\begin{restatable}{theorem}{thmdomq}\label{thm:domq}
    Given an input consisting of a parameter $\varepsilon_0 > 0$, and a set $P$ of
    $n$ points in $\R^3$ where each point $p\in P$ is associated with a weight
    $w_p$ from a totally ordered universe,
    one can build a data structure that uses the optimal $O(n)$ space
	such that given any dominance query $\gamma$,
	the data structure can answer an AQS query with parameter $\varepsilon_0$ in
     the optimal query time of $O(\log n + \frac{1}{\varepsilon_0})$.
\end{restatable}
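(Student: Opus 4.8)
The plan is to reuse the two-level shallow cutting framework from the halfspace AQS construction but with two changes: replace the $\varepsilon'$-approximation-plus-AHHS subroutine by \emph{exact} type-2 color counting so that rank errors never compound along root-to-leaf paths, and then compress the space to linear via a rank space reduction. Working directly in the primal, I would build $\frac{2^i}{\varepsilon_0}$-shallow cuttings for $i=0,\dots,\log(\varepsilon_0 n)$ by Lemma~\ref{lem:sc}, which also lets me locate in $O(\log n)$ time the first cutting whose level is above a query point together with the cell $\cell$ containing it. For a cell in a $k$-shallow cutting I split on $k$ versus the threshold $\frac{1}{\varepsilon_0}\log^{\kappa}\frac{1}{\varepsilon_0}$, where $\kappa$ is the exponent in the 4D dominance $\varepsilon$-approximation bound of Theorem~\ref{thm:eapp}. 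Above the threshold I replace the conflict list $\sS_\cell$ by a \emph{rank-preserving} $\epsilon$-approximation --- lift weights to a fourth coordinate, take a 4D dominance $\epsilon$-approximation, project back --- so that every point's rank inside any dominance range is preserved up to $\epsilon|\sS_\cell|$; below the threshold I keep $\sS_\cell$. In both cases I sort the resulting $N = O(\varepsilon_0^{-1}\log^{\kappa}\frac{1}{\varepsilon_0})$ points by weight into $t = 1/\epsilon$ groups with $\epsilon = \varepsilon_0/c$, storing each group's minimum weight as its representative.

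Next I would bundle the groups into clusters of $\alpha = (\log\log\frac{1}{\varepsilon_0})^3$ consecutive groups and, within a cluster, color the $j$-th group with color $j$; on each cluster I build a grid-based exact type-2 color counting structure $\Psi_i$ --- an $N_i^{1/3}\times N_i^{1/3}\times N_i^{1/3}$ grid whose corner cells store packed dominated-color frequency vectors, recursing on each of the three slabs down to subproblems of size $N_i^{\eta}$ answered by precomputed lookup tables after a rank space reduction. A query locates $\cell$, queries all $t/\alpha$ structures $\Psi_i$ (each in $O((\log N_i)^3) = O(\alpha)$ time, hence $O(t) = O(1/\varepsilon_0)$ total) to get exact per-group counts of points dominated by $q$, then scans the $t$ groups with a running counter, emitting one representative weight whenever the counter crosses an $\varepsilon_0$-quantile boundary. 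Correctness is easy: below the threshold the counts are exact and each group has size $\varepsilon_0|\sS_\cell|/c$; above it the only error is the single $\epsilon|\sS_\cell|$ from the rank-preserving approximation, and since counting stays exact it does not accumulate, so each reported quantile is within $\varepsilon_0 k$ for $c$ large enough.

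The space bound splits the same way. For $k_i \ge \frac{1}{\varepsilon_0}\log^{\kappa}\frac{1}{\varepsilon_0}$ each cell uses $O(N)$ space and, summing the geometric series $\sum_i \frac{n}{k_i}N$, these levels total $O(n)$. The real obstacle --- and where I expect most of the work --- is the $O(\log\log\frac{1}{\varepsilon_0})$ levels with $k_i < \frac{1}{\varepsilon_0}\log^{\kappa}\frac{1}{\varepsilon_0}$, since a direct implementation stores a predecessor-search structure of size $\Omega(n)$ at \emph{each} such level, giving $O(n\log\log\frac{1}{\varepsilon_0})$. To remove that factor I insert a rank space reduction at the $\frac{1}{\varepsilon_0}\log^{\kappa}\frac{1}{\varepsilon_0}$-shallow cutting: inside each such cell lay down an $A\times A\times A$ grid with $A = \log^{\kappa+1}\frac{1}{\varepsilon_0}$, each slab holding $O(\varepsilon_0^{-1}/\log\frac{1}{\varepsilon_0})$ points, round every conflict-list point down to the grid point it dominates, and run the suboptimal construction on the grid coordinates. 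Then coordinates take only $O(\log\log\frac{1}{\varepsilon_0})$ bits, so since $w = \Omega(\log n)$ the predecessor structures inside $\Psi_i$ cost $O(N_i\log A / w) = O(N_i / w^{1-o(1)})$ words and the top grid's predecessor structure costs $O(A)$ per cell over $O(n\varepsilon_0/A)$ cells, making the whole thing $O(n)$. At query time the set dominated by $q$ equals the set dominated by its rounded image $q'$ (answered by the grid structure) together with the $O(\varepsilon_0^{-1}/\log\frac{1}{\varepsilon_0})$ points in the three slabs through $q$, for which I build an $\varepsilon'$-quantile summary from scratch in $O(1/\varepsilon_0)$ time and merge. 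The delicate point I foresee is exactly this rounding: ensuring the error it introduces into the $\Psi_i$ counts is corrected \emph{exactly} (not approximately) by the slab contribution, and that the two partial quantile summaries merge into a genuine $\varepsilon_0$-summary with $\varepsilon' = \varepsilon_0/c$ for a suitable $c$; the remaining estimates are routine geometric-series accounting.
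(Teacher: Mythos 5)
Your proposal follows essentially the same route as the paper's own construction: the identical two-stage plan (rank-preserving 4D-lifted $\epsilon$-approximations above the $\frac{1}{\varepsilon_0}\log^{\kappa}\frac{1}{\varepsilon_0}$ threshold, clustering of $\alpha=(\log\log\frac{1}{\varepsilon_0})^3$ groups into grid-based exact type-2 counting structures, then the $A\times A\times A$ rank space reduction with slab correction to bring the low levels down to linear space), with the same parameters and the same correctness and space accounting. The ``delicate point'' you flag (exact correction of the rounding error via the three slabs) is resolved exactly as you anticipate, so no gap remains.
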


\section{Open Problems} 
Our results bring many interesting open problems.
First, for type-2 color counting problems,
we showed a linear-sized structure for simplex queries.
It is not clear if the query time can be reduced with more space.
It is an intriguing open problem to figure out the correct space-time tradeoff for the problem.
Note that our query time in Theorem~\ref{thm:partition} depends on the number of colors in total.
It is unclear if the query time can be made output-sensitive.
This seems difficult and unfortunately there seems to be no suitable lower bound techniques to settle the problem.
Furthermore, since improving exact simplex range counting results is already very challenging,
it makes sense to consider the approximate version of the problem with multiplicative errors.

Second, for heavy-hitter queries, there are two open problems.
In our solution, the space usage is optimal with up to some extra polylogarithmic factor (in $\frac{1}{\varepsilon}$).
An interesting challenging open problem is if the space usage can be made linear.
On the other hand, our query time is not output-sensitive. 
Technically speaking,
there can be less than $1/\varepsilon$ heavy hitters, 
and in this case, it would be interesting to see if $O(\log n + k)$ query time can be obtained
for $k$ output heavy hitters with (close to) linear space\footnote{We thank an anonymous referee for suggesting 
the ``output-sensitive'' version.}.

Third, for AQS queries,
our data structure for halfspace ranges is suboptimal.
The main reason is that we need a type-2 range counting solution as a subroutine.
For halfspace ranges, our exact type-2 solution is too costly,
and so we have to switch to an approximate version.
This introduces some  error
and as a result, we need to use a smaller error parameter,
which leads to extra polylogarithmic factors in both time and space.
In comparison, we obtain an optimal solution for dominance AQS queries
through exact type-2 counting.
Currently, it seems quite challenging to improve the exact type-2 result for halfspace queries
and some different ideas probably are needed to improve our results.

Finally, it is also interesting to investigate approximate quantile summaries, or heavy hitter summaries (or other data summaries
or data sketches used in the streaming literature) for a broader category of geometric ranges. 
In this paper, our focus has been on very fast data structures, preferably those with optimal $O(\log n + \frac{1}{\varepsilon})$ query time,
but we know such data structures do not exist for many important geometric ranges.
For example, with linear space, simplex queries require $O(n^{(d-1)/d})$ time and there are some matching lower bounds. 
Nonetheless, it is an interesting open question whether approximate quantile or heavy hitter summary can be built for
simplex queries in time $O(n^{(d-1)/d} + \frac{1}{\varepsilon})$  using linear or near-linear space;
as we review in the introduction, the general approaches result in sub-optimal query times of 
$O(n^{(d-1)/d} \cdot \frac{1}{\varepsilon})$  or 
$O(n^{(d-1)/d} + \frac{1}{\varepsilon^2})$.

\bibliography{ref}

\begin{thebibliography}{10}

\bibitem{ac09}
Peyman Afshani and Timothy~M. Chan.
\newblock Optimal halfspace range reporting in three dimensions.
\newblock In {\em Proceedings of the Annual {ACM-SIAM} Symposium on Discrete
  Algorithms ({SODA})}, pages 180--186, 2009.

\bibitem{AHZ.UB.CGTA}
Peyman Afshani, Chris Hamilton, and Norbert Zeh.
\newblock A general approach for cache-oblivious range reporting and
  approximate range counting.
\newblock {\em Computational Geometry: Theory and Applications}, 43:700--712,
  2010.
\newblock preliminary version at SoCG'09.

\bibitem{AP.range.sampling}
Peyman Afshani and Jeff~M. Phillips.
\newblock {Independent Range Sampling, Revisited Again}.
\newblock In {\em Symposium on Computational Geometry ({S}o{CG})}, pages
  4:1--4:13, 2019.

\bibitem{at18}
Peyman Afshani and Konstantinos Tsakalidis.
\newblock Optimal deterministic shallow cuttings for 3-d dominance ranges.
\newblock {\em Algorithmica}, 80(11):3192--3206, 2018.

\bibitem{AW.range.sampling}
Peyman Afshani and Zhewei Wei.
\newblock {Independent Range Sampling, Revisited}.
\newblock In {\em Proceedings of European Symposium on Algorithms ({ESA})},
  volume~87, pages 3:1--3:14, 2017.

\bibitem{agarwal91geometric}
Pankaj~K. Agarwal.
\newblock Geometric partitioning and its applications.
\newblock In {\em Discrete and computational geometry ({N}ew {B}runswick, {NJ},
  1989/1990)}, volume~6 of {\em DIMACS Ser. Discrete Math. Theoret. Comput.
  Sci.}, pages 1--37. Amer. Math. Soc., Providence, RI, 1991.

\bibitem{Agarwal.survey16}
Pankaj~K. Agarwal.
\newblock Range searching.
\newblock In J.~E. Goodman, J.~O'Rourke, and C.~Toth, editors, {\em Handbook of
  Discrete and Computational Geometry}. CRC Press, Inc., 2016.

\bibitem{ach+12}
Pankaj~K. Agarwal, Graham Cormode, Zengfeng Huang, Jeff~M. Phillips, Zhewei
  Wei, and Ke~Yi.
\newblock Mergeable summaries.
\newblock In {\em Proceedings of ACM Symposium on Principles of Database
  Systems ({PODS})}, pages 23--34, 2012.

\bibitem{MR4264249}
Djamal Belazzougui, Travis Gagie, J.~Ian Munro, Gonzalo Navarro, and Yakov
  Nekrich.
\newblock Range majorities and minorities in arrays.
\newblock {\em Algorithmica}, 83(6):1707--1733, 2021.

\bibitem{MR3126351}
Djamal Belazzougui, Travis Gagie, and Gonzalo Navarro.
\newblock Better space bounds for parameterized range majority and minority.
\newblock In {\em Algorithms and data structures}, volume 8037 of {\em Lecture
  Notes in Comput. Sci.}, pages 121--132. Springer, Heidelberg, 2013.

\bibitem{Jit.et.al.median}
Prosenjit Bose, Evangelos Kranakis, Pat Morin, and Yihui Tang.
\newblock Approximate range mode and range median queries.
\newblock In {\em Proceedings of Annual Symposium on Theoretical Aspects of
  Computer Science ({STACS})}, 2005.

\bibitem{bgjs11}
Gerth~St{\o}lting Brodal, Beat Gfeller, Allan~Gr{\o}nlund J{\o}rgensen, and
  Peter Sanders.
\newblock Towards optimal range medians.
\newblock {\em {Theoretical Computer Science}}, 412(24):2588--2601, 2011.

\bibitem{Chan.ParTree}
Timothy~M. Chan.
\newblock Optimal partition trees.
\newblock In {\em Symposium on Computational Geometry ({S}o{CG})}, pages 1--10.
  ACM, 2010.

\bibitem{chan.colored.20}
Timothy~M. Chan, Qizheng He, and Yakov Nekrich.
\newblock {Further Results on Colored Range Searching}.
\newblock In {\em Symposium on Computational Geometry ({S}o{CG})}, volume 164,
  pages 28:1--28:15, 2020.

\bibitem{chan:revisit}
Timothy~M. Chan, Kasper~Green Larsen, and Mihai P\u{a}tra\c{s}cu.
\newblock Orthogonal range searching on the {RAM}, revisited.
\newblock In {\em Symposium on Computational Geometry ({S}o{CG})}, pages 1--10,
  2011.

\bibitem{cz15mrs}
Timothy~M. Chan and Gelin Zhou.
\newblock Multidimensional range selection.
\newblock In {\em Algorithms and computation}, volume 9472 of {\em Lecture
  Notes in Comput. Sci.}, pages 83--92. Springer, Heidelberg, 2015.

\bibitem{Chazelle.Friedman}
B.~Chazelle and J.~Friedman.
\newblock A deterministic view of random sampling and its use in geometry.
\newblock {\em Combinatorica}, 10:229--249, 1990.

\bibitem{Chazelle.cutting}
Bernard Chazelle.
\newblock Cutting hyperplanes for divide-and-conquer.
\newblock {\em Discrete {\&} Computational Geometry}, 9(2):145--158, 1993.

\bibitem{ClaDCG87}
K.~L. Clarkson.
\newblock New applications of random sampling in computational geometry.
\newblock {\em Discrete {\&} Computational Geometry}, 2:195--222, 1987.

\bibitem{deberg2008computational}
Mark de~Berg, Otfried Cheong, Marc~J. van Kreveld, and Mark~H. Overmars.
\newblock {\em Computational geometry: algorithms and applications, 3rd
  Edition}.
\newblock Springer, 2008.

\bibitem{MR3000968}
Stephane Durocher, Meng He, J.~Ian Munro, Patrick~K. Nicholson, and Matthew
  Skala.
\newblock Range majority in constant time and linear space.
\newblock {\em Inform. and Comput.}, 222:169--179, 2013.

\bibitem{Coloured_reporting_counting_3sided}
P.~Gupta, R.~Janardan, and M.~Smid.
\newblock Further results on generalized intersection searching problems:
  Counting, reporting, and dynamization.
\newblock {\em Journal of Algorithms}, 19(2):282--317, 1995.

\bibitem{hw87}
David Haussler and Emo Welzl.
\newblock Epsilon-nets and simplex range queries.
\newblock {\em Discrete {\&} Computational Geometry}, 2:127--151, 1987.

\bibitem{hu2014independent}
Xiaocheng Hu, Miao Qiao, and Yufei Tao.
\newblock Independent range sampling.
\newblock In Richard Hull and Martin Grohe, editors, {\em Proceedings of the
  33rd {ACM} {SIGMOD-SIGACT-SIGART} Symposium on Principles of Database
  Systems, PODS'14, Snowbird, UT, USA, June 22-27, 2014}, pages 246--255.
  {ACM}, 2014.
\newblock \href {https://doi.org/10.1145/2594538.2594545}
  {\path{doi:10.1145/2594538.2594545}}.

\bibitem{doi:10.1137/16M1093604}
Zengfeng Huang and Ke~Yi.
\newblock The communication complexity of distributed epsilon-approximations.
\newblock {\em SIAM Journal of Computing}, 46(4):1370--1394, 2017.

\bibitem{larsen:median}
Allan~Gr{\o}nlund J{\o}rgensen and Kasper~Green Larsen.
\newblock Range selection and median: Tight cell probe lower bounds and
  adaptive data structures.
\newblock In {\em Proc. 22ndProceedings of the Annual {ACM-SIAM} Symposium on
  Discrete Algorithms ({SODA})}, pages 805--813, 2011.

\bibitem{KarpinskiN08}
Marek Karpinski and Yakov Nekrich.
\newblock Searching for frequent colors in rectangles.
\newblock In {\em Proceedings of the 20th Annual Canadian Conference on
  Computational Geometry, Montr{\'{e}}al, Canada, August 13-15, 2008}, 2008.

\bibitem{kms05}
Danny Krizanc, Pat Morin, and Michiel Smid.
\newblock Range mode and range median queries on lists and trees.
\newblock {\em Nordic J. Comput.}, 12(1):1--17, 2005.

\bibitem{Matousek.efficient.par}
Ji{\v r}\'i Matou\v{s}ek.
\newblock Efficient partition trees.
\newblock {\em Discrete {\&} Computational Geometry}, 8(3):315--334, 1992.

\bibitem{mat91cutting}
Ji\v{r}\'{\i} Matou\v{s}ek.
\newblock Cutting hyperplane arrangements.
\newblock {\em Discrete Comput. Geom.}, 6(5):385--406, 1991.

\bibitem{matousek2009geometric}
Ji\v{r}\'{\i} Matou\v{s}ek.
\newblock {\em Geometric Discrepancy: An Illustrated Guide}.
\newblock Algorithms and Combinatorics. Springer Berlin Heidelberg, 2009.

\bibitem{p16coresets}
Jeff~M. Phillips.
\newblock Coresets and sketches.
\newblock {\em CoRR}, abs/1601.00617, 2016.
\newblock URL: \url{http://arxiv.org/abs/1601.00617}, \href
  {http://arxiv.org/abs/1601.00617} {\path{arXiv:1601.00617}}.

\bibitem{rrr07}
Rajeev Raman, Venkatesh Raman, and Satti~Srinivasa Rao.
\newblock Succinct indexable dictionaries with applications to encoding $k$-ary
  trees, prefix sums and multisets.
\newblock {\em {ACM} Transactions on Algorithms ({TALG})}, 3(4), 2007.

\bibitem{Matousek.RS.hierarchal.93}
Ji{\v r}\'i~Matou\v sek.
\newblock Range searching with efficient hierarchical cuttings.
\newblock {\em Discrete {\&} Computational Geometry}, 10(2):157--182, 1993.

\bibitem{KZ.summary}
Zhewei Wei and Ke~Yi.
\newblock Beyond simple aggregates: Indexing for summary queries.
\newblock In {\em Proceedings of ACM Symposium on Principles of Database
  Systems ({PODS})}, pages 117--128. ACM, 2011.

\bibitem{KZ.summaryJ}
Ke~Yi, Lu~Wang, and Zhewei Wei.
\newblock Indexing for summary queries: Theory and practice.
\newblock {\em ACM Transactions on Database Systems ({TODS})}, 39(1), jan 2014.

\end{thebibliography}

\begin{appendices}

\section{The Partition Theorem, Cuttings, and Exact Range Counting}
\label{sec:premexact}

The partition theorem is a standard tool for simplex range searching.
It is originally proved by Matou\v sek~\cite{Matousek.efficient.par} in early 1990s
in the following form.
\begin{restatable}[Matou\v{s}ek's Partition Theorem \cite{Matousek.efficient.par}]{theorem}{thmmat}\label{thm:Mat}
	Let $P$ be a point set of size 
    	$n$ in $\R^d$ ($d\ge 2$),  and 
	$s$, $2\le s \le n$,  be an integer parameter and set $t = \frac{n}{2s}$. 
    	One can partition $P$ into $t$ subsets $P_1, \cdots, P_t$ such that
    	for every $1 \le i \le s$, $s \le |P_i| \le 2s$, and $P_i$ is enclosed in a
    	simplex $\Delta_i$ such that any hyperplane in $\R^d$ has 
    	crossing number $O(t^{1-1/d})$, meaning, it intersects $O(t^{1-1/d})$ of the simplices. 
\end{restatable}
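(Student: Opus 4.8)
The plan is to build the simplices $\Delta_1,\dots,\Delta_t$ one at a time by a greedy process driven by cuttings, keeping crossing numbers under control with a multiplicative--weights (reweighting) argument. A first, standard reduction handles the fact that the conclusion quantifies over \emph{all} hyperplanes: the crossing number of a hyperplane $h$ with respect to a fixed partition into $t$ simplices is a combinatorial quantity that changes only as $h$ sweeps past a vertex of some $\Delta_i$, and a partition into $t$ simplices has only $O(dt)$ vertices, so there are only $O(t^d)$ crossing--number--equivalence classes of hyperplanes. Thus it suffices to guarantee crossing number $O(t^{1-1/d})$ for a fixed test set $H$ of $\mathrm{poly}(t)$ hyperplanes, provided $H$ is chosen (by a routine but fiddly ``self-refinement'' argument) so that every simplex vertex the construction below can produce lies among the $O(|H|^d)$ pairwise intersections of $H$, and so that crossing numbers for $H$ dominate those for arbitrary hyperplanes. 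All candidate simplices below are cutting cells, whose vertices are intersections of hyperplanes of $H$, so this closure condition can be met with $H$ fixed once $n,s$ are fixed.

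With $H$ in hand, I would iterate while maintaining the uncovered point set $Q$ (initially $P$) and a weight $w(h)\ge 1$ for each $h\in H$ (initially $1$); set $W=\sum_{h\in H}w(h)$. While $|Q|>2s$: pick $r$ with $r^d=\Theta(|Q|/s)$ (small enough constant) and take a weighted $(1/r)$-cutting of $(H,w)$, i.e.\ a decomposition of $\R^d$ into $O(r^d)$ simplicial cells each meeting hyperplanes of total weight at most $W/r$ (standard cutting lemma, weighted version). Since the $|Q|$ points spread over $O(r^d)$ cells, some cell $\sigma$ holds $\Omega(|Q|/r^d)=\Omega(s)$ of them; let $\Delta_i$ be a suitable sub-simplex of $\sigma$ containing between $s$ and $2s$ points of $Q$, set $P_i=\Delta_i\cap Q$, remove those points, and double $w(h)$ for every $h\in H$ crossing $\Delta_i$. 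When $|Q|\le 2s$ remains, make it the last part inside any enclosing simplex. This yields $t=n/(2s)$ parts up to constants. For the crossing bound: step $i$ multiplies $W$ by at most $1+O(1/r_i)$, and since $|Q_i|=\Theta(s(t-i))$ we have $r_i^{-1}=\Theta((s/|Q_i|)^{1/d})=\Theta((t-i)^{-1/d})$; hence $W$ ends at most $N\cdot\prod_i(1+O(1/r_i))\le N\cdot\exp\!\bigl(O(\sum_{j\ge 1}j^{-1/d})\bigr)=N\cdot 2^{O(t^{1-1/d})}$, using $\sum_{j=1}^{t}j^{-1/d}=\Theta(t^{1-1/d})$ for $d\ge 2$. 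A hyperplane crossing $c_h$ of the $\Delta_i$ has weight at least $2^{c_h}$, so $c_h\le\log_2 W=O(\log N)+O(t^{1-1/d})=O(t^{1-1/d})$; transferring this back to all hyperplanes through the reduction finishes the argument.

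The delicate points, and where the real work sits, are three. First, enforcing the \emph{upper} bound $|P_i|\le 2s$: a cutting cell can contain far more than $2s$ points, and slicing it with an extra hyperplane leaves a non-simplicial piece, so one must instead refine the cutting locally (or repeatedly subdivide inside the cell), and then argue that the extra crossings created are absorbed into the $O(t^{1-1/d})$ budget -- this is the step I expect to be the main obstacle. Second, the test-set reduction must be made rigorous as sketched above, pinning down a fixed $\mathrm{poly}(t)$-size $H$ that is ``closed'' under taking cutting-cell vertices and that dominates crossing numbers of arbitrary hyperplanes. Third, the endgame when $|Q|$ shrinks to $\Theta(s)$: there $r_i=O(1)$, so a single step can cross a constant fraction of the current weight, but only $O(1)$ such steps occur and, as the telescoping product shows, they contribute only $2^{O(1)}$ to $W$, hence nothing beyond the $O(\log N)=O(t^{1-1/d})$ term. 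Everything else -- choosing the constants in $r^d=\Theta(|Q|/s)$, counting the parts, and the final summation -- is routine bookkeeping on top of the weighted cutting lemma.
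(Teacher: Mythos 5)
First, note that the paper does not prove this statement at all: Theorem~\ref{thm:Mat} is imported verbatim from Matou\v sek's ``Efficient partition trees'' and used as a black box, so there is no in-paper proof to compare against; what follows assesses your reconstruction against Matou\v sek's actual argument. Your core engine is the right one and essentially his: iterate a \emph{weighted} cutting with $r_i^d=\Theta(|Q_i|/s)$, grab a heavy cell, carve off one class, double the weights of the crossed test hyperplanes, and conclude from $2^{\kappa(h)}\le W_{\mathrm{final}}\le |H|\exp\bigl(O(\sum_i 1/r_i)\bigr)$ together with $\sum_{j\le t} j^{-1/d}=\Theta(t^{1-1/d})$. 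Also, your ``main obstacle'' is not one: a simplicial partition does not require the simplices $\Delta_i$ to be disjoint, nor that $\Delta_i\cap P=P_i$, so from a cell containing at least $s$ (or $2s$) points of $Q$ you simply take any $s$ to $2s$ of them as $P_i$ and keep the cell itself as $\Delta_i$; no local refinement and no extra crossing budget is needed, and worrying about it signals a misreading of the statement.

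The genuine gap is the test-set reduction, which you declare ``routine but fiddly'' and sketch in a way that cannot work. You ask for a fixed $H$ that both generates all simplex vertices (as vertices of cutting cells of $H$) and contains, for every hyperplane, a crossing-equivalent representative; but crossing-equivalence is determined by sign patterns on the vertex set $V\subseteq$ vertices of the arrangement of $H$, so there are $\Theta(|V|^d)=\Theta(|H|^{d^2})$ classes to represent inside a set of size $|H|$ --- a cardinality contradiction, and no ``self-refinement'' escapes it. Matou\v sek's actual test-set lemma is weaker and different: take a $(1/\rho)$-cutting, $\rho\approx t^{1/d}$, of the arrangement \emph{dual to the input points}, and let the test set $Q$ be the $O(t)$ hyperplanes dual to the cell vertices. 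For an arbitrary $h$ whose dual point lies in a cell $C$, a convexity argument shows that any $\Delta_i$ crossed by $h$ but by none of the hyperplanes dual to $C$'s vertices must lie entirely between two of those hyperplanes; a point of $P$ in that slab has its dual hyperplane crossing $C$, so the slab contains at most $n/\rho$ input points, and since each class supplies at least $s$ \emph{disjoint} points, at most $O\bigl(n/(\rho s)\bigr)=O(t^{1-1/d})$ such simplices exist. This yields only the approximate domination $\kappa(h)\le O(\max_{h'\in Q}\kappa(h'))+O(t^{1-1/d})$, which is exactly what the doubling argument needs (and keeps $\log|H|=O(\log t)$ harmless), but it is a real lemma exploiting the lower bound $|P_i|\ge s$ --- not a bookkeeping step --- and it is the missing piece in your proposal.
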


By directly applying this theorem, we can solve simplex range searching problems
in near optimal time (up to $\log^{O(1)}n$ factors).
The main reason is that the constant hidden in the crossing number
will explode if we build a partition tree of super constant levels
using this theorem.
To get the optimal query time,
Chan~\cite{Chan.ParTree} refined and improved the partition theorem
and proved the following version.
\begin{restatable}[Chan's Partition Refinement Theorem~\cite{Chan.ParTree}]{theorem}{thmchan}\label{thm:Chan}
	Let $P$ be a point set of size 
    	$n$ in $\R^d$ ($d\ge 2$),  and 
	$H$ be a set of $m$ hyperplanes in $\R^d$. 
	Suppose we are given $\tau$ disjoint cells covering $P$, 
	such that each cell contains at most $2n/\tau$ points of $P$ and each hyperplane in $H$ crosses at most $\ell$ cells. 
	Then for any $b$, we can subdivide every cell into $O(b)$ disjoint subcells, for a total of at most $b\tau$ subscells, 
	such that each subcell contains at most $2n/(b\tau)$ points of $P$, 
	and each hyperplane in $H$ crosses at most the following total number of cells: 
	\[
      		O((b\tau)^{1-1/d} + b^{1-1/(d-1)}\ell + b\log \tau \log m).
   	 \]
\end{restatable}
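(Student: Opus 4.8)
The plan is to derive the refinement by applying Matou\v{s}ek's Partition Theorem (Theorem~\ref{thm:Mat}) \emph{locally} inside each of the $\tau$ given cells, and then carefully account for how the crossing number of a hyperplane aggregates across all the subcells. First I would fix one cell $C$ among the $\tau$ cells; by hypothesis $C$ contains at most $2n/\tau$ points of $P$. I apply Theorem~\ref{thm:Mat} to the point set $P\cap C$ with the size parameter chosen so that the number of parts is $\Theta(b)$, i.e.\ each part has roughly $|P\cap C|/b \le 2n/(b\tau)$ points and is enclosed in a simplex; the theorem guarantees that every hyperplane crosses $O(b^{1-1/d})$ of these $\Theta(b)$ simplices within $C$. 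Intersecting these simplices with $C$ gives the $O(b)$ subcells of $C$, for $b\tau$ subcells total, each with at most $2n/(b\tau)$ points, as required. The subtlety already here is that a simplex from Matou\v{s}ek's partition may stick outside $C$, so one has to either clip to $C$ (which keeps the crossing bound, since clipping only removes intersections) or argue that the enclosing simplices can be taken inside a bounding region; I would clip.

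Next I would bound the total number of subcells a fixed hyperplane $h\in H$ crosses. There are two sources. (i) If $h$ crosses a cell $C$ of the original $\tau$, then inside $C$ it crosses $O(b^{1-1/d})$ subcells by the local application of Theorem~\ref{thm:Mat}; summed over the $\le \ell$ cells that $h$ crosses, this contributes $O(\ell\, b^{1-1/d})$. Wait --- the claimed bound has $b^{1-1/(d-1)}\ell$, which is \emph{larger} than $b^{1-1/d}\ell$, so a cruder (or more robust) local bound suffices; in fact one can get away with the weaker within-cell guarantee $O(b^{1-1/(d-1)})$, which is what a more naive recursive or lower-dimensional partitioning argument would give, and which is presumably why Chan's statement carries that exponent. (ii) A hyperplane also ``crosses'' a subcell if it crosses the simplex boundary even when it did not cross the original cell boundary in an essential way --- but actually the dominant extra term $O((b\tau)^{1-1/d})$ comes from a global re-derivation: the total collection of $b\tau$ subcells is itself (close to) a Matou\v{s}ek partition of all of $P$ into pieces of size $2n/(b\tau)$, and any single hyperplane crosses $O((b\tau)^{1-1/d})$ pieces of such a partition. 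The additive $O(b\log\tau\log m)$ term is the ``cleanup'' term: it accounts for the fact that the local partitions in different cells are built independently, so one needs an extra charging argument (or a small random perturbation / a net argument over the $m$ hyperplanes) to control the boundary effects across the $\tau$ cells; the $\log m$ factor is the tell-tale sign of a union bound over $H$, and the $\log\tau$ of a hierarchical/recursive merging of the $\tau$ cells.

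The main obstacle, and the step I would spend the most care on, is combining the three contributions $O((b\tau)^{1-1/d})$, $O(b^{1-1/(d-1)}\ell)$, and $O(b\log\tau\log m)$ into a single clean bound while ensuring they are not double-counted --- in particular, showing that the global bound $O((b\tau)^{1-1/d})$ genuinely holds for the \emph{independently constructed} local partitions and does not degrade to something like $\tau \cdot b^{1-1/d}$. The resolution is that Matou\v{s}ek's construction can be performed so that the simplices across all cells jointly behave like one partition (this is essentially a consequence of the fact that the bound in Theorem~\ref{thm:Mat} is on the crossing number of \emph{any} hyperplane and is subadditive in the right way), with the $O(b\log\tau\log m)$ term absorbing the discrepancy. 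I would present the argument by: (1) stating the local application and the clipping; (2) proving the per-cell crossing bound; (3) proving the aggregate bound via a direct counting argument plus the union-bound cleanup over $H$; and (4) assembling the final inequality. Steps (1)--(2) are routine given Theorem~\ref{thm:Mat}; step (3) is where the real work lies.
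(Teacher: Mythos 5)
First, note that the paper does not prove this statement at all: it is quoted verbatim from Chan~\cite{Chan.ParTree} as a black box (just as Theorem~\ref{thm:Mat} is quoted from Matou\v sek), so the only meaningful comparison is with Chan's original argument, and your proposal is not a repair of a missing proof in this paper but an attempt to rederive Chan's theorem from Theorem~\ref{thm:Mat}. That attempt has a genuine gap, and in fact starts from a miscalculation: you assert that $b^{1-1/(d-1)}\ell$ is \emph{larger} than $b^{1-1/d}\ell$, so that "a cruder local bound suffices." The inequality goes the other way: since $1-1/(d-1)<1-1/d$, the term $b^{1-1/(d-1)}\ell$ is asymptotically \emph{smaller} than the $O(b^{1-1/d}\ell)$ you get by applying Matou\v sek's theorem independently inside each of the $\le\ell$ crossed cells. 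So the per-cell application of Theorem~\ref{thm:Mat} does not even recover the $\ell$-dependent term, let alone the global term. This is not a cosmetic issue: the whole point of Chan's refinement (and the reason this paper invokes it, as it says explicitly in Appendix~\ref{sec:premexact}) is that the coefficient of $\ell$ must be $o(b^{1-1/d})$ so that, when the theorem is iterated over $\omega(1)$ levels of a partition tree, the recurrence $\ell_{i+1}=O((b\tau_i)^{1-1/d})+o(b^{1-1/d})\ell_i+\ldots$ does not compound the hidden constant. Your construction gives $\ell_{i+1}=O(b^{1-1/d})\,\ell_i$, which after $k$ levels incurs a $C^k$ blowup --- exactly the defect of the naive recursive use of Matou\v sek's theorem that Chan's theorem exists to avoid.

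The second, related gap is the claim that the $b\tau$ independently constructed local partitions "jointly behave like one partition," so that a single hyperplane crosses only $O((b\tau)^{1-1/d})$ subcells globally, with an $O(b\log\tau\log m)$ term "absorbing the discrepancy." No subadditivity of this kind follows from Theorem~\ref{thm:Mat}: with independent local constructions the only bound available is $O(\ell\,b^{1-1/d})$, which can be as large as $\Theta(\tau b^{1-1/d})\gg(b\tau)^{1-1/d}$, and no union bound over $H$ closes that polynomial gap. In Chan's actual proof the subcells in all cells are built in a single coordinated process: hyperplanes carry multiplicative weights that are boosted each time they are crossed, subcells are carved using cuttings for the current weighted set (which is where the $\log m$ arises, from the weight-doubling analysis, not from a union bound), and the $b^{1-1/(d-1)}\ell$ term comes from charging the crossings of a hyperplane inside an old cell to a $(d-1)$-dimensional cutting on that cell's boundary. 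Those two mechanisms --- global reweighting and the boundary/lower-dimensional charging --- are the missing ideas; without them the proposal establishes only the weaker bound $O(\ell b^{1-1/d})$ per hyperplane, which is not the statement of Theorem~\ref{thm:Chan} and would not support the query-time analysis in Appendix~\ref{sec:type-2}.
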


Note that the crossing number in Chan's version has some extra terms ($b^{1-1/(d-1)}l$ and $b\log\tau\log m$).
These terms are dominated by the first term ($(b\tau)^{1-1/d}$) when $\tau=\log^{\Omega(1)} n$
but otherwise, the crossing number bound in Chan's bound can be worse than Matou\v sek's.
For most applications of the partition theorem,
we will build a partition tree of level $\Omega(\log\log n)$ and in that case,
these extra terms can be safely ignored.
Specifically, for the simplex range counting problem,
by applying Chan's method, 
we obtain the following optimal result for linear space structures.
\begin{restatable}[Small-Space Exact Simplex Range Counting~\cite{Matousek.RS.hierarchal.93}]{theorem}{simplexrc}\label{thm:simplexrc}
    Let $P$ be a set of $n$ points in $\R^d$. One can build a data structure of size 
    $O(n)$ for simplex ranges such that given a query range $\gamma$, one can report 
    $|\gamma\cap P|$ in $O(n^{(d-1)/d})$ time.
\end{restatable}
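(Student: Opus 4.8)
The plan is to build a partition tree on $P$ with constant branching factor, store a point count at every node, and answer a simplex query by the classical recursive traversal. First I would construct the tree: starting from the trivial cover consisting of the single cell containing all of $P$, I would repeatedly apply Chan's refinement theorem (Theorem~\ref{thm:Chan}) to the current cover of $P$, each time with a large \emph{constant} branching parameter $b$, stopping the refinement of a cell as soon as it holds $O(1)$ points. In each application, the set $H$ of Theorem~\ref{thm:Chan} is a \emph{test set} of $\mathrm{poly}(n)$ hyperplanes with the standard property (via bounded VC dimension and $\varepsilon$-approximations) that a small crossing number against $H$ forces a small crossing number against every hyperplane of $\R^d$; in particular $\log m = O(\log n)$. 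Each input point is assigned to the unique leaf cell whose point subset contains it, and each node $v$ stores its defining simplex $\Delta_v$, pointers to its $O(1)$ children, and — computed bottom-up in linear time — the number of points of $P$ in the subtree of $v$. Since the branching is constant, the tree has $O(n)$ nodes, each of $O(1)$ size, so the structure occupies $O(n)$ space.

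\textbf{Query algorithm and correctness.} To report $|\gamma \cap P|$ for a query simplex $\gamma = \bigcap_{i=1}^{d+1} h_i$, I would recurse from the root: at a node with cell $\Delta$, return $0$ if $\Delta \cap \gamma = \emptyset$, return the stored count if $\Delta \subseteq \gamma$, and otherwise recurse into the children and sum; at a leaf, test each of its $O(1)$ points against $\gamma$ directly. Correctness is immediate because $P_v \subseteq \Delta_v$ at every node $v$: $\Delta_v \cap \gamma = \emptyset$ forces all of $P_v$ outside $\gamma$, and $\Delta_v \subseteq \gamma$ forces all of $P_v$ inside, so the stored count is exactly the contribution of that subtree. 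This stays valid even when sibling cells overlap, since the assigned point sets still partition $P$.

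\textbf{Query time.} The recursion descends into a node only when its cell meets the boundary of $\gamma$, hence is crossed by one of the $d+1$ bounding hyperplanes. Therefore, on the level of the tree whose nodes together carry $\tau$ cells, the number of visited cells is at most $(d+1)$ times the crossing number, which is $O(\tau^{1-1/d})$. As $\tau$ runs over the (geometrically increasing) sequence of level sizes up to $\Theta(n)$, these bounds form a geometric series dominated by its last term, so the traversal visits $O(n^{1-1/d})$ nodes, each doing $O(1)$ work; this gives the claimed $O(n^{(d-1)/d})$ query time.

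\textbf{Main obstacle.} The delicate point is that the implicit constant in the per-level bound $O(\tau^{1-1/d})$ on the crossing number must not grow as the tree is refined over $\Theta(\log n)$ levels; iterating Matou\v{s}ek's Theorem~\ref{thm:Mat} directly would multiply this constant once per level and inflate the query time by an $n^{\Omega(1)}$ factor — or, with only $O(\log\log n)$ coarse levels, by a polylogarithmic factor, which is the ``near-optimal'' regime noted after Theorem~\ref{thm:Mat}. Chan's refinement is engineered precisely to avoid this, and the plan is to prove by induction that a node carrying $\tau$ cells has crossing number $O(\tau^{1-1/d} + \mathrm{polylog}\,n)$ with a constant independent of the level: in Theorem~\ref{thm:Chan} the term $b^{1-1/(d-1)}\ell$ is dominated by $(b\tau)^{1-1/d}$ once $b$ is a large enough constant, since $1-\tfrac1{d-1} < 1-\tfrac1d$ and inductively $\ell = O(\tau^{1-1/d})$, while the remaining term $b\log\tau\log m = O(\mathrm{polylog}\,n)$ accumulates only to a polylogarithmic additive slack. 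Summing $\tau^{1-1/d}$ over the geometric sequence of levels gives total crossing number $O(n^{1-1/d})$, and the slack contributes only $O(\mathrm{polylog}\,n\cdot\log n)=O(n^{1-1/d})$ (the whole statement being trivial for small $n$). Carrying out this induction cleanly, together with setting up the test set, are the only non-routine ingredients.
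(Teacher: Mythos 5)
Your proposal follows essentially the same route the paper indicates: the paper states this theorem without its own proof, citing Matou\v{s}ek's hierarchical-cuttings result and remarking that it is obtained ``by applying Chan's method,'' i.e., iterating Theorem~\ref{thm:Chan} with a constant branching parameter $b$ so that the extra terms $b^{1-1/(d-1)}\ell$ and $b\log\tau\log m$ are dominated by $(b\tau)^{1-1/d}$ --- which is exactly your construction, counts-at-nodes structure, and level-by-level crossing-number analysis. The sketch is correct; the one loose spot is that the additive $b\log\tau\log m$ slack is multiplied by roughly $b^{1-1/(d-1)}$ at each refinement rather than staying polylogarithmic per se, but since this growth rate is strictly below the $b^{1-1/d}$ growth of the main term it is absorbed into $O(\tau^{1-1/d}+\mathrm{polylog}\,n)$ exactly as you claim, so tightening this is routine.
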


Given two points $p,q\in\R^d$, we say $p$ dominates $q$, denoted $q\ge p$, if all coordinates of $p$ is
no smaller than those of $q$.
A dominance range $\gamma$ is specified by a point $p_{\gamma}$
and is defined to be $\{q\in\R^d:q\le p_\gamma\}$.
When the context is clear, we use $p_\gamma$ and $\gamma$ interchangeably.
We will be interested in range counting for halfspace and dominance ranges in this paper.
But since these two types of ranges are special cases of simplex ranges
and so Theorem~\ref{thm:simplexrc} applies.

If we want to answer queries faster in polylogarithmic time,
one way is to spend more space.
This is done by using point-line duality~\cite{deberg2008computational} 
and applying the technique of cuttings.

Given a simplex $\cell\subset\R^d$ and a set $H$ of hyperplanes,
we denote the set of hyperplanes of $H$ intersecting the interior of $\cell$ by $\sS_\cell$,
which is called the conflict list of $\cell$.
An $(1/r)$-cutting for $H$ is defined to be a collection $\C$ of pairwise interior disjoint $d$-dimensional
closed simplicies that together cover $\R^d$ such that $|\sS_\cell|\le n/r$ for all $\cell\in\C$.
The size of a cutting $\C$, denoted by $|\C|$, is defined to be the number of simplicies it has.
The optimal size bound and construction algorithm of cuttings were developed after
a series of work by the pioneers in the computational geometry community~\cite{ClaDCG87,hw87,mat91cutting,agarwal91geometric,Chazelle.Friedman}
and culminated in the work of Chazelle~\cite{Chazelle.cutting}.
\begin{restatable}[Cutting Lemma~\cite{Chazelle.cutting}]{lemma}{cutting}\label{lem:cutting}
	Given a set $H$ of $n$ hyperplanes in $\R^d$, for any $r\ge 1$, there exists an $(1/r)$-cutting for $H$
	of size $O(r^d)$. The cutting as well as the conflict lists can be constructed in deterministically $O(nr^{d-1})$ time.
\end{restatable}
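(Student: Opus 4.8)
The plan is to follow the iterated-sampling route of Chazelle (refining Chazelle--Friedman). First I would recall the one-shot construction: take a random subset $R\subseteq H$ with $|R|=r$, form the arrangement $\mathcal{A}(R)$, and subdivide every cell of $\mathcal{A}(R)$ by its bottom-vertex (canonical) triangulation, producing $O(r^d)$ simplices. By the $\varepsilon$-net theorem applied to the range space whose ranges are the sets of hyperplanes crossing a simplex (equivalently the Clarkson--Shor framework), with positive constant probability every simplex of this triangulation has a conflict list of size $O((n/r)\log r)$. This is an $O((\log r)/r)$-cutting of size $O(r^d)$ --- correct up to the spurious $\log r$ factor.

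To delete that factor I would build a hierarchy $\C_0,\C_1,\dots,\C_m$ with $\C_0=\{\R^d\}$, where each $\C_i$ is a $(1/\rho^{\,i})$-cutting for a large constant $\rho$, and $\C_{i+1}$ refines $\C_i$ cell by cell: inside a cell $\cell\in\C_i$ with conflict list $\sS_\cell$, take the bottom-vertex triangulation of a constant-size random sample of $\sS_\cell$ clipped to $\cell$, so the new cells inherit subsets of $\sS_\cell$. After $m=\Theta(\log_\rho r)$ stages this is a genuine $(1/r)$-cutting. The subtle point is the size: a crude telescoping multiplies the cell count by $O(\rho^d)$ per stage while the $\varepsilon$-net overhead shrinks conflict lists by only $\Theta(\rho/\log\rho)$, which would give $O(r^{d+o(1)})$. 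To reach the sharp $O(r^d)$ I would invoke the Chazelle--Friedman exponential-decay lemma: in the bottom-vertex triangulation of a random $\rho$-sample, the expected number of simplices whose conflict list exceeds $t\cdot n/\rho$ is $O(\rho^{d}2^{-\Omega(t)})$. This lets me over-refine the few ``bad'' cells in proportion to how bad they are, and because the bad-cell population decays exponentially in $t$, the total number of cells created per stage is still $O(\rho^d)$ times the previous count; telescoping then yields exactly $O(r^d)$.

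For the deterministic statement and the $O(nr^{d-1})$ time bound I would replace every random sample by a deterministically computed $(1/\rho)$-approximation of the current conflict list (Matou\v{s}ek's algorithm for small $\varepsilon$-approximations runs in $N\cdot(1/\varepsilon)^{O(1)}=O(N)$ time here since $\varepsilon=1/\rho$ is constant), and the exponential-decay bound has a counting analogue that survives this substitution. The running time is then dominated by recomputing conflict lists: at level $i$ there are $O(\rho^{\,id})$ cells, each with an inherited conflict list of size $n/\rho^{\,i}$, and splitting it against the $O(\rho)$ new hyperplanes plus building the local triangulation costs $O(n/\rho^{\,i})$ per cell; summing, $\sum_{i\le m}O(\rho^{\,id})\cdot O(n/\rho^{\,i})=\sum_{i\le m}O\!\left(n\,\rho^{\,i(d-1)}\right)=O\!\left(n\,\rho^{\,m(d-1)}\right)=O(nr^{d-1})$, a geometric series dominated by its last term; emitting the conflict lists explicitly costs the same.

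The step I expect to be the main obstacle is proving that the per-stage blow-up in the number of cells is only the constant $O(\rho^d)$ rather than $O((\log\rho)^{d}\rho^d)$ --- i.e.\ establishing and applying the exponential-decay lemma so that the $\varepsilon$-net logarithm never accumulates across the $\Theta(\log_\rho r)$ levels. This in turn rests on the bottom-vertex triangulation restricted to a cell $\cell$ retaining the canonical/locality property that each resulting simplex is determined by $O(d)$ of the sampled hyperplanes, which is exactly the hypothesis the Clarkson--Shor-type arguments require; a secondary difficulty is carrying out the derandomization (via deterministic $(1/\rho)$-approximations and conditional expectations) without losing a polylogarithmic factor in the $O(nr^{d-1})$ time bound, which is precisely the refinement of Chazelle over Chazelle--Friedman.
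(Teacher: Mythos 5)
The paper does not prove this lemma at all---it is imported verbatim from Chazelle's work \cite{Chazelle.cutting}---and your sketch is a faithful reconstruction of exactly that cited proof: the hierarchical refinement by constant-factor cuttings, the Chazelle--Friedman exponential-decay lemma to keep the per-stage blow-up at $O(\rho^d)$ and kill the $\varepsilon$-net logarithm, and derandomization via constant-error deterministic approximations, with the $O(nr^{d-1})$ time coming from the geometric sum over levels. So the proposal is correct and takes essentially the same route as the source the paper relies on; the parts you flag as delicate (the decay lemma and the derandomization) are indeed where the real work in Chazelle's argument lies.
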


Using Theorem~\cite{Chazelle.cutting}, we can obtain the following fast query time result for simplex range counting.
\begin{restatable}[Fast-Query Exact Simplex Range Counting~\cite{Matousek.RS.hierarchal.93}]{theorem}{fastrc}\label{thm:fastrc}
    	Let $P$ be a set of $n$ points in 2D (resp. 3D). One can build a data structure of size 
   	 $O(n^d)$ for simplex ranges such that given a query range $\gamma$, one can report 
    	$|\gamma\cap P|$ in $O(\log n)$ time.
\end{restatable}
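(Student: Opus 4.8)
The plan is to reduce simplex range counting with fast query to \emph{halfspace} range counting with fast query, and to solve the latter by point--hyperplane duality together with a hierarchy of cuttings (Lemma~\ref{lem:cutting}). A simplex in $\R^d$ is the intersection of $d+1$ halfspaces, so a point lies in the query simplex exactly when it lies on the correct side of each of the $d+1$ bounding hyperplanes. We therefore build a $(d+1)$-level structure: the first level resolves membership in the first halfspace, and on each canonical subset produced it hangs a secondary structure resolving the second halfspace, and so on. Thus it suffices to design, for a single halfspace, a structure of size $O(n^d)$ answering counting queries in $O(\log n)$ time whose query decomposes the answer into $O(\log n)$ precomputed ``canonical'' counts, so that it can serve as one level of the cascade.

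For one halfspace: dualize, so the input $P$ becomes a set $H$ of $n$ hyperplanes and the query halfspace becomes a point $q$; the answer is the number of hyperplanes of $H$ passing below $q$. Fix a constant $r>1$ and build a hierarchy of $(1/r)$-cuttings: a $(1/r)$-cutting $\C$ of $H$ of size $O(r^d)$, then recursively a $(1/r)$-cutting of the conflict list $\sS_\cell$ inside each cell $\cell\in\C$, stopping when a conflict list has $O(1)$ hyperplanes. Since conflict-list sizes shrink by a factor $r$ per level, the depth is $O(\log_r n)=O(\log n)$. At each cell store the number of hyperplanes of $H$ that pass entirely below the cell (those already ``resolved'' at that node). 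A query walks down the unique path of cells containing $q$ — this is at the same time a point-location routine — summing the stored resolved counts along the path and, at the $O(1)$-size leaf, checking the few remaining hyperplanes directly; this costs $O(\log n)$. For the space, the total size of all conflict lists at depth $i$ is $O(n\,r^{(d-1)i})$ because each hyperplane crosses $O(r^{d-1})$ cells of a $(1/r)$-cutting; summing over $i\le\log_r n$ gives a geometric series dominated by its last term, i.e.\ $O(n^d)$. This yields the halfspace result with $O(n^d)$ space and $O(\log n)$ query, and the set of resolved hyperplanes at each node is exactly the canonical decomposition needed for the cascade. (This is the fast-query counterpart of the hierarchical-cutting structure behind Theorem~\ref{thm:simplexrc}.)

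The main obstacle is the space bound for the $(d+1)$-level cascade: building a secondary halfspace structure of size $\Theta(m^d)$ on every canonical subset of size $m$, and iterating $d+1$ times, a naive accounting bounds the total by $O(n^d\log^{O(1)}n)$ rather than $O(n^d)$. The fix is to exploit that canonical subsets at depth $i$ have size $O(n/r^i)$ while there are only $O(r^{di})$ of them, so their contribution $O(r^{di})\cdot O((n/r^i)^d)=O(n^d)$ is already charged once per level, and then to choose $r$ (slowly growing, or the level parameters) so that the $O(\log n)$ levels and the $d+1$ nestings do not stack into extra logarithmic factors — essentially Matou\v sek's careful hierarchical-cutting argument, which for $d\in\{2,3\}$ brings the total back down to $O(n^d)$. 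With that accounting in hand, the cascade inherits query time $O(d\log n)=O(\log n)$, completing the proof.
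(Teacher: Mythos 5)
The paper does not prove this theorem at all: it is quoted as a known result, with the cutting machinery of Lemma~\ref{lem:cutting} and the citation to Matou\v sek's hierarchical-cuttings paper standing in for the proof. Your reconstruction follows exactly that standard route (dualize, build a hierarchy of $(1/r)$-cuttings storing ``resolved below'' counts, and cascade $d+1$ such levels for the simplex), and the single-halfspace part of your argument is sound: the conflict-list sizes $O(n\,r^{(d-1)i})$, the $O(r^{di})$ cells per depth, the geometric series dominated by the last term, and the $O(\log n)$ descent are all correct and give $O(n^d)$ space and $O(\log n)$ query for one halfspace constraint.

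The gap is in the final space accounting for the multilevel cascade, which is precisely where the real content of the cited theorem lies. Your observation that at depth $i$ the secondary structures cost $O(r^{di})\cdot O\left((n/r^i)^d\right)=O(n^d)$ is only a per-depth bound; summed over the $\Theta(\log_r n)$ depths of the first-level hierarchy and nested $d+1$ times it gives $O(n^d\log^{O(1)}n)$, exactly the naive bound you concede. The sentence ``choose $r$ slowly growing, or the level parameters, so that the levels and nestings do not stack into extra logarithmic factors --- essentially Matou\v sek's careful hierarchical-cutting argument'' asserts the fix rather than providing it: with a non-constant $r$ the crossing/conflict-list bookkeeping, the point-location cost inside a cell (no longer $O(1)$ cells to test per node), and the interaction between the levels of the cascade all have to be redone, and this is the technically delicate part of the hierarchical-cuttings construction. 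Since the paper treats the theorem as a black-box citation this is a forgivable place to stop, but as a standalone proof the claimed $O(n^d)$ bound (with no polylogarithmic slack) is not established by what you wrote. A small additional remark: the statement is only claimed for $d=2,3$, and nothing in your argument uses that restriction, which is another hint that the clean $O(n^d)$ bound you are asserting is inherited from the cited construction rather than derived.
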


\section{Shallow Cuttings and Approximate Range Counting}
\label{sec:premsc}
To define shallow cuttings more formally.
We borrow the definition of shallow cuttings for general algebraic surfaces in~\cite{AHZ.UB.CGTA}.
Let $\sF$ be a collection of continuous and totally defined algebraic functions $f:\mathbb{R}^d\rightarrow\mathbb{R}$
for constant dimension $d$ and degree $\Delta$.
Each function $f\in \sF$ defines a continuous surface in $\mathbb{R}^{d+1}$.
Given any point $p=(p_1,p_2,\cdots,p_{d+1})\in\R^{d+1}$,
we say that $f$ passes below $p$ if $f(p_1,p_2,\cdots,p_d)\le p_{d+1}$.

The collection of all surfaces in $\sF$ partitions $\R^{d+1}$ into a subdivision consisting of
disjoint cells (faces of dimensions $0,1,\cdots,d+1$) that together cover the entire $\R^{d+1}$.
We call this subdivision the arrangement of $\sF$.
Given any point $p\in\R^{d+1}$, its level is defined by the number of functions $f\in\sF$
that passes below it.
We define the $(\le k)$-level of $\sF$ to be the closure of all points in $\R^{d+1}$
with level at most $k$.

A shallow cutting $\C$ for the $(\le k)$-level of $\sF$ (or a $k$-shallow cutting for short)
is a collection of disjoint cells that together cover the $(\le k)$-level of $\sF$
with the property that every cell $C\in\C$ in the cutting intersects a set $\sF_C$ of $O(k)$ functions in $\sF$.
We call $\sF_C$ the conflict list of $C$.

For a set $\sF$, we say it is shallow cuttable if it has the following properties:
\begin{enumerate}
	\item For each $k\in\mathbb{N}$, there exists a $(\le k)$-shallow cutting for $\sF$
	and the number of cells the shallow cutting has is bounded by $O(|\sF|/k)$.
	\item The shallow cutting can be constructed in time $O(|\sF|\log |\sF|)$.
	\item Given a hierarchy of $\alpha^i$-shallow cuttings, there exists a linear sized structure
	such that given any query point $q$, we can find the $\alpha^i$-shallow cutting $\C$ 
	and the cell $C\in\C$ that contains $q$ with the smallest $i$ in time $O(\log|\sF|)$.
\end{enumerate}

For halfspace and dominance range searching problems,
shallow cuttings are often used after translating the problem to the dual space,
in which the roles of inputs and outputs ``switch''.
It has been shown that both halfspace and dominance ranges are shallow cuttable in 2D and 3D~\cite{AHZ.UB.CGTA}.
More generally, we can count the number of points approximately in linear space and logarithmic time
for shallow cuttable ranges.
\approxrc*

We mention that the concept of shallow cuttings can also be defined in the primal space,
i.e., for a set $P$ of points.
For example, for dominance ranges,
given any point $p\in P$, we can define its level to be the number of points in $P$ dominated by $p$.
A (primal) $k$-shallow cutting for $P$ is a set $\C$ of points in $\R^d$
satisfying that each point in $P$ with level at most $k$ is dominated by a point in $\C$
and each point in $\C$ dominates $O(k)$ points in $P$. 
It has been shown that such shallow cuttings can be constructed efficiently
and we can find a shallow cutting level and a cell in a shallow cutting hierarchy efficiently
as in the shallow cutting in the dual space~\cite{at18}.

\section{Type-2 Colored Simplex Range Counting}
\label{sec:type-2}
In this section, we study the ``type-2'' colored range counting problem for simplex queries in $\R^d$.
In the problem, the input is a set of $n$ points in $\R^d$, and each point is assigned one color from a set of $F$ colors. 
We want to build a data structure such that given a query simplex $q$, 
we can report the number of points appearing in $q$ for each color.
In other words, we want to output a ``frequency vector'' for the colors.
Formally, we define frequency vectors as follows.

\begin{definition}[Frequency Vector]
	Given a point set $P$ with each point being assigned one of $F$ colors,
	the frequency vector of $P$ is defined to be $\mu(P) = (\mu_1,\dots,\mu_F)$ 
	where $\mu_i$ is the frequency of the $i$-th color in $P$.
\end{definition}

Before giving the full details, we describe 
the overall idea of our data structure.
Depending on different values of $F/w$, we will build different structures.
When $F/w > \frac{n}{\log^4n}$,
we build a partition tree $T$ using Matou\v sek's Theorem~\ref{thm:Mat}.
Otherwise,
we build a partition tree $T$ using Chan's Theorem~\ref{thm:Chan}.
For every vertex $v$ of $T$, 
we denote the partitioning simplex corresponding to $v$ by $v(\RR)$.
We store the colored points at the leaves of $T$ explicitly. 
Whereas, for every internal vertex $v$, we only store the frequency vector of the points in $v(\RR)$. 
The frequency vectors are represented in a compact way that we explain later. 

The query algorithm is as follows.
Given a query $q$, 
we query the corresponding partition tree $T$.
During the process, we maintain a running frequency vector, 
i.e., the total of all the different colors seen so far (in a compact representation). 
We start from the root $r$ of $T$. 
If $q$ fully contains $v(\RR)$ then we add the frequency vector of $v(\RR)$ to the running frequency vector. 
If $q$ does not contain $v(\RR)$ but has a non-empty intersection with it then we traverse its children. 
If $q$ and $v(\RR)$ have empty intersection we do nothing.
In case $v$ is a leaf of $T$, we explicitly count the frequency of every color in $v(\RR)$
and update the frequency vector. 

The total query time amounts to accessing the \emph{compact representations} 
of the frequency vectors for the internal vertices and the \emph{explicit counting} for the leaves which are traversed.

We now present the details. We start with the notion of compact representation. 

\subparagraph{Compact representations.}
Consider a frequency vector $\mu(\mu_1, \dots, \mu_F)$ and let $s=\sum_{i=1}^F \mu_i$.
We would like to store $\mu$ using $O(F\log(2(s+2F)/F))$ bits
while supporting the following two operations:
one, an ``add operation'' to add the representations of two such vectors to obtain a representation of the addition,
and two, an ``extract'' operation to retrieve the $i$-th frequency, $\mu_i$, for a given $i$ in constant time. 

We first describe the encoding using a third character \# besides 0 and 1:
We encode the $\mu_i$'s in binary and place the character \# between them. 
The representation can be
easily made binary by simply re-encoding the characters 0, 1, and \# in binary which increase the size of the
encoding by a factor of 2. 
Let $L=2\sum_{i=1}^F (\log (\mu_i+2) + 1)$, i.e., the length of the representation in bits. 
By the inequality of arithmetic and geometric means and $\sum_{i=1}^F\mu_i=s$,
\[
	L=2\sum_{i=1}^F \log (\mu_i +2)+ 1 = 2F + 2\log\prod_{i=1}^F(\mu_i+2) \le 2F+2F\log\frac{s+2F}{F}=2F\log\frac{2(s+2F)}{F}.
\]
We pack the representation in $O(\lceil L/w\rceil )$ words
and for each packed word, 
we store the number of the $\mu_i$'s it contains.
Let $c_j$ be the number of frequencies encoded in the $j$-th word.
Observe that $c_j = O(w)$ and thus, we can store them in a data structure for prefix sums
with constant query time. 
This concludes the description of the packed representation. 

Note that two representations can be added in a straightforward way using non-standard word operations
or using tabulations\footnote{E.g., pack into words of size  $\frac{\log n}{2}$
    and then build a table of size $n$ which supports the add operation in constant time.
}.

Now consider the extract operation with index $i$. 
We need to find the smallest index $j$ such that the prefix sum up to $j$ exceeds $i$.
However, as each $c_j$ is at most $w$, it can be encoded in unary and the extraction problem becomes equivalent to 
the \texttt{select} operation in bit vectors which can be performed in $O(1)$ time
after building an index that takes linear time~\cite{rrr07}.
To summarize, we have shown the following.

\begin{lemma}
\label{lem:compact}
	For any set of at most $s$ points
	with each point being assigned one of $F$ colors,
	we can encode its frequency vector with
	a compact representation of $O(1+\frac{F}{w}\log\frac{2(s+2F)}{F})$ words,
	where $w$ is the word size.
	Furthermore, we can add two compact representations in time $O(1+\frac{F}{w}\log\frac{2(s+2F)}{F})$
	and extract the frequency of the $i$-th color in time $O(1)$ for $1\le i\le F$.
\end{lemma}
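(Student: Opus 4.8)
The plan is to exhibit an explicit encoding of the frequency vector, verify the claimed size bound, and then describe how the two operations (add and extract) are implemented within the stated time bounds. First I would fix the alphabet $\{0,1,\#\}$ and encode $\mu=(\mu_1,\dots,\mu_F)$ as the concatenation of the binary expansions of $\mu_1,\dots,\mu_F$ with a separator $\#$ between consecutive blocks. To make every block nonempty and to avoid edge cases with $\mu_i=0$ or $\mu_i=1$, I would actually encode $\mu_i+2$ (so each block has length $\lfloor \log_2(\mu_i+2)\rfloor+1\le \log(\mu_i+2)+1$), which is what the displayed computation of $L$ already assumes. Re-encoding each of the three symbols in two bits makes the whole string binary at the cost of a factor $2$, giving total length
\[
	L \;=\; 2\sum_{i=1}^F\bigl(\log(\mu_i+2)+1\bigr).
\]

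Next I would bound $L$. Since $\sum_i \mu_i = s$, we have $\sum_i(\mu_i+2) = s+2F$, so by the AM--GM inequality $\prod_{i=1}^F(\mu_i+2)\le\bigl(\tfrac{s+2F}{F}\bigr)^F$, and hence
\[
	L \;=\; 2F + 2\log\prod_{i=1}^F(\mu_i+2) \;\le\; 2F + 2F\log\tfrac{s+2F}{F} \;=\; 2F\log\tfrac{2(s+2F)}{F},
\]
which is $O\!\bigl(1+\tfrac{F}{w}\log\tfrac{2(s+2F)}{F}\bigr)$ words after dividing by $w$ and accounting for the case $F<w$ with the additive $1$. I would then pack the bit string into $\lceil L/w\rceil$ machine words and, alongside, store for each packed word $j$ the number $c_j=O(w)$ of complete frequency blocks it contains, together with a constant-time prefix-sum (rank) structure over the $c_j$'s; since each $c_j\le w$ fits in $O(\log w)$ bits this auxiliary structure is asymptotically negligible and is built in linear time.

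For the \emph{add} operation, two packed representations of the same colour set are combined word by word; adding the two length-$O(w)$ encodings inside a word, propagating carries across the $\#$-delimited blocks, is a fixed Boolean computation on $O(w)$ bits, which the model performs either via the nonstandard word operations or, as the footnote notes, by tabulating over half-words of $\frac12\log n$ bits; this costs $O(1)$ per word and hence $O(1+\frac{F}{w}\log\frac{2(s+2F)}{F})$ overall. For the \emph{extract} operation with index $i$, I would first use the prefix-sum structure on the $c_j$'s to locate in $O(1)$ time the word $j$ holding the $i$-th block, then locate the block's start within that word: because each $c_j\le w$ the per-word counts can be written in unary and finding the start of the $i$-th block reduces to a \texttt{select} query on a bit vector, which by~\cite{rrr07} takes $O(1)$ time after a linear-time preprocessing; once the block boundaries are known the binary value $\mu_i+2$ is read off and $2$ subtracted, all in $O(1)$. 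The main obstacle is purely bookkeeping: ensuring that block boundaries are recoverable in constant time when a single frequency block straddles a word boundary — this is exactly what the per-word counters $c_j$ plus the \texttt{select}/prefix-sum machinery are designed to handle, and once that indexing is in place both operations are immediate. Collecting these pieces gives the statement of Lemma~\ref{lem:compact}.
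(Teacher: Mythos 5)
Your proposal is correct and follows essentially the same route as the paper's proof: the same \#-separated binary encoding (with each value shifted so blocks are nonempty), the same AM--GM bound giving $L\le 2F\log\frac{2(s+2F)}{F}$, packing into $\lceil L/w\rceil$ words with per-word block counts $c_j$, addition via nonstandard word operations or tabulation over half-words, and extraction via a constant-time prefix-sum structure plus a \texttt{select} query on the unary-encoded counts. The only difference is cosmetic: you make explicit the encoding of $\mu_i+2$ and the handling of blocks straddling word boundaries, details the paper leaves to its ``straightforward'' word-operation remark.
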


\subparagraph{The Data Structure.}
Now we describe the details of our data structure.
As mentioned before, we build partition trees
based on an earlier version of Matou\v{s}ek (Theorem~\ref{thm:Mat}) 
as well as a more recent (refined and simplier)
version of Chan (Theorem~\ref{thm:Chan}). 
The main reason why we need both structures is although Chan's version
is superior in many aspects,
it is not able to handle the case when the partition tree is very shallow
(which can happen in our application).
Fortunately, Matou\v sek's early version is able to give the optimal bound in this case.
So we will use different partition trees for different cases.

\partition*

\begin{proof}
We build two partition trees based on the value of $\frac{F}{w}$.
The base case for both cases is the same:
When the size of the subproblem reaches below $F/w$ we stop
and examine all points by brute force.

When $\frac{F}{w}\ge \frac{n}{\log^4n}$,
we build a partition tree $T$ of Matou\v sek's version (Theorem~\ref{thm:Mat})
with fanout $t=w^{\delta}$ for some small enough $\delta >0$.
For each node in $T$, we store a frequency vector 
of the points rooted at that node in the compact representation.
The query time $Q(n)$ satisfies the following recurrence relation
\[
  Q(n) = O(t^{\frac{d-1}{d}}) Q\left(\frac{n}{t}\right) + O(t)\cdot O\left(1 + \frac{F}{w}\log\frac{2(n/t+2F)}{F}\right).
\]

After running for $k$ steps we have, 
\begin{align}
  Q(n) \le c_0^k t^{k\frac{d-1}{d}}Q\left(\frac{n}{t^k}\right) 
  + \sum_{i=1}^{k} c_1t^{\frac 1d}t^{i(\frac{d-1}{d})}
  \cdot \left(1 + \frac{F}{w}\log\frac{2(n/t^i+2F)}{F}\right)\label{eq:sum},
\end{align}
for some constants $c_0, c_1$.

Note that when $\frac{n}{t^i}\ge 2F$,
$\log\frac{2(n/t^i+2F)}{F}\le \log\frac{4n/t^i}{F}$
and so the summation in Inequality~\ref{eq:sum}  is upper bounded by
\[
  \sum_{i=1}^{k} c_1t^{\frac 1d}t^{i(\frac{d-1}{d})}
  \cdot \left(1 + \frac{F}{w}\log\frac{4n}{Ft^i}\right)
  =   c_1\sum_{i=1}^{k} (t^{\frac 1d}t^{i(\frac{d-1}{d})}) + t^{\frac 1d}t^{i(\frac{d-1}{d})}
  \cdot \left(\frac{F}{w}\log\frac{4n}{Ft^i}\right).
\]
Let $f(i)=t^{i(\frac{d-1}{d})}\log\frac{4n}{Ft^i}$, and we have
\[
	\frac{f(i+1)}{f(i)}=t^{\frac{d-1}{d}}\left(1+\frac{\log\frac{1}{t}}{\log\frac{4n}{Ft^i}}\right)
	= t^{\frac{d-1}{d}}\left(1+\frac{\log t}{\log\frac{Ft^i}{4n}}\right)
	\ge t^{\frac{d-1}{d}} \left(1+\frac{\log t}{\log\frac{w}{4}}\right)
	\ge w^{\delta(\frac{d-1}{d})}=\omega(1),
\]
where the first inequality follows from $\frac{n}{t^i}\ge\frac{F}{w}$
and the last inequality follows from $t=w^{\delta}$ for some constant $\delta$.
When $\frac{n}{t^i} < 2F$, the summation in Inequality~\ref{eq:sum}  is upper bounded by
\[
  \sum_{i=1}^{k} c_1t^{\frac 1d}t^{i(\frac{d-1}{d})}
  \cdot \left(1 + \frac{F}{w}\log\frac{2(2F+2F)}{F}\right)
  \le
  \sum_{i=1}^{k} c_1t^{\frac 1d}t^{i(\frac{d-1}{d})}
  \cdot \left(1 + \frac{F}{w}\log 8\right).
\]
In both cases, the summation is a geometric series and it is dominated by the last term, i.e.,
when $i$ achieves the maximum value.

Since $\frac{F}{w}\ge \frac{n}{\log^4n}$,
we observe that after constant levels of applying
Matou\v sek's scheme we will reach the base case because we 
stop as soon as the subproblem size goes below $\frac{F}{w}$. 
So we have $ \frac{F}{w^{1+\delta}} \le \frac{n}{t^k} \le \frac{F}{w}$ which 
means $t^k \le \frac{nw^{1+\delta}}{F}$ and $t^k \ge \frac{nw}{F}$.
Plugging in these inequalities in Equation~\ref{eq:sum}, and using that
$Q(n/t^k)= O(\frac{F}{w})$, we obtain the following bound  
\begin{align}
  Q(n) = O\left( \frac{n^{\frac{d-1}{d}}F^{\frac{1}{d}}}{w^{\frac{1}{d}-\delta}}\right).\label{eq:qm}
\end{align}

When $\frac{F}{w} < \frac{n}{\log^4n}$,
we use Chan's version (Theorem~\ref{thm:Chan}) to build the partition tree.
The observation here is that we can partition the point set into at least $\tau=\log^4n$ subsets.
Now the crossing number is at least $\log^2 n$ and thus the main term $O((b\tau)^{1-1/d})$ dominates.

We build a partition tree based on Chan's method (Theorem~\ref{thm:Chan}) with parameter $b=\Theta(1)$.
Again, we attach the corresponding frequency vectors to internal nodes.
The query time $Q(n)$ in this case is bounded by
\[
	Q(n) = O(n^{1-1/d})
	+ O\left( \sum_{i=0}^{\beta-1} bl(b^i) \cdot \left( 1+\frac{F}{w}\log\frac{2(\frac{n}{b^i}+2F)}{F} \right) \right),
\]
where the first term is the standard cost of traversing the partition tree in Chan's version
and the second term is the cost of examining all the disjoint cells fully contained in $q$,
and the parameters $\beta=\log_b\frac{n}{F/w}=\Omega(\log\log n)$ and $l(u)$ is the crossing number of $u$ cells.
Note that by a similar analysis as we did before, the summation is dominated by its last term, i.e.,
\begin{align*}
	\sum_{i=0}^{\beta-1} bl(b^i) \cdot \left( 1+\frac{F}{w}\log\frac{2(\frac{n}{b^i}+2F)}{F} \right)
	&= O\left( b \left(\frac{ nw}{b F}\right)^{\frac{d-1}{d}} \cdot \left( 1+\frac{F}{w}\log\frac{2(\frac{F}{w}+2F)}{F} \right) \right)\\
	&= O\left(\frac{n^{\frac{d-1}{d}}F^{\frac{1}{d}}}{w^{\frac{1}{d}}}\right).
\end{align*}
So $Q(n)=O \left(\frac{n^{\frac{d-1}{d}}F^{\frac{1}{d}}}{w^{\frac{1}{d}}}\right)$ in this case.
Note that the bound in Equation~\ref{eq:qm} is slightly larger and thus our query time is bounded
by Equation~\ref{eq:qm}.

\end{proof}

\section{Approximate Heavy Hitter Summary Queries}
\label{sec:hh}
In this section, we prove the following approximate heavy hitter summary query results for halfspace and dominance ranges.

\thmddd*
\thmdd*
\thmdomhh*

We start with the description of a base data structure for more general simplex ranges which will be used as a building block for our final AHHS solution. 

\subsection{Base Solution for AHHS Queries}
Our base data structure is for the following problem:
\probcoarse*

Note that this is a coarser problem compared to the original AHH query Problem~\ref{pr:ahhs}
since the error is fixed to be $\varepsilon|P|$ and is independent of the query range.
We prove the following theorem for this base problem:

\thmbase*

We will handle colors with different frequencies differently.
To simplify the exposition, we make the following definition:

\begin{definition}
	Let $P$ be a set of points with colors.
	Let $P_i$ be the set of points with color $i$.
	We say a color $i$ is \textbf{big} if $|P_i|\ge (\varepsilon |P|)^{\frac{2d}{d-1}}$;
	otherwise we say it is \textbf{small}.
	For a small color $i$,
    we say it is \textbf{$\delta$-small} if $\delta (\varepsilon |P|)^{\frac{2d}{d-1}} \le |P_i| \le 2\delta (\varepsilon |P|)^{\frac{2d}{d-1}}$.
\end{definition}

Note that a simple calculation shows that if $|P| > \varepsilon^{-\frac{2d}{d+1}}$, then there are no big
colors since the bound in the definition of a big color exceeds $|P|$.

\subparagraph{The data structure.} 
The idea is to count big colors \emph{exactly} and count small colors
\emph{approximately} through $\varepsilon$-approximations.
We collect the points with big colors in a set $P_\bbig$
and build a data structure $\Psi_\bbig$ on $P_\bbig$ using Theorem~\ref{thm:partition}. 
For each small color $i$, we first compute an $\varepsilon_i$-approximation $P'_i$ for $P_i$ 
where $\varepsilon_i = \varepsilon |P|/|P_i|$.
For a given value $\delta \le 1$, 
let $P_{\delta}$ be the set of points whose colors are $\delta$-small
and let 
$P'_\delta$ be the union of the approximations built on the $\delta$-small colors. 
We build $P'_\delta$ for
every $\delta = \frac{1}{2}, \frac{1}{4}, \frac{1}{8}, \cdots,\frac{1}{2}(\varepsilon|P|)^{-\frac{d+1}{d-1}}$. 
We can bound the number of colors that are $\delta$-small to be $F_\delta = 
\Theta(|P_\delta|/(\delta (\varepsilon |P|)^{\frac{2d}{d-1}}))$.
We also store $P'_\delta$ in a data structure $\Psi_\delta$ using Theorem~\ref{thm:partition}.

\subparagraph{The query algorithm.} The query is answered as follows. 
Given a query simplex range $q$,
we query the data structures $\Psi_\bbig$ as well as all the data structure
$\Psi_\delta$, for 
every $\delta = \frac{1}{2}, \frac{1}{4}, \frac{1}{8}, \cdots,\frac{1}{2}(\varepsilon|P|)^{-\frac{d+1}{d-1}}$. 
We examine each output frequency vector of colors and output
a color if its frequency is more than $\varepsilon|P|$.

The correctness is trivial since we are either counting the colors exactly or 
we are using an $\varepsilon_i$-approximation. In the latter case, the error
is at most $\varepsilon_i |P_i| = \varepsilon |P|$.
Since we can ignore colors of frequencies no more than $\varepsilon|P|$,
we only need to consider $\delta$-small colors for $\delta\ge\frac{1}{2}(\varepsilon|P|)^{-\frac{d+1}{d-1}}$.

\subparagraph{Space bound.}
Let $Y = |P_{\text{Big}} \cup \displaystyle\bigcup_{i \in small} P'_i|$.
The space usage of our data structure is clearly bounded by $O(Y)$.
We now show that $Y=O(X)$ which would prove our claim on the space bound. 

First we bound the size of the $\varepsilon_i$-approximation 
for each small color $i$. 
Consider a $\delta$-small color $i$ and the set $P_i$.
By construction, we are using an $\varepsilon_i = \varepsilon |P|/|P_i|$ approximation which has
size
\begin{align}
  O\left( \left(\frac{1}{\varepsilon_i}\right)^{\frac{2d}{d+1}} \right)
  &= O\left( \left( \frac{|P_i|}{\varepsilon |P|}\right)^{\frac{2d}{d+1}} \right)
  = O\left( \left( \frac{2\delta (\varepsilon |P|)^{\frac{2d}{d-1}}}{\varepsilon |P|} \right) ^{\frac{2d}{d+1}} \right) \nonumber\\
  &=  O\left( \left( \delta (\varepsilon |P|)^{\frac{d+1}{d-1}} \right) ^{\frac{2d}{d+1}}  \right)
  = O\left(\delta^{\frac{d-1}{d+1}} |P_i| \right) = O(|P_i|), \label{eq:esize}
\end{align}
where the second last equality follows from $\delta(\varepsilon|P|)^{\frac{2d}{d-1}} \le |P_i|$.
Thus, each $\varepsilon_i$-approximation has size at most $O(|P_i|)$ since $\delta\le1/2$, and so $Y\le|P|$. 

When $|P| > \varepsilon^{-\frac{2d}{d+1}}$, as we argued, there is no big color
and all the colors are small.
In this case, we can bound
\begin{align}
  Y \le \left|\bigcup_{i\in small} P'_i \right| 
  = \sum_{i} O\left( \left(\frac{1}{\varepsilon_i}\right)^{\frac{2d}{d+1}} \right)
  = O\left( \sum_{i} \left( \frac{|P_i|}{\varepsilon |P|}\right) ^{\frac{2d}{d+1}} \right) 
  = O\left( \left(\frac{1}{\varepsilon}\right)^{\frac{2d}{d+1}} \right),
\end{align}
where the last equality follows from the convexity of the function $x^{\frac{2d}{d+1}}$ and thus the sum is maximized when all the values
of $|P_i|$ are 0 and one of them equals $|P|$.
We have thus shown that $Y \le X= \min\{|P|, \varepsilon^{-\frac{2d}{d+1}}\}$
and so $O(X)$ is the space bound.

\subparagraph{Query time bound.}
It remains to analyze the query time. 
For $P_{Big}$,
note that the number of colors is no more than $F_{Big}=|P|/(\varepsilon|P|)^{\frac{2d}{d-1}}$.
By Theorem~\ref{thm:partition},
the data structure we built for $P_{Big}$ will have query time
\begin{align}
	O(|P_{Big}|^{(d-1)/d}+|P_{Big}|^{(d-1)/d}F_{Big}^{1/d}/w^{\alpha})
	&= O\left(|P|^{(d-1)/d}\right)+O\left(\frac{|P|^{\frac{d-1}{d}}|P|^{\frac{1}{d}}}{w^{\alpha}(\varepsilon|P|)^{\frac{2}{d-1}}}\right)\nonumber\\
	&= O\left(X^{(d-1)/d}\right)+O\left(\frac{|P|^{1-\frac{2}{d-1}}}{w^{\alpha}\varepsilon^{\frac{2}{d-1}}}\right)\nonumber,
\end{align}
for some constant $\alpha$,
where the first equality follows from $P_{Big}\subset P$
and the last equality follows from $|P|\le \varepsilon^{-\frac{2d}{d+1}}$ for big colors to exist
in which case $X=|P|$ by definition.
We now bound the query time for small colors.
By Equation~\ref{eq:esize}, the total size of the $\varepsilon$-approximations stored for the $\delta$-small
colors is $n_\delta = |P'_\delta| = O( \delta^{\frac{d-1}{d+1}} |P_\delta|)$.
By Theorem~\ref{thm:partition}, the query time of the data structure built on $P'_\delta$ is 
\begin{align}
O\left(\max\left\{ n_\delta^{(d-1)/d}, n_\delta^{(d-1)/d} \,F_\delta^{1/d}/w^\alpha\right\}\right) 
= O(n_\delta^{(d-1)/d}) + O (n_\delta^{(d-1)/d} \,F_\delta^{1/d}/w^\alpha).\label{eq:ddsum}
\end{align}
The total query time for small colors is the sum over all $\delta$ of the expression in Equation~\ref{eq:ddsum}.
We can bound each of the two terms separately. 

First observe that
\[
	\sum_{\delta} \left(\delta^{\frac{d-1}{d+1}} \left( \frac{|P_\delta|}{|P|} \right)\right)^{(d-1)/d} 
	\le \sum_{\delta} \delta^{\frac{(d-1)^2}{d(d+1)}}
	= O(1),
\]
since the summation is over geometrically decreasing values of $\delta$ and
$|P_\delta|\le |P|$ by definition.
When $X = \min\{|P|, \varepsilon^{-\frac{2d}{d+1}}\} = |P|$, this implies
\[
	\sum_\delta n_\delta^{\frac{d-1}{d}} = \sum_{\delta} (\delta^{\frac{d-1}{d+1}} |P_\delta|)^{(d-1)/d} \le |P|^{\frac{d-1}{d}} \le X^{\frac{d-1}{d}}.
\]

When $X = \min\{|P|, \varepsilon^{-\frac{2d}{d+1}}\} = \varepsilon^{-\frac{2d}{d+1}}$, by definition,
\begin{align*}
	\sum_\delta n_\delta^{\frac{d-1}{d}} = \sum_\delta \left(\sum_{i:\delta\textrm{-small}} \left(\frac{1}{\varepsilon_i}\right)^{\frac{2d}{d+1}}\right)^{\frac{d-1}{d}}
	&= \sum_\delta \left(\sum_{i:\delta\textrm{-small}} \left(\frac{|P_i|}{\varepsilon|P|}\right)^{\frac{2d}{d+1}}\right)^{\frac{d-1}{d}}
	\le \sum_\delta \left(\frac{|P_\delta|}{\varepsilon|P|}\right)^{\frac{2(d-1)}{d+1}}\\
	&\le \left(\frac{|P|}{\varepsilon|P|}\right)^{\frac{2(d-1)}{d+1}}
	= \left(\left(\frac{1}{\varepsilon}\right)^{\frac{2d}{d+1}}\right)^{\frac{d-1}{d}}=X^{\frac{d-1}{d}},
\end{align*}
where the first inequality follows from $\frac{2d}{d+1}\ge1$ for $d\ge1$ and the second inequality follows from $\frac{2(d-1)}{d+1}\ge 1$
when $d\ge 3$.

For the second term, since $F_{\delta} = O(|P_{\delta}|/(\delta(\varepsilon|P|)^{\frac{2d}{d-1}}))$,
\begin{align}
\sum_{\delta}O\left(\frac{n_\delta^{\frac{d-1}d} F_\delta^{\frac 1d}}{w^{\alpha}}\right) 
	&=\sum_{\delta} O\left(
			\frac{(\delta^{\frac{d-1}{d+1}} |P_\delta|)^{\frac{d-1}d} }{w^{\alpha}}\cdot
				\left( \frac{|P_\delta|}{\delta (\varepsilon |P|)^{\frac{2d}{d-1}}} \right)^{\frac 1d}
		\right)\nonumber
	=\sum_{\delta} O \left( \frac{\delta^{\frac{d-3}{d+1}} |P_\delta|}{w^{\alpha}(\varepsilon |P|)^{\frac{2}{d-1}}} \right)\\\nonumber
	&=O \left( \frac{ (\sum_{\delta} \delta^{\frac{d-3}{d+1}}) |P|}{w^{\alpha}(\varepsilon |P|)^{\frac{2}{d-1}}} \right)
	\le O \left( \frac{ |P|^{1-\frac{2}{d-1}}}{w^{\alpha}\varepsilon^{\frac{2}{d-1}}} \right),
\label{eq:qtime}
\end{align}
where the third equality follows from the observation that the sum of all $n_\delta$ is at most $|P|$
and the last equality follows from $d\ge 3$ and $\delta\le\frac{1}{2}$ forms a decreasing geometric series.

\subsection{3D AHHS Queries}
\label{sec:3dsol}
Now we proceed to show how to answer 3D heavy hitter queries
for halfspace and dominance ranges.
We show the following two results.

\thmddd*
\thmdomhh*

\subsubsection{The Data Structure for 3D AHHS Queries}
Let $h$ be the query halfspace (or dominance range) and let $k$ be the number of points in $h$ (i.e., $k = |h \cap P|$).
We consider the dual space. 
Let $H$ be the set of hyperplanes dual to $P$ and let $q$ be the dual of $h$.
Using Theorem~\ref{thm:approxrc}, we can find a $(1+\alpha)$-factor approximation, $k^*$, of $k$ in $O(\log n)$ time
and using linear space.
In our data structure, 
we will build a series of shallow cuttings, using parameters 
$\beta, \lambda$.
Depending on whether we are dealing with halfspace or dominance ranges, the values of these parameters will be
different. 
Nonetheless, since the value of $k^*$ is known, we can choose the correct level to query. 
To give more details, we consider the following cuttings. 

\begin{itemize}
    \item Base cutting: We build a \textit{base shallow cutting} for the $\varepsilon_0^{-1}$ level.
    \item Lower level cuttings: These are built for levels between  $\varepsilon_0^{-1}$ and
        $\varepsilon_0^{-1} w^{\beta\lambda}$, where $w=\Theta(\log n)$ is the word-size,
        and $\beta$ and $\lambda$ are parameters to be set later.
        To be specific, we build $\varepsilon_0^{-1}w^{\beta i}$-shallow cuttings for
        $i=1, \cdots, \lambda$. 
    
    \item Higher level cuttings: We build $\varepsilon_0^{-1}w^{\beta\lambda}2^i$-shallow cuttings for
        $i=1, \cdots, \log(n\varepsilon_0/w^{\beta\lambda})$. 
\end{itemize}

The data structure stores different auxiliary data structures depending on the shallow cutting. 
As a result, since the data structure knows $k^*$, it can choose the correct level to query.
For instance, if $k^* < \varepsilon_0^{-1}$, then the query is answered using the
auxiliary data structures built for the base cutting.
Similarly, when $\varepsilon_0^{-1} \le k^* \le \varepsilon_0^{-1} w^{\beta\lambda}$, the data structure uses
lower level cuttings, and otherwise it uses higher level cuttings. 

We describe each of them in turn. 
We start with the higher level cuttings.

\paragraph{Higher level cuttings.}
The data structure for these cuttings is simple.
For each cell $\cell$ in a higher level cutting,
we build a $c_1\varepsilon_0$-approximation $E_\cell$ for its conflict list $\sS_\cell$
and store it in the base structure with parameter $c_1\varepsilon_0$.

\subparagraph{The query algorithm.}
Recall that the query probes the higher level cuttings when $k^*$ exceeds $\varepsilon_0^{-1}w^{\beta\lambda}$.
In this case, it can find the cell $\cell \in \C_i$ such that $|\sS_\cell |= O(k^*)$. 
We query the base data structure built on $\sS_\cell$ and return the result as the AHHS.
The query time here is $O(\varepsilon_0^{-1})$ which follows directly from the base structure. 

\subparagraph{Correctness.}
Consider a color $c$ that has frequency $j \varepsilon_0 k$ among the hyperplanes that pass below $q$, for $j \ge 1$.
Since $E_\cell$ is a $c_1 \varepsilon_0$-approximation of $\sS_\cell$, 
it follows that $c$ appears $j\varepsilon_0 |E_\cell|\pm c_1\varepsilon_0 |E_\cell|$ times among the subset of $E_\cell$ that passes below $q$. 
If $c_1$ is chosen to be a small enough constant (e.g., $c_1 < 0.5$), 
$c$ appears frequent enough in $E$ that 
the base structure would report $c$ as well as its frequency with error at most $c_1\varepsilon_0|\sS_\cell|$.
This in turn allows us to approximate the frequency of $c$ within hyperplanes below $q$ up to an additive
error of $2c_1 \varepsilon_0 k$.

\subparagraph{Space usage.}
Here, we need to distinguish between halfspaces and dominance ranges. 
In the case of halfspaces, the size of $E_\cell$ is bounded by $O(\varepsilon_0^{-3/2})$ whereas for
dominance ranges this is bounded by $O(\varepsilon_0^{-1} \log^{4}\frac{1}{\varepsilon_0})$ by Theorem~\ref{thm:eapp}.  
Consequently, for halfspaces we have the following situation:
for a cell $\cell \in \C_i$ where $\C_i$ is a $\varepsilon_0^{-1}w^{\beta\lambda}2^i$-shallow cuttings,
according to Theorem~\ref{thm:base},
the space of the base structure built for $E_\cell$ is bounded by $O(\varepsilon_0^{-3/2})$ regardless and thus 
the total space consumption is bounded by 
\[
    \sum_{i=1}^{\infty} O\left(\varepsilon_0^{-3/2} \frac{n}{\varepsilon_0^{-1}w^{\beta\lambda}2^i}\right) = O(n)
\]
which follows by picking $\lambda$ such that $w^{\beta\lambda} = \varepsilon_0^{-1/2}$ for a small enough value $\beta$.

For dominance ranges, it is sufficient to pick $\lambda$ such that $\log^{4}\frac{1}{\varepsilon_0} \le w^{\beta\lambda} \le \varepsilon_0^{-1/2}$.
We will need to pick $\beta$ to be a small enough constant. 
However, this implies that $\lambda$ can also be chosen to be a constant for dominance cases. 
The total space consumption here is bounded by 
\[
    \sum_{i=1}^{\infty} O\left(\varepsilon_0^{-1}\log^{4}\frac{1}{\varepsilon_0}\cdot  \frac{n}{\varepsilon_0^{-1}\log^{4} \frac{1}{\varepsilon_0}\cdot 2^i}\right) = O(n).
\]

\paragraph{Lower level cuttings.}
For each cell $\cell$ in a lower level cutting, we store the list of \emph{frequent} colors that appear at least
$\frac{c_1\varepsilon_0}{w^\beta} |\sS_{\cell}|$ times in the conflict list $\sS_\cell$ of $\cell$. 
So for each cutting cell, we store $\frac{w^\beta}{c_1\varepsilon_0}$ frequent colors in the base structure in Theorem~\ref{thm:base},
and so the total space needed for all lower cuttings is
\[
    \sum_{i=1}^{\lambda} O\left( \frac{n}{w^{\beta i}/\varepsilon_0}\cdot\frac{w^\beta}{c_1\varepsilon_0} \right) = O(n).
\]

This also enables us to \emph{re-number} the colors: 
consider a $\cell$ in the lower level cuttings. 
We re-number the candidate colors in $\cell$ from 1 up to $w^{\beta}\varepsilon_0^{-1}$ and store them in a dictionary
such that given any (global) color $C$, we can fetch its index in our new re-numbering in $O(1)$ time. 
Re-numbering is crucial for using the base data structure as it returns a compact representation.
Note that the set of frequent colors is a super set of the actual heavy hitters.
Indeed, the base structures we built for lower levels can output a frequency vector of $\frac{w^{\beta}}{c_1\varepsilon_0}$ colors
(in a compact representation) and thus we cannot afford to check all these colors.

We use an auxiliary structure to generate a list of $O(\frac{1}{\varepsilon_0})$ \emph{candidate} colors for each query
and as we will show, we only need to check these candidate colors.
We call it the \emph{testing} structure.
This is also done via shallow cuttings:
We build shallow cuttings for level $\varepsilon_0^{-1}2^i$ for $i=1,2,\cdots,\log (w^{\beta\lambda})$.
We call these \emph{testing} cuttings.
For each cell $\cell$ in the testing cuttings,
we collect a list of colors that appear at least
$c_1\varepsilon_0 |\sS_{\cell}|$ times in the conflict list $\sS_\cell$ of $\cell$.
We call these colors \emph{candidate} colors
and clearly, for each cell $\cell$, we store at most $\frac{1}{c_1\varepsilon_0}$ candidate colors.
The total space, over all testing cuttings, needed to store the candidate colors is at most 
\[
    \sum_{i=1}^{\infty} O\left( \frac{n}{\varepsilon_0^{-1} 2^i}\cdot\frac{1}{c_1\varepsilon_0} \right) = O(n).
\]

\subparagraph{The query algorithm.}
Recall that the query probes the lower level cuttings when $k^*$ is between $\varepsilon_0^{-1}$ and $\varepsilon_0^{-1}w^{\beta\lambda}$.
In this case,  we find the smallest index $i$ such that $\varepsilon_0^{-1}w^{\beta i}$ exceeds $k^*$.
We find the cell $\cell$ in $\C_i$ that contains the query.
It thus follows that $|\sS_\cell |= O(w^\beta k)$. 
We query the base data structure implemented on $\sS_\cell$ with $q$.
However, since we have built the base structure with error parameter $\varepsilon_0/w^\beta$, 
the base data structure may return a list of $O(\varepsilon_0^{-1}w^\beta)$ colors, potentially in packed representation. 
We pick $\beta$ small enough such that 
the result of the base data structure fits in $\varepsilon_0^{-1}$ words. 
On the other hand, since $|\sS_\cell|=O(\varepsilon_0^{-1}w^{\beta\lambda})=O(\varepsilon_0^{3/2})$,
the query time of the base data structure is $O(\varepsilon_0^{-1})$ by Theorem~\ref{thm:base}.
In addition, we query the testing structure to obtain a list of $O(\varepsilon_0^{-1})$ candidate colors.
For each candidate color, we can use the stored dictionary to find its frequency in the
packed representation, if it exists, we report it using the extract operation.
This concludes the query algorithm. 

\subparagraph{Correctness.}
The base data structure finds all the colors that appear at least $\frac{c_1\varepsilon_0 |\sS_\cell|}{w^{\beta}} = 
\frac{O(c_1\varepsilon_0 k w^\beta)}{w^\beta} \le \varepsilon_0 k$ times below $q$, if $c_1$ is chosen to be a small
enough constant. 
Thus, any color that appears $\varepsilon_0 k$ times is reported correctly by the base data structure.

\subparagraph{Space analysis.}
By Theorem~\ref{thm:base}, each level in a lower level  shallow cutting $\C_i$ will consume
$O(n)$ space. 
There are $\lambda$ lower level shallow cuttings and thus the total space used by
lower level cuttings is $O(n \lambda)$. 

For halfspace ranges, recall that we needed to pick $\lambda$
such that $w^{\beta\lambda} = \varepsilon_0^{-1/2}$.
As $\beta$ is chosen to be a small value, it follows that for halfspaces we can choose
$\lambda = O(\log_w \frac{1}{\varepsilon_0})$,
leading to an $O(n\log_w \frac{1}{\varepsilon_0})$ space bound.

For dominance ranges, it is sufficient to pick $\lambda$ such that $w^{\beta\lambda}\ge\log^4\frac{1}{\varepsilon_0}$.
In this case, as we have shown before, by picking $\beta$ to be a small enough constant,
it suffices to pick $\lambda$ to be a large enough constant,
leading to an $O(n)$ space bound.

\paragraph{The base cutting.}
The base cutting, $\C_0$, is a $\varepsilon_0^{-1}$-shallow cutting.
We store the conflict list of the cells in $\C_0$ explicitly.
By Lemma~\ref{lem:sc}, in this case, we find $\C_0$ and the cell $\cell\in\C_0$ that contains $q$ in time $O(\log n)$.
Here, we can explicitly access $\sS_\cell$ and answer the AHHS query directly,
in time $O(\log n + \varepsilon_0^{-1})$ with no error.

\subsubsection{Putting It Together}
Over all three different shallow cutting levels, the total space complexity was
$O(n\lambda)$.
For dominance ranges, we saw that we can afford to pick $\lambda$ to be a constant.
This yields a data structure with $O(n)$ space.
For halfspaces, we picked
$\lambda = O(\log_w \frac{1}{\varepsilon_0})$
and thus we get a data structure with 
$O(n\log_w \frac{1}{\varepsilon_0})$ space. 

\subsection{2D AHHS Queries}
In the plane, we can actually improve the space complexity even further.
However, this requires modifying the base structure.
The main idea here is that the size of $\varepsilon$-approximation in 2D is small enough that we can 
try more aggressive approximations. 

\subsubsection{Base Solution for 2D AHHS Queries}
This subsection is devoted to the proof of the specialized base structure for 2D. 

\begin{theorem}\label{thm:base2d}
    Consider a colored point set $P$ in $\R^2$ and let $\varepsilon$  be parameter.
    Let $B=\varepsilon|P|$.
    We can store $P$ in a data structure of linear size such that given a set $q_c$ of $F=\frac{1}{\varepsilon B^{1/3}}$ colors,
    one can estimate the frequency of the colors in $q_c$ up to additive error of $\varepsilon |P|$ in $O(F)$ time.
\end{theorem}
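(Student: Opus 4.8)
The plan is to adapt the $d\ge 3$ base solution (Theorem~\ref{thm:base}) to the plane, exploiting the fact that an $\varepsilon$-approximation for halfspaces in $\R^2$ has size only $O(\varepsilon^{-4/3})$ — this is the ``more aggressive approximation'' — and to use the planar instance ($d=2$) of Theorem~\ref{thm:partition} as the counting engine. Write $B=\varepsilon|P|$. The first (and trivial) reduction is that a colour $i$ with $|P_i|<B$ may be reported with frequency $0$, incurring error at most $|P_i|<B=\varepsilon|P|$; more generally a colour with $|P_i|\le \rho B$ only needs $O(\rho)$ levels of resolution. Hence at most $|P|/B=1/\varepsilon$ \emph{heavy} colours matter, and, exactly as in the proof of Theorem~\ref{thm:base}, we bucket them by frequency: bucket $\delta$ (for $\delta=1,2,4,\dots$) holds the colours $i$ with $|P_i|\in[\delta B,2\delta B)$, of which there are $O(|P|/(\delta B))=O(1/(\delta\varepsilon))$. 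A further filtering step (performed by the construction that invokes this base structure, e.g.\ a list of ``frequent'' colours stored per shallow-cutting cell as in the 3D lower-level cuttings) restricts the colour universe to the $F=\tfrac{1}{\varepsilon B^{1/3}}$ colours that can appear in $q_c$, and we re-number them into a contiguous block so a global colour maps to its local index in $O(1)$ time.

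For a colour $i$ in bucket $\delta$ we store a single $\varepsilon_i$-approximation $P'_i$ of $(P_i,\text{halfspaces})$ with $\varepsilon_i=B/(2|P_i|)\approx(2\delta)^{-1}$; by Theorem~\ref{thm:eapp} this has size $O(\min\{|P_i|,\varepsilon_i^{-4/3}\})=O(\delta^{4/3})$, and counting $|h\cap P'_i|$ and rescaling by $|P_i|/|P'_i|$ estimates $|h\cap P_i|$ up to additive error $\varepsilon_i|P_i|=B/2\le\varepsilon|P|$. Set $A=\bigcup_{i\text{ heavy}}P'_i$, and build the $d=2$ instance of Theorem~\ref{thm:partition} on $A$ (together with the three predecessor structures for its rank-space reduction). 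For the space, $|A|=\sum_{\delta}\sum_{i\ \delta\text{-heavy}}O(\min\{|P_i|,\delta^{4/3}\})$; since $\sum_i|P_i|\le|P|$ and $x\mapsto x^{4/3}$ is convex, bucket $\delta$ contributes $O(\delta^{1/3}|P|/B)=O(\delta^{1/3}/\varepsilon)$, so summing over the $O(\log\tfrac1\varepsilon)$ geometrically growing buckets gives $O(\min\{|P|,\varepsilon^{-4/3}\})=O(|P|)$. This bucket-by-bucket convexity estimate is precisely the ``stringent'' accounting from Theorem~\ref{thm:base}, specialised to $d=2$.

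The query, given a halfspace $q$ and the colour set $q_c$, runs Theorem~\ref{thm:partition} on $A$, extracts in $O(1)$ time each the rescaled frequency of every requested heavy colour, and answers the light colours of $q_c$ by the fixed value; correctness is immediate since each reported value is off by at most $B=\varepsilon|P|$. The query cost on $A$ with $F$ colours is $O(|A|^{1/2}+|A|^{1/2}F^{1/2}/w^{\alpha})$, and the value $F=\tfrac{1}{\varepsilon B^{1/3}}$ (equivalently $F=\varepsilon^{-1}B^{-1/3}=\varepsilon^{-4/3}|P|^{-1/3}$, using $|P|=B/\varepsilon$) is exactly the one for which, in the range of $|P|$ for which this base structure is used, both terms reduce to $O(F)$: with $|A|=O(\varepsilon^{-4/3})$ the first term is $O(\varepsilon^{-2/3})$, the second is $O(\varepsilon^{-2/3}F^{1/2}/w^{\alpha})$, and a short calculation balances the exact-counting term against the bit-packing term from Theorem~\ref{thm:partition} to land at $O(F)$. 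I expect this last balancing to be the main obstacle: one must be careful about (i) which parameter regime of $|P|$ makes the naive $O(\varepsilon^{-2/3})$ partition-tree-traversal cost already $O(F)$, (ii) that the colour universe has been genuinely pared down to $F$ (not $1/\varepsilon$) colours before Theorem~\ref{thm:partition} is invoked, since its query time depends on the total number of colours stored, and (iii) that all of the per-colour $\varepsilon_i$-approximations fit the compact-representation/re-numbering machinery so extraction is truly $O(1)$ per requested colour.
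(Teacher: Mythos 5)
Your construction for the pooled set $A$ is fine (its size bound $O(\min\{|P|,\varepsilon^{-4/3}\})$ is correct), but the final step — that the query time of Theorem~\ref{thm:partition} on $A$ ``balances to $O(F)$'' — is exactly where the argument breaks, and it cannot be repaired in the form you propose. The partition-tree traversal term $O(|A|^{1/2})$ does not depend on how few colors you ask about: with $F=\varepsilon^{-4/3}|P|^{-1/3}$, the inequality $|A|^{1/2}\le O(F)$ holds only when $|P|=O(\varepsilon^{-2})$, and the theorem carries no such restriction. Worse, the regime $|P|\gg\varepsilon^{-2}$ is precisely the one in which this base structure is deployed in the 2D AHHS construction (conflict lists of size $g_i$ up to $\varepsilon_0^{-3}$ with $\varepsilon=\varepsilon_0 g_{i-1}/g_i$ force $|P|>\varepsilon^{-2}$ at the top levels), so there your query cost is $\Theta(\varepsilon^{-2/3})\gg F\approx\varepsilon_0^{-1}$ and the optimal query bound is lost. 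Your concern (ii) also cannot be discharged the way you suggest: $q_c$ is supplied at query time and may be an arbitrary set of $F$ colors, so the structure must store every color of frequency at least $\varepsilon|P|$ — up to $1/\varepsilon$ of them, not $F$ — which makes the bit-packed term $O(|A|^{1/2}(1/\varepsilon)^{1/2}/w^{\alpha})$ in the worst case; restricting the stored universe to the queried colors is circular (the candidate-color filtering in the AHHS solution constrains what is \emph{queried}, not what must be \emph{stored}).

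The paper avoids any traversal over the pooled set. It splits colors at the threshold $\varepsilon B|P|$ (not $B$): a \emph{frequent} color $j$ ($n_j\ge\varepsilon B|P|$, so there are at most $(\varepsilon B)^{-1}$ of them, and none exist once $|P|>\varepsilon^{-2}$) gets its own $\varepsilon_j$-approximation stored in a separate exact counting structure (Theorem~\ref{thm:simplexrc}); at query time only the frequent colors in $q_c$ are queried, one structure each, and concavity of $x^{2/3}$ gives $\sum_j O((n_j/(\varepsilon|P|))^{2/3})=O(1/(\varepsilon B^{1/3}))=O(F)$ total. The \emph{infrequent} colors' approximations are pooled into a set $X$ of size $O(B^{1/3}/\varepsilon)$, dualized, and covered by a $B^{-2/3}$-cutting (Lemma~\ref{lem:cutting}) of $O(B^{4/3})$ triangles; each triangle stores the precomputed color frequencies of the lines below it together with its $O(F)$-line conflict list, so a query is a point location plus $O(F)$ lookups and an $O(F)$ scan — the $\sqrt{|A|}$-type cost you incur simply never appears, and the total space is $O(F\cdot B^{4/3})=O(B/\varepsilon)\le O(|P|)$. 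If you want to salvage your approach, you would need this kind of precomputation (per-cell frequency vectors in a cutting) for the large-$|P|$ regime rather than an on-line partition-tree query.
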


\subparagraph{The main idea and a summary.}
The main observation is that in 2D, given a point set $P$ and a value $\varepsilon$, there exists
$\varepsilon$-approximations of size $O(\varepsilon^{-4/3})$ and thus if we store them in a data structure
for halfspace range counting queries, the query time will be reduced to $O(\varepsilon^{-2/3})$ i.e., smaller
than $\varepsilon^{-1}$.
Intuitively, this means that the difficult case of the problem is when there a lot of different colors with low
frequency. However, proving the base theorem is still quite non-trivial since we are aiming for very fast query times.
Our main observation is that when there are few points, the size of the $\varepsilon$-approximations  are small
enough that we can afford to spend slightly higher space and lower the query time in return.
By picking the parameters carefully, we arrive at the claimed theorem. 

\subparagraph{Dealing with ``frequent'' colors.}
We define a color to be frequent if it appears at least $\varepsilon B|P|$ times.
Clearly, the number of frequent colors is at most $|P|/(\varepsilon B|P|)= (\varepsilon B)^{-1}$.
Consider a frequent color $j$ and assume it appears $n_j$ times in the point set $P$.
We store an $\varepsilon_j$-approximation $\eps_j$ for this color where $\varepsilon_j = \frac{\varepsilon |P|}{n_j}$.
The size of $\eps_j$ is $\min\left\{ n_j,O\left( \frac{n_j}{\varepsilon |P|} \right)^{4/3} \right\}$ and we store
$\eps_j$ in a data structure in Theorem~\ref{thm:simplexrc}.

Now consider a query and consider dealing with the frequent colors in $q_c$.
For every frequent color $j \in q_c$, we simply count the number of points from $\eps_j$
and use it to get an estimate with the correct additive error.
The query time is $O(\sqrt{|\eps_j|}) = O\left( \frac{n_j}{\varepsilon |P|} \right)^{2/3}$.
The total query time spent on all the frequent colors is thus
\begin{align*}
    \sum_{j \in q_c \mbox{, $j$ is frequent}} O\left( \frac{n_j}{\varepsilon |P|} \right)^{2/3} \le \frac{1}{\varepsilon B} O\left( \frac{\varepsilon|P|  B}{\varepsilon |P|} \right)^{2/3} = O\left( \frac{1}{\varepsilon B^{1/3} }\right)
\end{align*}
where the inequality follows from the observation that $\sum {n_j} \le |P|$ and 
since the exponent $2/3$ is less than one, the function $x^{2/3}$ is a concave function
and thus the expression on the left is maximized when all the $n_j$'s are equal.
Since there are at most $(\varepsilon B)^{-1}$ colors the maximum is when 
each $n_j =|P|/ (\varepsilon B)^{-1} = \varepsilon B |P|$.
As a result, we can deal with the frequent colors within our claimed query time. 
The space is also clearly linear. 

\subparagraph{Dealing with infrequent colors.}
This is the more tricky part of the data structure. 
Consider an infrequent color $j$ that appears $n_j$ times. 
As before, we build an $\varepsilon_j$-approximation $\eps_j$ for this color where $\varepsilon_j = \frac{\varepsilon |P|}{n_j}$
and the size of $\eps_j$ is $O\left( \frac{n_j}{\varepsilon |P|} \right)^{4/3}$.
Let $X$ be the collection of $\eps_j$ for all the infrequent colors $j$.
We can bound the size of $X$ asymptotically  by
\begin{align*}
    \sum_{\mbox{$j$ is infrequent}} \left( \frac{n_j}{\varepsilon |P|} \right)^{4/3} \le \frac{1}{\varepsilon B} \left( \frac{\varepsilon B |P|}{\varepsilon |P|} \right)^{4/3} = \frac{B^{1/3}}{\varepsilon}
\end{align*}
where the inequality follows from Jensen's inequality, i.e.,  that the function $x^{4/3}$ is a convex function that thus the expression on the left is maximized when 
each $n_j$ is either maximized or minimized;
however, as each infrequent color appears at most $\varepsilon B |P|$ times, the maximum is achieved when $\frac{1}{\varepsilon B}$ of the $n_j$'s are set to
$\varepsilon B |P|$ and the rest are set to 0.

We then transform $X$ into the dual space using point-line duality.
We build a $B^{-2/3}$-cutting $Z$ in the dual space. 
By Lemma~\ref{lem:cutting}, $Z$ has $O(B^{4/3})$ triangles that cover the entire plane where 
each triangle is intersected by $O(X/B^{2/3}) = O( 1/(\varepsilon B^{1/3}))$ lines. 
For each triangle, we store the number of lines of each color that passes below it ($F =  O( 1/(\varepsilon B^{1/3}))$ frequencies), 
as well as the set of lines intersecting it ($ O( 1/(\varepsilon B^{1/3}))$ lines).
In total we will asymptotically use 
\[
    \frac{1}{\varepsilon B^{1/3}}\cdot B^{4/3} = \frac{B}{\varepsilon} \le |P|
\]
space.
To bound the query time, consider a query point $q$ which is dual to the query halfspace.
Now observe that after locating the triangle in the cutting that contains $q$, we simply need to 
look at the stored frequencies in the triangle, as well as the set of lines that intersect the triangle.
In total this will take $O(1/(\varepsilon B^{1/3}))$ time. 

\subsubsection{A Data Structure for 2D AHHS Queries}
In our data structure for 2D queries, we use most of the ingredients that we developed for 3D.
In particular, we can use $O(n)$ space such that for a given halfplane $h$, we can find a 
list of $O(\varepsilon_0^{-1})$ candidate colors in $O(\log n + \varepsilon_0^{-1})$ time. 
Similarly, we also build a level 0 shallow cutting $\C_0$ that allows us answer the query
when $h$ contains at most $\varepsilon_0^{-1}$ points. 
However, we only build the first lower level shallow cutting $\C_1$ from Subsection~\ref{sec:3dsol}.
Consequently, if $h$ contains up to $\varepsilon_0^{-1} w^\beta$ points, we can answer the query 
in optimal running time. 
Handling the remaining queries is where we deviate from the 3D data structure. 

We now build a different hierarchy of shallow cuttings.
Let $g_0 = \varepsilon_0^{-1}w^\beta$. 
We build a series of shallow cuttings using parameters $g_0, g_1, \cdots$ that will be
determined shortly. 
In particular, we build a $g_i$-shallow cutting $\C'_i$ until $g_i$ exceeds $\varepsilon_0^{-3}$;
we then switch back to the 3D solution and as argued there, the total space of this part of
the structure will be $O(n)$.
For each cell $\cell$ in $\C'_i$, we store the conflict list of $\cell$ in the base structure of Theorem~\ref{thm:base2d},
with the following parameters:
we have $|P| = \Theta(g_i)$, we want to
set $\varepsilon$ such that the query time equals $O(1/\varepsilon_0)$ and thus 
we must satisfy 
\begin{align}
  \frac{1}{\varepsilon  B^{1/3}} \le \frac{1}{\varepsilon_0} \Leftrightarrow {\varepsilon  B^{1/3}} \ge \varepsilon_0 \Leftrightarrow {\varepsilon  \left( \varepsilon |P| \right)^{1/3}} \ge {\varepsilon_0} \Leftarrow {\varepsilon \left( \varepsilon g_i \right)^{1/3}} \ge {\varepsilon_0}\label{eq:qtime}
\end{align}
Thus, setting our parameters in way to satisfy Inequality~\ref{eq:qtime} makes sure that the query time 
of the base structure we are using is $O(\varepsilon_0^{-1})$.

To guarantee the approximation factor, observe that we know that the query is outside the shallow cutting $\C'_{i-1}$ which implies
there are at least $g_{i-1}$ points in the query. 
Consequently, if we can ensure the following, then our approximation factor is also as desired:
\begin{align}
  \varepsilon|P| \le \varepsilon_0 g_{i-1} \Leftarrow \varepsilon \Theta(g_i) \le \varepsilon_0 g_{i-1}. \label{eq:approx}
\end{align}
Inequalities~\ref{eq:qtime} and~\ref{eq:approx} give us a recursion for $g_i$.
To simplify the exposition, we can simply rescale the constant $\varepsilon_0$ 
to get rid of the constant in the $\Theta(\cdot)$ notation in Inequality~\ref{eq:approx} and 
turn the inequality into an equality. 
This yields $\varepsilon = \frac{\varepsilon_0 g_{i-1}}{g_i}$ which in turn yields 
\begin{align}
  g_i  \le g_{i-1}  \left( \varepsilon_0 g_{i-1} \right)^{\frac 13}\quad \mbox{ and recall that we have } g_0 = \frac{w^\beta}{\varepsilon_0}.\label{eq:eps}
\end{align}
Now a simple induction yields that
\begin{align}
  g_i = \frac{1}{\varepsilon_0} \cdot \left( w^\beta \right)^{(4/3)^{i}}\label{eq:sol}.
\end{align}
Recall that as soon as $g_i$ reaches $\varepsilon_0^{-3}$, we switch to the 3D solution and from this
point on, the remaining levels will consume $O(n)$ space. 
As a result, the space complexity of the data structure is $O(n j)$ where $j$ is the smallest index
such that $g_j \ge \varepsilon_0^{-3}$.
From Equation~\ref{eq:sol}, it is clear that $j = O(\log\log_w \varepsilon_0^{-1})$ and thus
the space bound is $O(n\log\log_w \varepsilon_0^{-1})$.

\end{appendices}

\end{document}